
\documentclass{CSML}

\def\dOi{11(4:2)2015}
\lmcsheading%
{\dOi}
{1--38}
{}
{}
{Sep.~30, 2014}
{Oct.~15, 2015}
{}

\ACMCCS{[{\bf Theory of computation}]: Models of computation; Logic;
  Formal languages and automata theory}
 \subjclass{F.1.1 Models of Computation, F.4.1 Mathematical Logic, F.4.3 Formal Languages}

\usepackage{multicol}
\usepackage{synttree}
\usepackage{hyperref}

\graphicspath{{./}}

\title[Logic and branching automata]{Logic and branching automata\rsuper*}

\author[N.~Bedon]{Nicolas Bedon}
\address{LITIS (EA 4108) -- Universit\'e de Rouen -- France}
\email{Nicolas.Bedon@univ-rouen.fr}

\titlecomment{{\lsuper*}This paper is a long version, with full proofs, of~\cite{DBLP:conf/mfcs/Bedon13}}
\keywords{N-free posets, series-parallel posets, sp-rational languages, automata, commutative monoids, monadic second-order logic, Presburger logic.}

\usepackage{diagrams}
\usepackage{epic,eepic,latexsym,amsmath,amssymb,xspace,ifthen}
\usepackage[pst-pdf=md5]{gastex}

\newcommand{\true}{\text{true}}
\newcommand{\false}{\text{false}}

\newcommand{\Pf}[1]{\ensuremath{{\mathcal P}(#1)}}

\newcommand{\quotientparallel}[2]{\ensuremath{#1\backslash\kern-3pt\backslash#2}}

\begin{document}

\begin{abstract}
In this paper we study the logical aspects of \emph{branching automata}, as defined by Lodaya and Weil.
We first prove that the class of languages of finite N-free posets recognized by branching automata is closed under complementation.
Then we define a logic, named P-MSO as it is a extension of monadic second-order logic with Presburger arithmetic, and show that it is precisely as expressive as branching automata.
As a consequence of the effectiveness of the construction of one formalism from the other, the P-MSO theory of the class of all finite N-free posets is decidable. 
\end{abstract}

\maketitle

\section{Introduction}
\label{sec:intro}

Sequential programs can naturally be modeled with Kleene automata, or equivalently with rational expressions, finite monoids, or monadic second-order (MSO) logic.
The algebraic approach of automata provides an huge toolbox for the study of properties of programs, and has been widely used as a base for a lot of algorithms that manipulate logic formulas. The links between Kleene automata, rational expressions, finite monoids and MSO have important consequences in a lot of domains of computer science and mathematics, some of them are concretely applied as for example in program verification, others are more theoretical as for example in set theory.

Introducing commutativity allows access to models of programs with permutation of instructions, or to concurrent programming. Among the formal tools for the study of commutativity in programs, let us mention for example Mazurkiewicz's traces, integer vector automata or commutative monoids. In this paper we focus on the notion of \emph{branching automata} introduced by Lodaya and Weil~\cite{lodaya98kleene,LW98:Algebra,LW00:sp,lodaya01kleene}. Branching automata are an extension of Kleene automata with particular transitions that naturally model parallelism. Traces of paths in branching automata are not (totally ordered) words as in Kleene automata, but partially ordered sets (posets) of letters, which are structured as traces of programs using the fork-join primitive for concurrency. Those particular posets, called \emph{N-free}, are widely used in the study of concurrency. The fork-join primitive splits an execution flow $f$ into $f_1,\dots,f_n$ concurrent execution flows and joins $f_1,\dots,f_n$ before it continues. Divide-and-conquer concurrent programming naturally uses this fork-join principle. Lodaya and Weil generalized several important results of the theory of Kleene automata to branching automata, for example, a notion of rational expression with the same expressivity as branching automata. They also investigated the question of the algebraic counterpart of branching automata: the sp-algebras are sets equipped with two different associative products, one of them being also commutative. Contrary to the theory of Kleene automata, branching automata do not coincide any more with finite sp-algebras. 

An interesting particular case is the bounded-width rational languages~\cite{LW00:sp}, where the cardinality of the antichains of the posets of languages are bounded by an integer $n$. They correspond to fork-join models of concurrent programs with $n$ as the upper bound of the number of execution flows ($n$ is the number of physical processors). Bounded-width rational languages have a natural characterization in rational expressions, branching automata, and sp-algebras. Taking into account those characterizations, the expressiveness of branching automata corresponds exactly to the finite sp-algebras. Furthermore, Kuske~\cite{Kus00:SPInf} proved that in this case, branching automata coincide also with monadic second-order logic, as it is the case for the rational languages of finite words. As in the general case monadic-second order logic is less expressive than branching automata, the question of an equivalent logic was left open.

In this paper we investigate the question of defining a logic equivalent to branching automata in the general case. 

This paper contains two results regarding branching automata:
\begin{itemize}
\item we prove that the class of languages defined by branching automata is closed under complementation;
\item we define a logic, named \emph{P-MSO} as it is basically monadic second-order logic enriched with Presburger arithmetic, that is exactly as expressive as branching automata.
\end{itemize}

\noindent The (effective) proof of the first result essentially relies on the closure under complementation of rational languages of a finitely generated commutative monoid (Theorem~\ref{th:schutz}, by Sch\"utzenberger and Eilenberg).
The proof of the second result relies on effective constructions from one formalism to the other.
As a consequence, the P-MSO theory of the class of finite N-free posets is decidable.

The paper is organized as follows. Section~\ref{sec:not} recalls basic definitions on posets. Section~\ref{sec:languages} is devoted to branching automata, rational expressions and sp-algebras. The complementation of rational languages is discussed in Section~\ref{sec:complementation}. Section~\ref{sec:PMSO} is devoted to the logical approach of branching automata. Finally, we present an alternative definition for branching automata in Section~\ref{sec:PBranching}.

\section{Notation and basic definitions}
\label{sec:not}

Let $E$ be a set. We denote by ${\mathcal P}(E)$, ${\mathcal P}^+(E)$ and ${\mathcal M}^{>1}(E)$ respectively the set of subsets of $E$, the set of non-empty subsets of $E$ and the set of multi-subsets of $E$ with at least two elements. 
For any integer $n$, the set $\{1,\dots,n\}$ is denoted $[n]$ and the group of permutations of $[n]$ by $S_n$.
The cardinality of $E$ is denoted by $\vert E\vert$.
We use the same notation for sets and multi-sets. We sometimes denote by $e^k$ the multiplicity $k$ of an element $e$ of a multi-set. 

A \emph{poset} $(P,<_P)$ is composed of a set $P$ equipped with a partial ordering $<_P$.
In this paper we consider only finite posets.
For simplicity, by \emph{poset} we always mean \emph{finite} poset.
A \emph{chain} of length $n$ in $P$ is a sequence $p_1<_P\dots<_Pp_n$ of elements of $P$.
An \emph{antichain} $E$ in $P$ is a set of elements of $P$ mutually incomparable for $<_P$.
The \emph{width} of $P$ is the size of a maximal antichain of $P$.
An \emph{alphabet} is a nonempty finite set whose elements are called \emph{letters}.
A poset $(P,<_P,\rho)$ \emph{labeled} by $A$ is composed of a poset $(P,<_P)$ and a map $\rho:P\rightarrow A$ which associates a letter $A$ with any element of $P$. 
Observe that the posets of width 1 labeled by $A$ correspond precisely to the usual finite words: finite totally ordered sequences of letters.
Throughout this paper, we use labeled posets as a generalization of words.
In order to lighten the notation we write $P$ for $(P,<_P,\rho)$ when no confusion is possible.
The unique empty poset is denoted by $\epsilon$.

Let $(P,<_P,\rho_P)$ and $(Q,<_Q,\rho_Q)$ be two disjoint posets labeled respectively by the alphabets $A$ and $A'$.
The \emph{parallel product} of $P$ and $Q$, denoted $P\parallel Q$, is the set $P\cup Q$ equipped with the orderings on $P$ and $Q$ such that the elements of $P$ and $Q$ are incomparable, and labeled by $A\cup A'$ by preservation of the labels from $P$ and $Q$. 
It is defined as $(P\cup Q,<,\rho)$ where $x< y$ if and only if:
\begin{itemize}
\item $x,y\in P$ and $x<_P y$ or
\item $x,y\in Q$ and $x<_Q y$
\end{itemize}
and $\rho(x)=\rho_P(x)$ if $x\in P$, $\rho(x)=\rho_{Q}(x)$ if $x\in Q$.

The \emph{sequential product} of $P$ and $Q$, denoted by $P\cdot Q$ or $PQ$ for simplicity, is the poset $(P\cup Q,<,\rho)$ labeled by $A\cup A'$, such that $x < y$ if and only if one of the following conditions is true:
\begin{itemize}
\item $x\in P$, $y\in P$ and $x<_P y$;
\item $x\in Q$, $y\in Q$ and $x<_Q y$;
\item $x\in P$ and $y\in Q$
\end{itemize}
and $\rho(x)=\rho_P(x)$ if $x\in P$, $\rho(x)=\rho_{Q}(x)$ if $x\in Q$.

Observe that the parallel product is an associative and commutative operation on posets, whereas the sequential product does not commute (but is associative). 
The parallel and sequential products can be generalized to finite sequences of posets. Let $(P_i)_{i\leq n}$ be a sequence of posets. We denote by $\prod_{i\leq n}P_i=P_0\cdot\dots\cdot P_n$ and $\parallel_{i\leq n}P_i=P_0\parallel\dots\parallel P_n$.

The class of \emph{series-parallel} posets, denoted $SP$, is defined as the smallest set containing the posets with zero and one element and closed under finite parallel and sequential product. It is well known that this class corresponds precisely to the class of N-free posets~\cite{Val78,VTL82:SPDigraphs}, in which the exact ordering relation between any four elements $x_1, x_2, x_3, x_4$ cannot be $x_1<x_2$, $x_3<x_2$ and $x_3<x_4$. The class of series-parallel posets labeled by an alphabet $A$ is denoted $SP(A)$. We write $SP^+$ for $SP-\{\epsilon\}$ and $SP^+(A)$ for $SP^+(A)-\{\epsilon\}$.

A poset $P$ has a \emph{sequential (resp. parallel) factorization} if $P=P_1\cdot P_2$ (resp. $P=P_1\parallel P_2$) for some nonempty posets $P_1$ and $P_2$. A sequential factorization $P=P_1\cdot\dots\cdot P_n$ is \emph{maximal} if each $P_i$, $i\in[n]$, has no sequential factorization. The definition of the notion of \emph{maximal parallel factorization} is similar. Posets having a parallel factorization are called \emph{parallel posets}. The \emph{sequential posets} are those of cardinality 1 and those having a sequential factorization.

\section{Rational languages, automata and recognizability}
\label{sec:languages}

A \emph{language} of a set $X$ is a subset of $X$.
Let $A$ be an alphabet.
The sequential and parallel product of labeled posets can naturally be extended to languages of $SP(A)$.
If $L_1, L_2\subseteq SP(A)$, then $L_1 \cdot L_2=\{ P_1 \cdot P_2 : P_1\in L_1 , P_2\in L_2 \}$ and $L_1 \parallel L_2= \{ P_1 \parallel P_2 : P_1\in L_1 , P_2\in L_2 \}$.

\subsection{Rational languages}

Let $A$ and $B$ be two alphabets and let $P\in SP(A)$, $L\subseteq SP(B)$ and $ \xi \in A$.
We define the language $L\circ_\xi P\subseteq SP(A\cup B)$ by substituting non-uniformly in $P$ each element labeled by $\xi$ by a labeled poset of $L$.
This substitution $L \circ_\xi$ is the homomorphism from $(SP(A),\parallel, \cdot)$ into the power-set algebra $({\mathcal P}(SP(A \cup B)),\parallel, \cdot)$ with $a \mapsto \{a\}$ for all $a\in A$, $a\not=\xi$, and $\xi \mapsto L$. 
It can be easily extended from labeled posets to languages of posets.
Using this, we define the substitution and the iterated substitution on languages. By the way the usual Kleene rational operations~\cite{Kle56} are recalled.
Let $L$ and $L'$ be languages of $SP(A)$:
\begin{align*}
  L \circ_\xi L'& = \mathop{\cup}\limits_{ P\in L'}  L \circ_\xi P \\ 
  L^{*\xi} & =\mathop{\cup}\limits_{i\in{\mathbb{N}}} L^{i\xi}\text{ with }L^{0\xi}=\{\xi\}\text{ and }L^{(i+1)\xi}=(\mathop{\cup}\limits_{j\leq i}L^{j\xi})\circ_\xi L \\ 
  L^* & = \{\prod_{i<n}P_i : n\in {\mathbb{N}}, P_i\in L\}\hskip2cm L^+ = \{\prod_{i<n}P_i : 0<n\in {\mathbb{N}}, P_i\in L\}\\\ 
\end{align*}
A language $L \subseteq SP^+(A)$ is
\emph{rational} if it is empty, or obtained from the letters of the alphabet~$A$ using usual rational operators~: finite union $\cup$,
finite concatenation $\cdot$, and finite iteration $^+$, 
and using also the finite parallel product $\parallel$, substitution $\circ_{\xi}$ and iterated substitution $^{*\xi}$, provided that in $L^{*\xi}$ any element labeled by $\xi$ in a labeled poset $P\in L$ is incomparable with another element of $P$. This latter condition excludes from the rational languages those of the form $(a\xi b)^{*\xi}=\{a^n\xi b^n : n\in\mathbb{N}\}$, for example, which are known to be not Kleene rational. Observe also that the usual Kleene rational languages are a particular case of the rational languages defined above, in which the operators $\parallel$, $\circ_{\xi}$ and $^{*\xi}$ are not allowed.

\begin{exa}
  \label{ex:abc}
  Let $A=\{a,b,c\}$ and $L=c\circ_\xi(a\parallel(b\xi))^{*\xi}$. Then $L$ is the smallest language containing $c$ and such that if $x\in L$, then $a\parallel(bx)\in L$:
  $$L=\{c,a\parallel(bc),a\parallel(b(a\parallel(bc))),\dots\}$$
\end{exa}

Let $L$ be a language where the letter $\xi$ is not used.
In order to lighten the notation we use the following abbreviation:
$$L^\circledast=\{\epsilon\}\circ_\xi(L\parallel\xi)^{*\xi}= \{\parallel_{i<n}P_i : n\in {\mathbb{N}}, P_i\in L\}\hskip2cm L^\oplus=L^\circledast-\{\epsilon\}$$
$L^*$ and $L^+$ are the sequential iterations of $L$ whereas $L^\circledast$ and $L^\oplus$ are its parallel iterations.

\subsection{Branching automata}
\label{subsec:automata}

Branching automata are a generalization of usual Kleene automata.
They were introduced by Lodaya and Weil~\cite{lodaya98kleene,LW98:Algebra,LW00:sp}.

A \emph{branching automaton} (or just \emph{automaton} for short) is a tuple $\mathcal{A}=(Q, A, E, I, F)$ where $Q$ is a finite set of
states, $A$ is an alphabet, $I\subseteq Q$ is the set of \emph{initial states},
$F\subseteq Q$ the set of \emph{final states}, and $E$ is the set of \emph{transitions} of ${\mathcal A}$.
The set of transitions of $E$ is partitioned into $E=(E_\text{seq}, E_\text{fork}, E_\text{join})$, according to the different kinds of transitions:
\begin{itemize}
\item $E_\text{seq}\subseteq (Q\times A\times Q)$ contains the \emph{sequential} transitions, which are usual transitions of Kleene automata;
\item $E_\text{fork}\subseteq Q\times {\mathcal M}^{>1}(Q)$ and $E_\text{join}\subseteq{\mathcal M}^{>1}(Q)\times Q$ are respectively the sets of \emph{fork} and \emph{join} transitions.
\end{itemize}
Sequential transitions $(p,a,q) \in Q\times A\times Q$  are sometimes denoted by $p \mathop{\to}\limits^{a}q$.
The \emph{arity} of a fork (resp. join) transition $(q,R)\in Q\times {\mathcal M}^{>1}(Q)$ (resp. $(R,q)\in {\mathcal M}^{>1}(Q)\times Q$) is $\vert R\vert$. 

We now turn to the definition of paths in automata. The definition we use in this paper is different, but equivalent to, the one of Lodaya and Weil~\cite{lodaya98kleene,LW98:Algebra,LW00:sp,lodaya01kleene}.
Paths in automata are posets labeled by transitions.
A \emph{path} $\gamma$ from a state $p$ to a state $q$ is either the empty poset (in this case $p=q$), or a non-empty poset labeled by transitions, with a unique minimum and a unique maximum element. The minimum element of $\gamma$ is mapped either to a sequential transition of the form $(p,a,r)$ for some $a\in A$ and $r\in Q$ or to a fork transition of the form $(p,R)$ for some $R\in{\mathcal M}^{>1}(Q)$. Symmetrically, the maximum element  of $\gamma$ is mapped either to a sequential transition of the form $(r',a,q)$ for some $a\in A$ and $r'\in Q$ or to a join transition of the form $(R',q)$ for some $R'\in{\mathcal M}^{>1}(Q)$. The states $p$ and $q$ are respectively called \emph{source} (or \emph{origin}) and \emph{destination} of $\gamma$. Two paths $\gamma$ and $\gamma'$ are \emph{consecutive} if the destination of $\gamma$ is also the source of $\gamma'$.
Formally, the paths $\gamma$ labeled by $P\in SP^+(A)$ in $\mathcal{A}$ are defined by induction on the structure of~$P$:
\begin{itemize}
\item for any transition $t=(p,a,q)$, then $t$ is a path from $p$ to $q$, labeled by $a$;
\item for any finite set of paths $\{\gamma_0,\dots,\gamma_k\}$ ($k\geq 1$) respectively labeled by $P_0,\dots,P_k\in SP^+(A)$, from $p_0,\dots,p_k$ to $q_0,\dots, q_k$, if $t=(p,\{p_0,\dots,p_k\})$ is a fork transition and $t'=(\{q_0,\dots,q_k\},q)$ a join transition, then $\gamma=t(\parallel_{j\leq k} \gamma_j)t'$ is a path from $p$ to $q$ and labeled by $\parallel_{j\leq k} P_j$;
\item for any non-empty finite sequence $\gamma_0,\dots,\gamma_k$ of consecutive paths respectively labeled by $P_0,\dots,P_k$, then $\prod_{j<k+1}\gamma_j$ is a path labeled by $\prod_{j<k+1}P_j$ from the source of  $\gamma_{0}$ to the destination of $\gamma_{k}$;
\end{itemize}
Observe that non-empty paths are labeled posets of two different forms: $t$ or $tPt'$ for some transitions $t,t'$ and some labeled poset $P$.
In an automaton $\mathcal{A}$, the existence of a path $\gamma$ from $p$ to $q$ labeled by $P\in SP(A)$ is denoted by $\gamma : p \mathop{\Longrightarrow}\limits_{\mathcal{A}}^{P} q$. 
A state $s$ is a \emph{sink} if $s$ is the destination of any path originating in~$s$.

A labeled poset is \emph{accepted} by an automaton if it is the nonempty label of a path, called \emph{successful}, leading from an initial state to a final state. The language $L(\mathcal{A})$ is the set of labeled posets accepted by the automaton $\mathcal{A}$. A language $L$ is \emph{regular} if there exists an automaton $\mathcal{A}$ such that $L=L(\mathcal{A})$.

\begin{exa}
  \label{ex:oneAAtLeast}
  Figure~\ref{fig:ontAAtLeast} represents an automaton on the alphabet $A=\{a,b\}$ that accepts $P\in SP^+(A)$ iff $P$ contains at least one $a$.
  It has one initial state $1$ and one final state $2$, two sequential transitions labeled by $a$ from $1$ to $2$ and from $2$ to $2$, two sequential transitions labeled by $b$ from $1$ to $1$ and from $2$ to $2$, two fork transitions $(1,\{1,1\})$ and $(2,\{2,2\})$, and three join transitions $(\{1,1\},1)$, $(\{2,2\},2)$ and $(\{1,2\},2)$. 
  \begin{figure}[htbp]
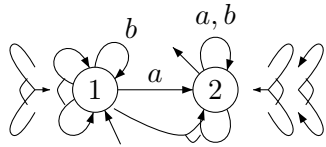

  \begin{center}
    \gasset{Nw=6,Nh=6,loopdiam=4}
    \begin{gpicture}
      \node(1)(0,0){$1$}\imark[iangle=295](1)
      \node(2)(16,0){$2$}\fmark[fangle=135](2)
      \drawedge(1,2){$a$}
      \drawloop[loopangle=60](1){$b$}
      \drawloop[loopangle=90](2){$a,b$}

      \drawloop[loopangle=135](1){}
      {\gasset{AHnb=0,ATnb=1}
      \drawloop[loopangle=235](1){}}
      \drawcurve[AHnb=0](-4,1.5)(-5,0)(-4,-2)

      \drawcurve[AHnb=0](-8,3)(-11,6)(-8,0)
      \drawcurve[AHnb=0](-8,-3)(-11,-6)(-8,0)
      \drawcurve[nb=1](-8,0)(-7,0)(-6,0)
      \drawcurve[AHnb=0](-9,1.5)(-10,0)(-9,-2)

      \drawcurve[AHnb=0](23,3)(26,6)(23,0)
      \drawcurve[AHnb=0](23,-3)(26,-6)(23,0)
      \drawcurve[nb=1](23,0)(22,0)(21,0)
      \drawcurve[AHnb=0](24,1.5)(25,0)(24,-2)

      \drawcurve[AHnb=1](27,0)(30,6)(27,3)
      \drawcurve[AHnb=1](27,0)(30,-6)(27,-3)
      \drawcurve[AHnb=0](28,1.5)(29,0)(28,-2)

      {\gasset{AHnb=1,ATnb=0}
      \drawloop[loopangle=270](2){}}
      \drawcurve[AHnb=0](2.5,-2.5)(12,-6)(14.3,-4)
      \drawcurve[AHnb=0](12,-6)(13,-7)(14,-6)
    \end{gpicture}
  \caption{An automaton that accepts $P\in SP^+(A)$ iff $P$ contains at least one $a$.}
  \label{fig:ontAAtLeast}
  \end{center}
\end{figure}
\end{exa}

\begin{thm}[Lodaya and Weil~\cite{lodaya98kleene}]
  \label{th:KleeneBranching}
  Let $A$ be an alphabet, and $L\subseteq SP^+(A)$.
  Then $L$ is regular if and only if it is rational.
\end{thm}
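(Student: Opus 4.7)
The plan is to prove the two implications separately, following the pattern of the classical Kleene theorem but adapted to the richer path structure of branching automata.

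For the implication from rational to regular, I would proceed by structural induction on a rational expression. Base cases (the empty language and singletons $\{a\}$ for $a \in A$) are realized by two-state automata. Union is handled by disjoint union. For sequential concatenation and $^+$, I use the standard constructions with fresh initial/final states and sequential transitions, exactly as in Kleene's theorem. For parallel product $L_1 \parallel L_2$, given automata $\mathcal{A}_1, \mathcal{A}_2$ for each $L_i$, I introduce fresh states $p, q$ together with a fork transition from $p$ to the initial states of $\mathcal{A}_1$ and $\mathcal{A}_2$ and a join transition from their final states to $q$. For substitution $L \circ_\xi L'$, I take an automaton for $L'$ and replace each transition of the form $(p,\xi,q)$ by an inserted copy of an automaton for $L$ with its initial and final states merged with $p$ and $q$ respectively. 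The iterated substitution $L^{*\xi}$ is handled similarly, but the inserted copies are themselves allowed to contain further $\xi$-labeled transitions feeding back; the side condition that $\xi$-elements must be incomparable with another element ensures that such iterated substitutions always appear inside a parallel context, so they can be realized by a fork-join wrapping without producing forbidden patterns.

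For the implication from regular to rational, I would generalize the state-elimination method. For any branching automaton $\mathcal{A}=(Q,A,E,I,F)$ and any set $S \subseteq Q$ of allowed intermediate states, define $L^S_{p,q}$ to be the language of labels of paths from $p$ to $q$ whose internal states all lie in $S$, and show by induction on $|S|$ that each $L^S_{p,q}$ is rational. The recursive decomposition uses the fact, observed in the excerpt, that every non-empty path is either a single sequential transition or has the form $t\, P\, t'$ with $t$ a fork, $t'$ a join, and $P$ a parallel product of smaller paths. So when computing $L^{S \cup \{r\}}_{p,q}$ from the $L^S_{*,*}$, I must account for two ways that $r$ can be used: (i) as an intermediate sequential state, handled by the usual concatenation and $^+$ operators, and (ii) as a branching point in some fork or join incident to $r$. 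For case (ii) I introduce auxiliary letters $\xi_1,\dots,\xi_k$ to stand for the possible parallel branches and assemble the resulting parallel compositions using the operators $\parallel$, $\circ_{\xi_i}$, and $^{*\xi_i}$; the incomparability side condition on $^{*\xi}$ is automatically satisfied because the $\xi_i$ stand for branches of a common fork.

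The main obstacle will be the second direction. Unlike the word case, paths are posets and can branch into arbitrarily many parallel threads through a single fork/join pair, so the elementary rational operators $\cup,\cdot,^+,\parallel$ are not sufficient on their own: one genuinely needs the substitution operators to capture the unbounded, non-uniform combinations of sub-languages that can appear between a matched fork and join. Careful bookkeeping is required to ensure that, when one substitutes into $\xi$-positions, each fork is matched with its intended join and the inductive hypothesis on $|S|$ applies to the sub-languages of branches. Once this encoding is set up cleanly, the base case $S=\emptyset$ reduces to a finite union of single transitions or of fork-join pairs with empty interior, both of which are evidently rational, and the theorem follows by taking $S=Q$ and forming $\bigcup_{p\in I,\,q\in F} L^Q_{p,q}$.
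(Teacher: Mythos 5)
The paper itself does not prove this theorem: it is imported from Lodaya and Weil, and the only trace of the argument in the present text is the later pointer to ``the McNaughton--Yamada-like construction of a rational expression from an automaton'' in Section~4.2 of that reference. Your outline follows essentially the same route as the cited original — structural induction on rational expressions for one direction, and an equation-solving/state-elimination scheme using $\circ_\xi$ and $^{*\xi}$ for the other — so you are reconstructing the known proof rather than diverging from it. However, your sketch stops short at exactly the two points where the real work lies.

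For rational $\Rightarrow$ regular, the crux is $L^{*\xi}$. A feedback construction is not correct merely because it is syntactically available: a purely sequential feedback loop cannot enforce the matching of the ``descent'' and ``return'' parts (this is precisely why $(a\xi b)^{*\xi}$ is excluded from the rational operations), so the whole burden is to show that, when every $\xi$-element of a poset of $L$ is incomparable to some other element, the feedback can be placed strictly inside a fork--join branch and that the inductive definition of paths (each fork matched with its join around a parallel block) forces correct nesting; moreover, a $\xi$-labelled transition of the automaton for $L$ may a priori also be usable on paths where it does not sit in parallel position, so the automaton must first be normalized to rule this out. None of this is ``automatic'' from the side condition, and it is the heart of that direction. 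For regular $\Rightarrow$ rational, your state-elimination invariant does not behave as in the word case: the occurrences of an eliminated state $r$ along a path are not linearly ordered — they occur in distinct parallel branches and at arbitrary nesting depth — so the classical decomposition $L_{p,r}(L_{r,r})^{*}L_{r,q}$ is unavailable, and your proposed base case (``fork--join pairs with empty interior'') is vacuous, since every branch between a fork and its matching join is a nonempty path whose endpoint states must themselves be accounted for. What actually works, and what the cited proof does, is to write the finite system of equations with unknowns $L_{p,q}$, in which unknowns appear both in sequential position and, via the fork/join transitions, in parallel position (compare the display in the proof of Proposition~\ref{prop:parallelRational}, extended with sequential concatenation terms in the general case), and to solve it unknown by unknown, using $^{+}$ for sequential recursion and $\circ_\xi$, $^{*\xi}$ for parallel recursion, the arity-$\geq 2$ forks guaranteeing the incomparability proviso. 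Your sketch gestures at this, but the bookkeeping you defer is the proof.
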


\begin{exa}
  \label{ex:anbnAuto}
  On its left side, Figure~\ref{fig:autoAB} represents the automaton $$\mathcal{A}=(\{1,2,3,4,5,6\},\{a,b\},E,\{1\},\{6\})$$ where the set of sequential transitions is $E_\text{seq}=\{(2,a,4),(3,b,5)\}$, the set of fork transitions is $E_\text{fork}=\{(1,\{1,1\}),(1,\{2,3\})\}$ and finally the set of join transitions is $E_\text{join}=\{(\{6,6\},6),(\{4,5\},6)\}$.
  On the right side of the Figure is pictured an accepting path labeled by $a\parallel b\parallel a\parallel b$. 
  Actually, $L(\mathcal{A})=(a\parallel b)^{\oplus}$.
  \begin{figure}[htbp]
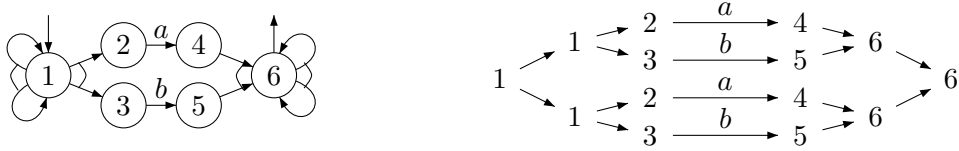

    \begin{center}
      \gasset{Nw=6,Nh=6,loopdiam=4}
      \begin{gpicture}
        \node(1)(-10,-12){$1$}\imark[iangle=90](1)
        \node(2)(0,-8){$2$}
        \node(3)(0,-16){$3$}
        \node(4)(10,-8){$4$}
        \node(5)(10,-16){$5$}
        \node(6)(20,-12){$6$}\fmark[fangle=90](6)
        \drawedge(2,4){$a$}
        \drawedge(3,5){$b$}
        \drawedge(1,2){}
        \drawedge(1,3){}
        \drawloop[loopangle=135](1){}
        {\gasset{AHnb=0,ATnb=1}
          \drawloop[loopangle=235](1){}}
        \drawcurve[AHnb=0](-6,-10.3)(-5,-12)(-6,-13.7)
        \drawcurve[AHnb=0](-14,-10.5)(-15,-12)(-14,-14)
        \drawedge(4,6){}
        \drawedge(5,6){}
        {\gasset{AHnb=0,ATnb=1}\drawloop[loopangle=45](6){}}
        \drawloop[loopangle=-45](6){}
        \drawcurve[AHnb=0](24,-10.3)(25,-12)(24,-13.7)
        \drawcurve[AHnb=0](16,-10.5)(15,-12)(16,-14)
        \node[Nframe=n](7)(50,-12.5){$1$}
        \node[Nframe=n](8)(60,-7.5){$1$}
        \node[Nframe=n](9)(70,-5){$2$}
        \node[Nframe=n](10)(60,-17.5){$1$}
        \drawedge(7,8){}
        \drawedge(7,10){}
        \node[Nframe=n](11)(70,-10){$3$}
        \node[Nframe=n](12)(70,-15){$2$}
        \node[Nframe=n](13)(70,-20){$3$}
        \drawedge(10,12){}
        \drawedge(10,13){}
        \node[Nframe=n](14)(90,-10){$5$}
        \node[Nframe=n](15)(90,-15){$4$}
        \node[Nframe=n](16)(90,-20){$5$}
        \node[Nframe=n](17)(100,-17.5){$6$}
        \drawedge(15,17){}
        \drawedge(16,17){}
        \node[Nframe=n](18)(100,-7.5){$6$}
        \node[Nframe=n](19)(90,-5){$4$}
        \node[Nframe=n](20)(110,-12.5){$6$}
        \drawedge(17,20){}
        \drawedge(18,20){}
        \drawedge(19,18){}
        \drawedge(14,18){}
        \drawedge(8,9){}
        \drawedge(8,11){}
        \drawedge(9,19){$a$}
        \drawedge(13,16){$b$}
        \drawedge(11,14){$b$}
        \drawedge(12,15){$a$}
      \end{gpicture}
      \caption{An automaton $\mathcal{A}$ with $L(\mathcal{A})=(a\parallel b)^{\oplus}$ and an accepting path labeled by $a\parallel b\parallel a\parallel b$.}
      \label{fig:autoAB}
    \end{center}
  \end{figure}
\end{exa}

\begin{exa}
  \label{ex:abcAuto}
  Let $A=\{a,b,c\}$ and $L=c\circ_\xi(a\parallel (b\xi))^{*\xi}$ be the language of Examples~\ref{ex:abc} and~\ref{ex:abcAlgebra}.
  Figure~\ref{fig:autoAXIB} represents an automaton $\mathcal{A}$ such that $L(\mathcal{A})=L$.
  \begin{figure}[htbp]
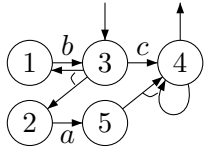

  \begin{center}
    \gasset{Nw=6,Nh=6,loopdiam=4}
    \begin{gpicture}
      \node(2)(0,-8){$1$}
      \node(3)(0,-16){$2$}
      \node(4)(10,-8){$3$}
      \node(5)(20,-8){$4$}
      \node(6)(10,-16){$5$}
      \drawedge(2,4){$b$}
      \drawedge(4,5){$c$}\imark[iangle=90](4)\fmark[fangle=90](5)
      \drawedge[ELside=r](3,6){$a$}
      \drawedge[syo=-1,eyo=-1,ELside=r](4,2){}
      \drawedge(4,3){}
      \drawcurve[AHnb=0](6,-9.3)(5.5,-10)(6,-10.9)

      \drawedge(6,5){}
      {\gasset{AHnb=1,ATnb=0}
      \drawloop[loopangle=260](5){}}
      \drawcurve[AHnb=0](15,-12)(16,-12.5)(17,-12)
   
    \end{gpicture}
  \caption{An automaton $\mathcal{A}$ such that $L(\mathcal{A})=c\circ_\xi(a\parallel (b\xi))^{*\xi}$.}
  \label{fig:autoAXIB}
  \end{center}
\end{figure}
\end{exa}

\begin{prop}[Lodaya and Weil~\cite{LW00:sp}]
  \label{prop:union}
  Let $A$ be an alphabet.
  The class of regular languages of $SP^+(A)$ is closed under finite union.
  Furthermore, if $B$ is an alphabet, $\phi:SP^+(A)\to SP^+(B)$ a morphism of free sp-algebras, and $L$ a regular language of $SP^+(A)$, then $\phi(L)$ is a regular language of $SP^+(B)$.
\end{prop}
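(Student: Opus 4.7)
The plan is to prove the two parts separately, using standard automata-theoretic constructions adapted to the branching setting.

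For closure under finite union, it suffices by induction to handle the case of two languages. Given branching automata $\mathcal{A}_1=(Q_1,A,E_1,I_1,F_1)$ and $\mathcal{A}_2=(Q_2,A,E_2,I_2,F_2)$ recognizing $L_1$ and $L_2$, I first rename states so that $Q_1\cap Q_2=\emptyset$ and then form the disjoint union $\mathcal{A}=(Q_1\cup Q_2,\,A,\,E_1\cup E_2,\,I_1\cup I_2,\,F_1\cup F_2)$. Since no sequential transition has endpoints in different $Q_i$ and since every fork/join transition of $E_i$ uses only states from $Q_i$ (and in particular the multi-sets appearing in fork/join transitions are not mixed between the two sides), a routine induction on the structure of a path in $\mathcal{A}$ shows that every non-empty path lives entirely inside one of the two sub-automata. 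Hence $L(\mathcal{A})=L_1\cup L_2$.

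For the morphism part, recall that a morphism $\phi\colon SP^+(A)\to SP^+(B)$ of free sp-algebras is entirely determined by the values $\phi(a)\in SP^+(B)$ for $a\in A$. The first step is to observe that each singleton $\{\phi(a)\}$ is regular: by induction on the sp-structure of $\phi(a)$, one builds a branching automaton $\mathcal{B}_a=(Q_a,B,E_a,\{\iota_a\},\{\phi_a\})$ with a unique initial state $\iota_a$ and a unique final state $\phi_a$ such that $L(\mathcal{B}_a)=\{\phi(a)\}$; the base case is a two-state, one-transition automaton for a single letter, and the inductive cases glue sub-automata sequentially (identifying the final state of one with the initial state of the next) or in parallel (adding fresh initial and final states together with a fork and a matching join transition).

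Next, given $\mathcal{A}=(Q,A,E,I,F)$ recognizing $L$, I build $\mathcal{A}'$ over alphabet $B$ as follows: I keep $Q$, $I$, $F$, and every fork and join transition of $\mathcal{A}$ unchanged; for each sequential transition $e=(p,a,q)\in E_\text{seq}$, I take a fresh isomorphic copy $\mathcal{B}_a^{(e)}$ of $\mathcal{B}_a$ (with pairwise disjoint internal state sets across different $e$ and disjoint from $Q$) and identify its initial state with $p$ and its final state with $q$. By the freshness of internal states, no fork/join transition of $\mathcal{A}$ can touch the interior of any $\mathcal{B}_a^{(e)}$, so every successful path of $\mathcal{A}'$ uniquely decomposes as a successful path of $\mathcal{A}$ in which each sequential transition $(p,a,q)$ is expanded into a successful path of $\mathcal{B}_a^{(e)}$, which is labeled by $\phi(a)$. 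Since $\phi$ is an sp-morphism, the label of the expanded path is exactly $\phi$ applied to the label of the original path, and so $L(\mathcal{A}')=\phi(L)$.

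The only subtle point, and the place where one has to be careful, is the cleanliness of the substitution: one must rule out ``spurious'' paths that would use states of some $\mathcal{B}_a^{(e)}$ as intermediate points of unrelated fork/join transitions, or that would jump between two different copies $\mathcal{B}_a^{(e)}$ and $\mathcal{B}_a^{(e')}$. This is exactly what the disjoint-internal-states condition guarantees, and it makes the bijective correspondence between paths of $\mathcal{A}$ and paths of $\mathcal{A}'$ (given by contracting each $\mathcal{B}_a^{(e)}$ back to its defining transition) straightforward to verify by induction on path structure.
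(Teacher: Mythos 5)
Your proposal is correct, and it fills in (with standard constructions) what the paper only sketches: the paper's proof of Proposition~\ref{prop:union} is a two-sentence remark attributing closure under union to ``the generalization of the cartesian product of automata to branching automata'' and the morphic-image part to ``an easy generalization of the construction for Kleene rational languages.'' For the union, you use the disjoint union of the two automata rather than a product; since branching automata are nondeterministic with sets of initial and final states, this is perfectly adequate (your observation that every path stays inside one component is the whole point), and it is in fact simpler than the product route the paper hints at, which pays off only when one wants completeness-style arguments. For the morphic image, your gadget substitution --- build a one-entry/one-exit automaton for each singleton $\{\phi(a)\}$ by induction on its sp-structure and splice a fresh copy in place of each sequential transition $(p,a,q)$ --- is exactly the ``generalization of the Kleene construction'' the paper alludes to, and your emphasis on fresh interior states is the right point to guard (together with the fact that, in the inductively built gadgets, the initial state has no incoming and the final state no outgoing transitions, so no path can short-circuit a copy, even when $p=q$). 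An alternative, more abstract route would be to note that $\phi(L)$ is obtained from $L$ by substituting each letter $a$ by the rational singleton $\{\phi(a)\}$ (via $\circ_\xi$ with fresh letters) and invoke Theorem~\ref{th:KleeneBranching}; this avoids re-verifying the path-decomposition bookkeeping, but your direct construction is sound and matches the intended proof.
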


\begin{proof}
  The closure under finite union is a direct consequence of the generalization of the notion of cartesian product of automata to branching automata. The closure under direct image by $\phi$ is also an easy generalization of the construction for Kleene rational languages.
\end{proof}

\begin{prop}[Lodaya and Weil~\cite{lodaya01kleene}]
  \label{prop:findPath}
  Let $p$ and $q$ be two states of a branching automaton $\mathcal{A}$.
  It is decidable, in polynomial time, if there is a path from $p$ to $q$ in $\mathcal{A}$.
\end{prop}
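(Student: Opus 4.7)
The plan is to compute the entire reachability relation $R \subseteq Q \times Q$ by a saturation procedure, and then read off the answer for the requested pair $(p,q)$. From the inductive definition of paths, $R$ is precisely the smallest subset of $Q \times Q$ containing the diagonal (empty paths), containing $(p,q)$ for every sequential transition $(p,a,q) \in E_\text{seq}$, closed under composition (transitivity), and closed under the following fork--join rule: whenever $(p,\{p_0,\dots,p_k\}) \in E_\text{fork}$ and $(\{q_0,\dots,q_k\},q) \in E_\text{join}$ are transitions of the same arity, and there exists a permutation $\sigma \in S_{k+1}$ such that $(p_i, q_{\sigma(i)}) \in R$ for every $i \in \{0,\dots,k\}$, then $(p,q) \in R$.

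The algorithm initializes $R$ with the diagonal and the sequential transitions, then repeatedly applies the two closure rules until no new pair is added. Since $|R| \le |Q|^2$, the outer loop performs at most $|Q|^2$ iterations. Transitive closure is polynomial. For the fork--join rule, I would iterate over each pair consisting of a fork transition $(p,M_1) \in E_\text{fork}$ and a join transition $(M_2,q) \in E_\text{join}$ with $|M_1| = |M_2|$; there are at most $|E_\text{fork}| \cdot |E_\text{join}|$ such pairs, which is polynomial in the size of $\mathcal{A}$.

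The main obstacle is this fork--join step, because one must decide whether the required permutation exists when $M_1$ and $M_2$ are multisets. The key observation is that this is exactly a bipartite perfect matching problem: build the bipartite graph with left vertices $M_1$, right vertices $M_2$ (with multiplicities), and an edge $\{p_i, q_j\}$ whenever $(p_i, q_j) \in R$. A matching-respecting permutation exists if and only if this bipartite graph admits a perfect matching, which is decidable in polynomial time by e.g.\ Hopcroft--Karp. Hence each fork--join check is polynomial, the saturation terminates in polynomial time overall, and testing $(p,q) \in R$ gives the desired decision procedure.
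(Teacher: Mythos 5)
The paper itself offers no proof of this proposition (it is quoted from Lodaya and Weil), so your proposal stands on its own. Your overall plan --- saturating a binary relation on $Q$ under composition and a fork--join rule, with the multiset pairing decided by bipartite perfect matching --- is the right idea and does yield a polynomial bound. But as stated your characterization of $R$ is unsound: you allow the fork--join rule to use pairs of $R$ that are witnessed only by \emph{empty} paths. In the inductive definition of paths, the parallel-composition rule requires every branch $\gamma_i$ to be labeled by a poset of $SP^+(A)$, i.e.\ every branch must be a \emph{nonempty} path; a fork transition immediately followed by a join transition, with nothing on the branches, is not a path (the only base case is a sequential transition). Concretely, take $Q=\{1,2,3,4\}$ with no sequential transitions at all, one fork $(1,\{2,3\})$ and one join $(\{2,3\},4)$. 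There is no path from $1$ to $4$, yet your saturation places $(2,2)$ and $(3,3)$ in $R$ via the diagonal, the identity permutation then satisfies your fork--join rule, and you wrongly conclude $(1,4)\in R$. So the relation you compute can strictly contain the true reachability relation.

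The repair is small but must be made explicit: compute the relation $R^{+}$ of pairs joined by a \emph{nonempty} path, initialized with the sequential transitions only (no diagonal), closed under composition and under the fork--join rule in which all matched pairs are required to lie in $R^{+}$; at the end answer ``yes'' iff $(p,q)\in R^{+}$ or $p=q$ (the empty path). With this distinction both directions of your characterization go through by the expected inductions (on the structure of a path for one inclusion, on the saturation steps for the other), and your complexity analysis --- at most $|Q|^{2}$ rounds, each performing a transitive-closure step and, for every fork/join pair of equal arity, one bipartite perfect-matching test, e.g.\ by Hopcroft--Karp --- is correct and polynomial in the size of $\mathcal{A}$.
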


An automaton is \emph{sequentially separated} if, for all pairs $(p,q)$ of states, all labels of paths from $p$ to $q$ are parallel posets, or all labels of paths from $p$ to $q$ are sequential posets.

The following proposition will be used later in the paper.
\begin{prop}
  \label{prop:seqSepAuto}
  For every automaton $\mathcal{A}$ there is a sequentially separated automaton $\mathcal{B}$ such that $L(\mathcal{A})=L(\mathcal{B})$.
\end{prop}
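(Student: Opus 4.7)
The plan is to construct $\mathcal{B}$ by refining the state space of $\mathcal{A}$ through duplication, so that for every pair of states in $\mathcal{B}$ all path labels share the same top-level type. Each state $q$ of $\mathcal{A}$ will become several copies in $\mathcal{B}$, each tagged with a finite piece of information about the structure of the paths leading into it, chosen to force sequential separation.

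First I would split each state $q$ of $\mathcal{A}$ into two copies $q^s$ and $q^p$, intended as endpoints of paths with sequential and parallel labels respectively. Transitions of $\mathcal{B}$ are set up from those of $\mathcal{A}$ so that sequential transitions always lead into $q^s$-copies (appending a letter to any prefix yields a sequential label) and join transitions lead into $q^p$-copies (a single fork-join has a parallel label at the top level); fork transitions may originate from either copy so as to allow concatenations containing fork-joins as pieces. Initial states of $\mathcal{A}$ become initial $q^s$-copies and $F\times\{s,p\}$ is the final set, so that the accepted language is preserved.

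The main subtlety is that a pair of the form $(p^s,q^p)$ could a priori admit both a direct fork-join from $p^s$ to $q^p$ (with parallel label) and a concatenation from $p^s$ passing through intermediate states and ending with a fork-join at $q^p$ (with sequential label), which would violate separation. To prevent this, I would refine further by also tagging ``inner'' states used for sub-paths of fork-joins, with two disjoint inner families of copies of $Q$ --- one for sub-paths of a fork-join that constitutes the whole path from its outer source (a pure fork-join), and another for sub-paths of a fork-join that is only a piece inside a larger concatenation. Joins leading into a $q^p$-copy are then allowed only from the first inner family, so a concatenation cannot land at a $q^p$-copy; a concatenation ending with a fork-join lands instead at a $q^s$-copy, which is the correct sequential destination. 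The total tag set stays finite.

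Language preservation $L(\mathcal{A})=L(\mathcal{B})$ is then checked by exhibiting a translation of accepting paths in either direction, consistently choosing copies according to the top-level structure of the label, and conversely projecting back by forgetting the tags. Sequential separation follows from a case analysis on the tags of the two endpoints of any pair, using the invariants built into the transition relation. I expect the main obstacle to be the refinement step: choosing exactly the right finite set of tags so that, for every pair of states of $\mathcal{B}$, parallel and sequential path labels really cannot coexist; once that is correctly set up, the language equivalence and the pairwise separation check are routine verifications.
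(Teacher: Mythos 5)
Your opening construction (the two copies $q^s,q^p$, sequential transitions forced into $q^s$-copies, join transitions forced into $q^p$-copies, forks unrestricted, tags on initial/final states irrelevant) is exactly the proof the paper gives for Proposition~\ref{prop:seqSepAuto}, which at that point simply asserts that the resulting automaton is sequentially separated; so the subtlety you raise --- that a pair $(x,q^p)$ may be connected both by a single fork-join (parallel label) and by a concatenation ending with a fork-join (sequential label) --- is aimed precisely at the step the paper leaves unargued, and it is a genuine issue, not a technicality. The gap in your proposal is that the repair, which is where all the difficulty lies, is never actually constructed, and as sketched it cannot work. Whether a fork-join is ``only a piece inside a larger concatenation'' is not a property that transitions can certify: the block opened by a second-family fork at a state $u$ is, on its own, a path from $u$ to the join's target, and its label is parallel; if in addition there is a nonempty path from $u$ back to $u$, that loop followed by the same block connects the same pair of states with a sequential label. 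So the violation you removed from the $q^p$-copies reappears verbatim at whatever states your second-family joins enter, because every sub-path of a path is itself a path. (Your scheme also loses the family information before the join can read it, since by your own rule every sequential transition inside the inner sub-paths lands in a $q^s$-copy, and sequential separation must also be checked for pairs involving the inner copies, which your case analysis does not reach.)

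Moreover, no choice of finitely many tags can close this gap for the definition as literally stated. Take $L=(a\parallel a)^+$ and any automaton $\mathcal{B}$ with $N$ states accepting it. An accepting path for $(a\parallel a)^{N+1}$ decomposes into consecutive atomic pieces, each a sequential transition (letter label) or a fork-join block (parallel label); since the maximal sequential factorization of $(a\parallel a)^{N+1}$ is unique, these pieces are exactly $N+1$ blocks, each labelled $a\parallel a$. By pigeonhole two of them are forked at the same state $u$, and then $u$ and the destination $t$ of the later block are connected both by that block alone (parallel label) and by the nonempty segment from $u$ back to $u$ followed by it (sequential label), so $\mathcal{B}$ is not sequentially separated. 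Hence any argument of your kind --- and indeed the paper's own one-line verification --- must either weaken what ``sequentially separated'' demands (for instance, restrict the requirement to the pairs of states actually occurring inside fork and join transitions, which is closer to what is used later for the indecomposability of the generators $g_i$), or change the statement; simply refining the state space with more copies of $Q$, as you propose, cannot succeed.
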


\begin{proof}
  Let $\mathcal{A}=(Q,A,E,I,F)$.
  The states of $\mathcal{B}$ are the elements of $Q\times\mathbb{B}$.
  In $\mathcal{B}$, there is 
  \begin{itemize}
    \item a sequential transition $((p,b),a,(q,\true))$ iff $(p,a,q)\in E$,
    \item a fork transition $((p_0,b_0),\{(p_1,b_1),\dots,(p_n,b_n)\})$ iff $(p_0,\{p_1,\dots,p_n\})\in E$,
    \item a join transition $(\{(p_1,b_1),\dots,(p_n,b_n)\},(p_0,\false))$ iff $(\{p_1,\dots,p_n\},p_0)\in E$.
  \end{itemize}
  The initial (resp. final states) of $\mathcal{B}$ are those of the form $(p,b)$ with $p\in I$ (resp. $i\in F$).
  Clearly $L(\mathcal{A})=L(\mathcal{B})$ and $\mathcal{B}$ is sequentially separated.
\end{proof}

\subsection{Rationality and semi-linearity}
\label{subsec:ratSemiLin}

A subset $L$ of $A^\circledast$ is \emph{linear} if it has the form $$L=a_1\parallel\dots\parallel a_k\parallel\bigl(\cup_{i\in I}(a_{i,1}\parallel\dots\parallel a_{i,k_i})\bigr)^\circledast$$ where the $a_i$ and $a_{i,j}$ are elements of $A$ and $I$ is a finite set. It is \emph{semi-linear} if it is a finite union of linear sets.

The class of \emph{$\parallel$-rational} languages of $A^\circledast$ is the smallest containing the empty set, $\epsilon$, and closed under finite union, parallel product $\parallel$, and finite parallel iteration $^\circledast$. 
We refer to~\cite{EilSch69} for a proof of the following result:
\begin{thm}
  \label{th:ratSemLin}
  Let $A$ be an alphabet and $L\subseteq A^\circledast$. Then $L$ is $\parallel$-rational if and only if it is semi-linear. Furthermore, the construction from one formalism to the other is effective.
\end{thm}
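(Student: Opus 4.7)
The plan is to exploit the well-known identification of $A^\circledast$ with the free commutative monoid on $A$: sending a multiset over $A$ to its characteristic vector in $\mathbb{N}^A$ turns the parallel product into componentwise addition and $^\circledast$ into submonoid generation. Under this identification, the statement reduces exactly to the Eilenberg--Sch\"utzenberger theorem characterizing the rational subsets of a finitely generated commutative monoid as the semi-linear subsets of $\mathbb{N}^k$, which is what I will sketch below.

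For the easy direction (semi-linear implies $\parallel$-rational) I would show directly that each linear set $a_1\parallel\dots\parallel a_k\parallel\bigl(\cup_{i\in I}(a_{i,1}\parallel\dots\parallel a_{i,k_i})\bigr)^\circledast$ is literally a $\parallel$-rational expression: its constant and its periods are built from letters using finite $\parallel$, and the outer $^\circledast$ operator is allowed. Closing under finite union covers all semi-linear sets, and the translation is visibly effective.

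For the converse I would argue by induction on the structure of the $\parallel$-rational expression defining $L$. The base cases $\emptyset$, $\{\epsilon\}$ and single letters are trivially linear, and finite union is built into the definition of semi-linear. For parallel product, commutativity gives the identity
\[
(u\parallel V^\circledast)\parallel(u'\parallel W^\circledast)\;=\;(u\parallel u')\parallel(V\cup W)^\circledast
\]
on linear sets, which extends to arbitrary semi-linear sets by distributing $\parallel$ over $\cup$. The main obstacle is the parallel iteration clause. Here I would first use the commutativity identity $(L_1\cup L_2)^\circledast=L_1^\circledast\parallel L_2^\circledast$ to reduce the iteration of a semi-linear union $L=\bigcup_i L_i$ to iterating each linear piece separately and then taking the parallel product; and second establish the single-piece identity
\[
(u\parallel V^\circledast)^\circledast\;=\;\{\epsilon\}\;\cup\;\bigl(u\parallel(V\cup\{u\})^\circledast\bigr),
\]
obtained by counting how many copies of $u$ appear in an $n$-fold parallel product of elements of $u\parallel V^\circledast$. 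Both sides are visibly semi-linear, and the closure properties under $\parallel$ and $\cup$ already proved then finish the induction. Effectiveness follows at once from the explicit formulas used at each step.
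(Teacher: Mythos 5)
Your proposal is correct. The paper itself gives no proof of Theorem~\ref{th:ratSemLin}: it only refers to Eilenberg and Sch\"utzenberger~\cite{EilSch69}, so there is no in-paper argument to compare against, and what you write is essentially the standard proof underlying that citation, made self-contained. The steps all check out: under the identification of $A^\circledast$ with the free commutative monoid $\mathbb{N}^{\vert A\vert}$, the easy direction is immediate from the shape of linear sets; for the converse, the identity $(V\cup W)^\circledast=V^\circledast\parallel W^\circledast$ (valid in any commutative monoid) yields both your product rule for linear sets and the reduction $(L_1\cup L_2)^\circledast=L_1^\circledast\parallel L_2^\circledast$, and the single-piece identity $(u\parallel V^\circledast)^\circledast=\{\epsilon\}\cup\bigl(u\parallel(V\cup\{u\})^\circledast\bigr)$ holds because an $n$-fold parallel product ($n\geq 1$) of elements $u\parallel v_i$ consists of one distinguished copy of $u$ together with $n-1$ further copies of $u$ and the $v_i$, while conversely an element $u\parallel u^{\parallel k}\parallel w$ with $w\in V^\circledast$ regroups as $(u\parallel w)$ in parallel with $k$ copies of $u\parallel\epsilon$. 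The degenerate cases (empty constant $u=\epsilon$, or a period equal to $\epsilon$) are harmless, since adjoining $\epsilon$ as a period changes nothing, and every step is an explicit rewriting of expressions, so the effectiveness claim also goes through.
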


\begin{prop}
  \label{prop:parallelRational}
  Let $A$ be an alphabet and $L$ be a rational language of $SP^+(A)$.
  Then $L\subseteq A^\circledast$ if and only if $L$ is $\parallel$-rational.
\end{prop}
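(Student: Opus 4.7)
The implication ``$L$ is $\parallel$-rational $\Rightarrow L\subseteq A^\circledast$'' is immediate from the definition: every $\parallel$-rational language is by construction a subset of $A^\circledast$, since the base languages $\emptyset$, $\{\epsilon\}$ and the implicit singletons lie in $A^\circledast$, and the operations $\cup$, $\parallel$ and $^\circledast$ preserve this inclusion.

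For the converse, assume $L$ is rational and $L\subseteq A^\circledast$. By Theorem~\ref{th:ratSemLin} it suffices to prove that $L$, identified with a subset of $\mathbb{N}^A$ through the Parikh bijection $\Psi:A^\circledast\to\mathbb{N}^A$, is semi-linear. First apply Theorem~\ref{th:KleeneBranching} to get a branching automaton $\mathcal{A}=(Q,A,E,I,F)$ with $L=L(\mathcal{A})$, and use Proposition~\ref{prop:seqSepAuto} to assume it sequentially separated. For each pair $(p,q)\in Q^2$, let $N_{p,q}\subseteq A^\circledast$ denote the set of $A^\circledast$-valued labels of paths from $p$ to $q$, so that $L=\bigl(\bigcup_{i\in I,\,f\in F}N_{i,f}\bigr)\setminus\{\epsilon\}$. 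A non-empty label $P\in A^\circledast$ of such a path arises either from a single sequential transition $(p,a,q)$ with $P=a$, or (when $|P|\geq 2$) from a fork--join-structured path $t(\parallel_j\gamma_j)t'$ whose sub-paths again carry $A^\circledast$-labels. This observation yields the finite system
\begin{align*}
N_{p,q}=C_{p,q}\;\cup\;\bigcup_{(t,t',\sigma)\in D_{p,q}}N_{p_1,q_{\sigma(1)}}\parallel\dots\parallel N_{p_k,q_{\sigma(k)}},
\end{align*}
where $C_{p,q}$ is a finite base (the letters $a$ with $(p,a,q)\in E_\text{seq}$, plus $\epsilon$ when $p=q$) and $D_{p,q}$ enumerates the compatible fork--join transition pairs together with a matching $\sigma$ between the multisets of their arities.

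Pushing this system through $\Psi$ turns $\parallel$ into Minkowski addition in $\mathbb{N}^A$, producing a finite system of polynomial equations over the commutative semi-ring $(\mathcal{P}(\mathbb{N}^A),\cup,+)$. The \emph{main obstacle} is to argue that the least solution of such a commutative polynomial system is componentwise semi-linear; this is in essence Parikh's theorem interpreted in a commutative polynomial setting. Once this is granted, each $\Psi(N_{p,q})$ is semi-linear, hence so is $\Psi(L)=\bigcup_{i,f}\Psi(N_{i,f})$ (minus possibly the origin), and $L$ is $\parallel$-rational by Theorem~\ref{th:ratSemLin}. A secondary difficulty lies in the careful bookkeeping of $D_{p,q}$, which must correctly account for the multiset structure of forks and joins and for the possible matchings $\sigma$ between their arities.
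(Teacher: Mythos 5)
Your proof follows essentially the same route as the paper's: for each pair of states one writes the set of $A^\circledast$-labels of paths, derives the finite system of equations generated by sequential transitions and matched fork--join pairs (with the permutation $\sigma$), and concludes semi-linearity, then $\parallel$-rationality via Theorem~\ref{th:ratSemLin}; the ``main obstacle'' you leave open is precisely the point where the paper invokes Parikh's theorem, by reading the system as a context-free grammar in which concatenation is replaced by the commutative $\parallel$ (commutativity is harmless since only Parikh images matter), so nothing genuinely new is needed there. One bookkeeping correction: do not put $\epsilon$ into the base sets $C_{p,p}$ --- since fork--join branches must carry non-empty labels, allowing $\epsilon$ in a component $N_{p_i,q_{\sigma(i)}}$ makes the right-hand side of your equation over-generate, so the stated equalities can fail; work with non-empty path labels throughout (as the paper implicitly does) and dispose of $\epsilon$ separately at the end.
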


\begin{proof}
  The inclusion from right to left follows immediately from the definition.
  Let us turn to the inclusion from left to right. There exists an automaton $\mathcal{A}=(Q,A,E,I,F)$ such that $L=L(\mathcal{A})$. For each pair $(p,q)$ of states of $\mathcal{A}$ define $L_{p,q}$ as the set of posets labeling paths from $p$ to $q$. As $P\cdot P'\not\in L$ for all $P,P'\in SP^+(A)$ we can assume that $P\cdot P'\not\in L_{p,q}$ for all $P,P'\in SP^+(A)$ and $p,q\in Q$. Then 
$$L_{p,q}=\mathop\bigcup_{{(p,\{p_1,\dots,p_n\})\in E_\text{fork}\atop (\{q_1,\dots,q_n\},q)\in E_\text{join}}\atop \sigma\in S_n} \parallel L_{p_i,q_{\sigma(i)}}\mathop\bigcup_{(p,a,q)\in E_\text{seq}} \{a\}$$
The set of all such equalities forms a finite system of equations, which can immediately be re-written as a context-free grammar $G$ where the usual concatenation is replaced by $\parallel$, which commutes. By Parikh's Theorem (see~\cite{Parikh:1966:CL:321356.321364}, Theorem~2), each $L_{p,q}$ is a semi-linear set of $A^\circledast$ with $\epsilon\not\in L_{p,q}$, which can be effectively be computed from $G$. As $L=\cup_{(i,f)\in I\times F}L_{i,f}$, and the class of semi-linear sets is closed under finite union, then $L$ is also semi-linear, hence $\parallel$-rational by Theorem~\ref{th:ratSemLin}. 
\end{proof}

The definitions of linearity, semi-linearity, rationality and $\parallel$-rationality, which are given above over free algebras, can naturally be generalized over (non-free) algebras.
 
\section{Complementation of rational languages}
\label{sec:complementation}

The first result of this paper is stated by the following Theorem which claims, together with Proposition~\ref{prop:union}, that the class of rational languages of N-free posets is closed under boolean operations.
\begin{thm}
  \label{th:complement}
  Let $A$ be an alphabet.
  The class of rational languages of $SP^+(A)$ is effectively closed under complement.
\end{thm}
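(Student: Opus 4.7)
The plan is to reduce complementation of a rational language $L\subseteq SP^+(A)$ to two complementation principles already available: the classical closure of Kleene regular languages of finite words under complement, and the Schützenberger--Eilenberg result (Theorem~\ref{th:schutz}) that rational, equivalently semi-linear, subsets of a finitely generated commutative monoid are closed under complement.

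Given a branching automaton $\mathcal{A}$ accepting $L$, I would first apply Proposition~\ref{prop:seqSepAuto} to assume that $\mathcal{A}$ is sequentially separated. Under this assumption, for every ordered pair $(p,q)$ of states of $\mathcal{A}$ the language $L_{p,q}$ of labels of paths from $p$ to $q$ is entirely sequential or entirely parallel. Sequential $L_{p,q}$'s are just languages of finite words, on which classical Kleene complementation applies. Parallel $L_{p,q}$'s are, in the base case where their factors are letters, rational subsets of $A^\oplus$; by Proposition~\ref{prop:parallelRational} they are $\parallel$-rational, hence semi-linear in $(A^\circledast,\parallel)$ by Theorem~\ref{th:ratSemLin}, and therefore have rational complements inside $A^\oplus$ by Theorem~\ref{th:schutz}.

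To handle the full poset structure — where a parallel factor can itself be an arbitrary N-free poset rather than a single letter — I would proceed by induction on a structural parameter of $\mathcal{A}$, such as the maximum nesting depth of matched fork/join pairs occurring in successful paths. At the inductive step, one identifies the innermost parallel sublanguages $L_{p,q}$ and introduces a fresh letter $\xi_{p,q}$ for each, producing a skeleton automaton of strictly smaller parallel nesting together with the substitution morphism $\phi$ sending $\xi_{p,q}$ to $L_{p,q}$, so that $L=\phi(K)$ for a suitable rational $K$ over the enlarged alphabet. The complement of $L$ then splits into a part whose skeleton itself fails $K$ (handled by complementing $K$ via the induction hypothesis and pushing forward by $\phi$, which is rational thanks to Proposition~\ref{prop:union}) and a part whose skeleton matches $K$ but where some parallel chunk falls outside the corresponding $L_{p,q}$ (handled by finite unions built from the semi-linear complements of the $L_{p,q}$'s). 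The closure under finite union and under morphic image given by Proposition~\ref{prop:union} then glues the two parts into a rational language.

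The main obstacle I anticipate is the bookkeeping for these nested failure modes: a poset can fall outside $L$ for reasons that arise simultaneously at several depths of its series-parallel decomposition, and one must verify that the classification into "wrong skeleton" versus "right skeleton but wrong parallel content at such-and-such position" is exhaustive and that each class is rational. This forces a careful double recursion — on parallel-nesting depth and on sequential skeleton — together with explicit constructions of the substitution morphisms so that the effectiveness claim of the theorem is maintained throughout.
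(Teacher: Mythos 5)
Your reduction is not well-founded: the induction parameter ``maximum nesting depth of matched fork/join pairs occurring in successful paths'' is in general infinite. Any branching automaton whose fork states lie on a cycle admits successful paths with arbitrarily deep nesting; the language $c\circ_\xi(a\parallel(b\xi))^{*\xi}$ of Examples~\ref{ex:abc} and~\ref{ex:abcAuto} is a rational language all of whose automata have this feature, and its posets alternate sequential and parallel composition to unbounded depth. Consequently there is no ``innermost'' layer of parallel sublanguages $L_{p,q}$ to strip off, no finite stratification into skeletons plus parallel chunks, and the whole recursion on nesting depth collapses. A secondary error points in the same direction: even after Proposition~\ref{prop:seqSepAuto}, a sequential $L_{p,q}$ is \emph{not} a language of finite words, since the sequential factors of its posets may themselves be arbitrary parallel posets; classical Kleene complementation therefore does not apply to it. (There is also the familiar problem that complement does not commute with the direct image $\phi$, which your ``wrong skeleton / right skeleton but wrong content'' case split would have to resolve for \emph{all} decompositions of a poset simultaneously, not just one.)

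The paper avoids any induction on depth and instead works algebraically and uniformly in the nesting. From the automaton it builds a congruence $\sim_{\mathcal A}$ of sp-algebras (Lemma~\ref{lem:cong}); the crucial finiteness is not about words but Lemma~\ref{lem:finiteIndex}: only finitely many $\sim_{\mathcal A}$-classes contain a sequential poset. These $k$ classes generate the quotient under $\parallel$, which lets one transfer the situation to the finitely generated commutative semigroup $\mathbb N^{k*}$ via the morphisms $\mu$ and $\psi$ (Lemma~\ref{lem:phiMuPsi}). There the Eilenberg--Sch\"utzenberger theorem (Theorem~\ref{th:schutz}), together with the $\parallel$-rationality of the sets $\mathcal F_{p,q}$ (Lemma~\ref{lem:FqpParallelRat}), shows that each class $\Delta_D$ recording which pairs $(p,q)$ admit a path labelled by the poset is semi-linear; the complement of $L$ is the union of the $\mu^{-1}(\Delta_D)$ with $D\cap I\times F=\emptyset$. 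The remaining, and genuinely hard, step is Lemma~\ref{lem:revImageMuRatElt}: an explicit branching automaton with states the vectors of $\mathbb N^k$ bounded by some $m$ together with one state per linear component $\Delta_{D,i}$, which computes $\delta\mu(P)$ bottom-up over the term structure of $P$ and thereby recognizes $\mu^{-1}$ of each semi-linear class for posets of arbitrary nesting depth. If you want to salvage your outline, this is the mechanism you would need to replace your depth induction with; as written, your argument only covers bounded-depth (essentially bounded-width) languages, which is precisely the special case where complementation was already known.
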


The proof of Theorem~\ref{th:complement} relies on the closure under complementation of the class of $\parallel$-rational languages of commutative monoids (Theorem~\ref{th:schutz}).
\begin{thm}[Eilenberg and Sch\"utzenberger, Theorem~III of~\cite{EilSch69}]
  \label{th:schutz}
  If $X$ and $Y$ are $\parallel$-rational subsets of a commutative monoid $M$, then $X\cap Y$ and $Y-X$ are also $\parallel$-rational subsets of~$M$.
\end{thm}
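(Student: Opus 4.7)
The plan is to reduce the statement to the free commutative monoid case and then invoke closure properties of semi-linear sets. Since the $\parallel$-rational expressions defining $X$ and $Y$ mention only finitely many elements of $M$, I may assume without loss of generality that $M$ is finitely generated, say by $A=\{a_1,\dots,a_k\}$; let $f:A^\circledast\to M$ be the canonical surjective morphism. By Theorem~\ref{th:ratSemLin}, the $\parallel$-rational subsets of the free commutative monoid $A^\circledast\cong\mathbb{N}^k$ are precisely the semi-linear sets, so in the free case the theorem reduces to a classical statement about semi-linear subsets of $\mathbb{N}^k$.

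I would first handle the free case, showing that semi-linear subsets of $\mathbb{N}^k$ are closed under intersection and set difference. Since semi-linearity is preserved under finite union and boolean operations distribute over union, it suffices to treat (i) the intersection of two linear sets and (ii) the complement of a single linear set. For~(i), an element of $(a+\langle B_1\rangle^\circledast)\cap(c+\langle B_2\rangle^\circledast)$ is witnessed by a non-negative integer solution of a system of linear Diophantine equations, whose solution set in $\mathbb{N}^{|B_1|+|B_2|}$ is a classical example of a semi-linear set (via Hilbert-basis or Gordan-style arguments); a semi-linear description of the intersection follows by projection. For~(ii), I would appeal to the equivalence between semi-linear subsets of $\mathbb{N}^k$ and Presburger-definable sets, and invoke the closure of Presburger arithmetic under negation.

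To descend from $A^\circledast$ back to $M$, write $X=f(X')$ and $Y=f(Y')$ for semi-linear $X',Y'\subseteq A^\circledast$ and observe that
\[X\cap Y = f\bigl(X'\cap f^{-1}(Y)\bigr)\quad\text{and}\quad Y\setminus X = f\bigl(Y'\cap(A^\circledast\setminus f^{-1}(X))\bigr),\]
so it is enough to show that $f^{-1}(Z)$ is a semi-linear (equivalently $\parallel$-rational) subset of $A^\circledast$ whenever $Z\subseteq M$ is $\parallel$-rational; once this is in place, the free-case closure can be applied inside $A^\circledast$, and pushing forward along the surjection $f$ (which preserves $\parallel$-rationality by an easy morphism argument) yields the closure properties in $M$.

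The main obstacle I expect is precisely the kernel analysis in this last step. Since $M$ is a finitely generated commutative monoid, it is finitely presented, and the kernel congruence of $f$ is generated by finitely many defining relations $(u_i,v_i)\in A^\circledast\times A^\circledast$. I would use these relations to express $f^{-1}(Z)$ as a controlled semi-linear saturation of any semi-linear lift of $Z$, by adding to each lifted linear component periods coming from the $u_i-v_i$. The delicate point is that these defining relations are genuine elements of the congruence rather than of a submonoid, so naive additive closure leaves $\mathbb{N}^A$; rewriting them as pairs of positive periods in a larger free commutative monoid and then projecting back, while preserving semi-linearity, is the technical heart of the proof.
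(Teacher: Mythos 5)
First, a remark on scope: the paper does not prove this statement at all — it is imported verbatim from Eilenberg and Sch\"utzenberger's 1969 paper (their Theorem~III), so there is no internal proof to compare against. Judged on its own terms, your reduction is the right one and in fact mirrors how Eilenberg and Sch\"utzenberger themselves argue: restrict to a finitely generated submonoid, pass to the free cover $f\colon A^\circledast\cong\mathbb{N}^k\to M$, settle the free case by Ginsburg--Spanier, and transport the result along $f$ via the identities $X\cap Y=f(X'\cap f^{-1}(Y))$ and $Y\setminus X=f(Y'\cap(A^\circledast\setminus f^{-1}(X)))$, which are correct since $f$ is surjective onto the submonoid containing $X$ and $Y$, and morphic images preserve $\parallel$-rationality.

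The genuine gap is exactly where you say you expect it, and the repair you sketch would not work as described. You need that $f^{-1}(Z)$ is semi-linear whenever $Z$ is $\parallel$-rational in $M$ (equivalently, that $f^{-1}(f(W))$ is semi-linear for semi-linear $W$), and you propose to obtain it by saturating a lift of $Z$ with ``periods coming from the $u_i-v_i$'' of a finite presentation. But a congruence class is the transitive closure of one-step rewritings $u_i+t\leftrightarrow v_i+t$ in which every intermediate point must stay in $\mathbb{N}^k$; it is not obtained by additively closing a base set under a fixed finite set of period vectors, and no such finite set (even of pairs in a larger free commutative monoid) generates it in general. The statement you actually need is Eilenberg--Sch\"utzenberger's Theorem~II: every congruence on a finitely generated free commutative monoid is a semi-linear subset of $\mathbb{N}^k\times\mathbb{N}^k$. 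Granting that, one writes $f^{-1}(f(W))=\pi_2\bigl(\ker f\cap(W\times\mathbb{N}^k)\bigr)$ and concludes from the free-case closure under intersection together with preservation of semi-linearity under the projection $\pi_2$. That congruence lemma is the real content of the theorem and requires a separate Dickson's-lemma-style argument; without it your proof is incomplete. Note also that the paper states the inverse-image property (Proposition~\ref{prop:morphInv}) as a \emph{corollary} of the present theorem, so you cannot quote it here without circularity — the congruence lemma must be proved independently.
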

As emphasized in~\cite{Saka:thAutoFr}, if $M$ is finitely generated then Theorem~\ref{th:schutz} is effective.
Theorem~\ref{th:schutz} was first proved by Ginsburg and Spanier~\cite{GS:AMS64} in the case of finitely generated free commutative monoids. The following proposition is a corollary of Theorem~\ref{th:schutz}:
\begin{prop}[Eilenberg and Sch\"utzenberger, Corollary~III.2 of~\cite{EilSch69}]
  \label{prop:morphInv}
  If $\varphi:M'\to M$ is a morphism of commutative monoids, $M'$ is finitely generated and $X$ is a $\parallel$-rational subset of $M$, then $\varphi^{-1}(X)$ is a $\parallel$-rational subset of~$M'$.
\end{prop}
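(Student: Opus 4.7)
The plan is to apply Theorem~\ref{th:schutz} inside the product commutative monoid $M'\times M$, via the usual graph construction.

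First, I would fix a finite generating set $\{g_1,\dots,g_n\}$ for $M'$ and observe that the graph
$$G_\varphi=\{(m',\varphi(m')):m'\in M'\}\subseteq M'\times M$$
is $\parallel$-rational. Indeed, since $\varphi$ is a morphism of commutative monoids, $G_\varphi$ coincides with the submonoid generated by $\{(g_i,\varphi(g_i)):i\in[n]\}$, so
$$G_\varphi=\{(g_1,\varphi(g_1)),\dots,(g_n,\varphi(g_n))\}^\circledast.$$
Similarly, $M'\times X$ is $\parallel$-rational in $M'\times M$: it can be written as the parallel product of $M'\times\{e_M\}$ and $\{e_{M'}\}\times X$, where $e_M$ and $e_{M'}$ denote identities. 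Each factor is the image of a $\parallel$-rational set (respectively $M'=\{g_1,\dots,g_n\}^\circledast$ and $X$) under one of the canonical injection morphisms $m'\mapsto(m',e_M)$ and $m\mapsto(e_{M'},m)$, and a straightforward induction on the construction of a $\parallel$-rational set shows that $\parallel$-rationality is preserved by direct image under morphisms of commutative monoids.

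Next, Theorem~\ref{th:schutz} applied in $M'\times M$ yields that $G_\varphi\cap(M'\times X)$ is $\parallel$-rational. Since
$$G_\varphi\cap(M'\times X)=\{(m',\varphi(m')):m'\in\varphi^{-1}(X)\},$$
projecting onto the first coordinate via the morphism $\pi_1:M'\times M\to M'$, and using once more that direct images preserve $\parallel$-rationality, we obtain
$$\varphi^{-1}(X)=\pi_1\bigl(G_\varphi\cap(M'\times X)\bigr),$$
which is therefore $\parallel$-rational in $M'$.

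I do not expect a real obstacle: the argument is essentially mechanical once one has the graph trick. It rests on two closure properties of $\parallel$-rational subsets of commutative monoids, stability under intersection (provided by Theorem~\ref{th:schutz}) and stability under direct image by morphisms. The only mildly delicate point is the latter, which boils down to checking that parallel iteration commutes with direct images---an immediate consequence of the fact that a morphism commutes with the monoid operation.
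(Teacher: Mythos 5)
The paper does not prove this proposition at all: it is quoted directly as Corollary~III.2 of Eilenberg and Sch\"utzenberger, so there is no in-paper argument to compare against. Your derivation is correct and is essentially the standard way the corollary follows from Theorem~\ref{th:schutz}: the finite generation of $M'$ is used exactly where it is needed, namely to write the graph $G_\varphi$ as $\{(g_1,\varphi(g_1)),\dots,(g_n,\varphi(g_n))\}^\circledast$ and hence as a $\parallel$-rational subset of the commutative monoid $M'\times M$; the remaining ingredients (rationality of $M'\times X$, closure under intersection from Theorem~\ref{th:schutz}, and preservation of $\parallel$-rationality under direct images of monoid morphisms, which indeed reduces to $\varphi(L_1\cup L_2)=\varphi(L_1)\cup\varphi(L_2)$, $\varphi(L_1 L_2)=\varphi(L_1)\varphi(L_2)$ and $\varphi(L^\circledast)=\varphi(L)^\circledast$) are all justified. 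The only point worth making explicit if this were written out in full is that Theorem~\ref{th:schutz} is applied in the product monoid $M'\times M$ rather than in $M$ or $M'$, which is legitimate since its statement places no finite-generation hypothesis on the ambient commutative monoid.
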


\medskip

Before going into the details we need to introduce the necessary notions on algebras for languages of N-free posets.
For the basic notions on algebra we refer to~\cite{Alm94}.
An \emph{sp-algebra} $(S,\cdot,\parallel)$ consists of a set $S$ equipped with two operations $\cdot$ and $\parallel$, respectively called \emph{sequential} and \emph{parallel product}, such that $(S,\cdot)$ is a semigroup ($\cdot$ is associative) and $(S,\parallel)$ is a commutative semigroup. Observe that the notion of an sp-algebra equipped with a \emph{neutral element} 1 (verifying $1\cdot x=x\cdot 1=x\parallel 1=x$ for any element $x$ of the sp-algebra) corresponds to bimonoid in~\cite{Bloom199655}. For each alphabet $A$ there exists a free sp-algebra which is isomorphic to $SP(A)$ (and which is also denoted by $SP(A)$). 
For simplicity we often denote an sp-algebra $(S,\cdot,\parallel)$ by $S$.
A morphism $\varphi:S\to T$ between two sp-algebras \emph{recognizes} $X\subseteq S$ if $X=\varphi^{-1}(R)$ for some $R\subseteq T$. Sometimes the reference to $\varphi$ is omitted and we say that $T$ recognizes $X$.
The following propositions are easy generalizations of well-known results on semigroups (see~\cite[Prop.~1.8 and 1.9]{Pin84} for the semigroup versions).
\begin{prop}
  If $\varphi: A\to S$ is a map from an alphabet $A$ to an sp-algebra $S$, there exists a unique morphism $\overline{\varphi}:SP^+(A) \to S$ such that $\varphi(a)=\overline{\varphi}(a)$ for all $a\in A$. Furthermore, $\overline{\varphi}$ is surjective iff $\varphi(A)$ is a generator of $S$.
\end{prop}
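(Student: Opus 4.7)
The plan is a routine universal-property argument: define $\overline{\varphi}$ by induction on the construction of $SP^+(A)$ as a free sp-algebra, then verify compatibility with $\cdot$ and $\parallel$ and uniqueness.

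First I would build $\overline{\varphi}$ by structural induction on posets in $SP^+(A)$. Recall from Section~\ref{sec:not} that a non-trivial N-free poset $P$ is either sequential or parallel (but not both), and in each case admits a \emph{unique} maximal factorization $P=P_1\cdots P_n$ or $P=P_1\parallel\cdots\parallel P_n$ whose factors have no further factorization of the same kind; this is the classical decomposition theorem for series-parallel posets. Using this, I set $\overline{\varphi}(a)=\varphi(a)$ for a singleton labeled by $a\in A$, and recursively
\[
\overline{\varphi}(P_1\cdots P_n)=\overline{\varphi}(P_1)\cdots\overline{\varphi}(P_n),\qquad
\overline{\varphi}(P_1\parallel\cdots\parallel P_n)=\overline{\varphi}(P_1)\parallel\cdots\parallel\overline{\varphi}(P_n).
\]
The right-hand sides are well defined in $S$ because $\cdot$ is associative and $\parallel$ is both associative and commutative, so the multiset of values on the parallel factors yields a single element of $S$.

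Next I would check that $\overline{\varphi}$ is a morphism, i.e.\ that $\overline{\varphi}(P\cdot Q)=\overline{\varphi}(P)\cdot\overline{\varphi}(Q)$ and similarly for $\parallel$. For the sequential product, if $P=P_1\cdots P_n$ and $Q=Q_1\cdots Q_m$ are the maximal sequential factorizations (taking $n=1$ or $m=1$ when $P$ or $Q$ is a letter or a parallel poset), then the maximal sequential factorization of $P\cdot Q$ is $P_1\cdots P_n\cdot Q_1\cdots Q_m$, and associativity of $\cdot$ in $S$ gives the result. The argument for $\parallel$ is identical, using commutativity of $\parallel$ in $S$ to handle the reordering that may occur when merging two maximal parallel factorizations into one. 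Uniqueness of $\overline{\varphi}$ is then immediate: any morphism $\psi:SP^+(A)\to S$ extending $\varphi$ must agree with $\overline{\varphi}$ on letters, hence, by induction on the maximal factorizations, on all of $SP^+(A)$.

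For the second assertion, the image $\overline{\varphi}(SP^+(A))$ is the sub-sp-algebra of $S$ generated by $\varphi(A)$: the inclusion $\subseteq$ is a trivial induction on the structure, and $\supseteq$ holds because $\varphi(A)\subseteq\overline{\varphi}(SP^+(A))$ and the latter is closed under $\cdot$ and $\parallel$ since $\overline{\varphi}$ is a morphism. Hence $\overline{\varphi}$ is surjective iff $\varphi(A)$ generates $S$.

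The only non-routine ingredient is the uniqueness of the maximal sequential/parallel factorization of an N-free poset, which the paper treats as part of its basic framework on series-parallel posets (and is the reason the recursive definition of $\overline{\varphi}$ is unambiguous); beyond that, all steps reduce to the associativity of $\cdot$ and the associativity and commutativity of $\parallel$ in $S$.
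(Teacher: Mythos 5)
Your proof is correct, and it takes the route the paper implicitly intends: the paper does not prove this proposition at all, but simply presents it as an ``easy generalization'' of the semigroup case (Prop.~1.8 of Pin's book), i.e.\ the freeness of $SP^+(A)$ over $A$. Your structural-induction definition of $\overline{\varphi}$, the verification of the morphism property, and the surjectivity criterion are the standard argument, and you correctly isolate the only real ingredient: that a nontrivial N-free poset is sequential or parallel but not both, and admits a unique maximal sequential (resp.\ parallel) factorization, which makes the recursive definition unambiguous; this uniqueness is classical and is indeed used by the paper elsewhere (e.g.\ in the proof of Lemma~\ref{lem:cong}) as part of its basic framework.
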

\begin{prop}
  Let $A$ be an alphabet, $\varphi:SP^+(A)\to S$ and $\psi:T\to S$ two morphisms of sp-algebras, with $\psi$ surjective. There exists a morphism $\mu:SP^+(A)\to T$ such that $\varphi=\psi\mu$. Furthermore, $\mu$ recognizes any $L\subseteq SP^+(A)$ recognized by $\varphi$.
\end{prop}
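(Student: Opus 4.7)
The plan is to build $\mu$ by picking preimages of the $\varphi$-values of the alphabet letters under the surjection $\psi$, extending freely to $SP^+(A)$, and then using the universal property cited in the previous proposition to see that the resulting triangle commutes.

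Concretely, first I would use surjectivity of $\psi$ to choose, for each letter $a \in A$, an element $t_a \in T$ such that $\psi(t_a) = \varphi(a)$. This only requires finitely many choices since $A$ is a finite alphabet. I then define a map $m:A \to T$ by $m(a)=t_a$ and invoke the previous proposition to extend $m$ uniquely to an sp-algebra morphism $\mu = \overline{m}: SP^+(A) \to T$.

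Next I would verify $\varphi = \psi\mu$. Both $\varphi$ and $\psi\mu$ are sp-algebra morphisms from $SP^+(A)$ to $S$. On each generator $a\in A$ we have $(\psi\mu)(a) = \psi(t_a) = \varphi(a)$ by the choice of $t_a$. Since the extension given by the previous proposition is unique, the two morphisms must coincide on all of $SP^+(A)$, giving $\varphi = \psi\mu$.

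For the second assertion, suppose $L \subseteq SP^+(A)$ is recognized by $\varphi$, so that $L = \varphi^{-1}(R)$ for some $R\subseteq S$. Then
\[
L = \varphi^{-1}(R) = (\psi\mu)^{-1}(R) = \mu^{-1}\bigl(\psi^{-1}(R)\bigr),
\]
and $\psi^{-1}(R)$ is a subset of $T$, so $\mu$ recognizes $L$ via $\psi^{-1}(R)$. There is no substantive obstacle here; the only non-formal ingredient is the choice of preimages of $\varphi(a)$ in $T$, which is available because $\psi$ is surjective and $A$ is finite, and everything else is a direct application of the universal property of the free sp-algebra $SP^+(A)$ established in the preceding proposition.
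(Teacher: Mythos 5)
Your proof is correct and is exactly the standard argument the paper has in mind: the paper states this proposition without proof, citing it as an easy generalization of the semigroup case (Pin, Prop.~1.9), and your route --- choosing preimages $t_a$ of $\varphi(a)$ under the surjection $\psi$, extending freely via the universal property of $SP^+(A)$, using uniqueness of the extension to get $\varphi=\psi\mu$, and then $L=\varphi^{-1}(R)=\mu^{-1}(\psi^{-1}(R))$ --- is precisely that generalization. No gaps; the only minor remark is that finiteness of $A$ is not even needed for the choice of preimages.
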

A subset $X$ of an sp-algebra $S$ is \emph{recognizable} if there exists a \emph{finite} sp-algebra $T$ and a morphism $\varphi:S\to T$ such that $\varphi$ recognizes $X$. A \emph{congruence} $\sim$ of sp-algebras is an equivalence relation compatible with the operations, ie. $x\sim y$ implies that $(1): u\cdot x\cdot v\sim u\cdot y\cdot v$ and $(2): u\parallel x\parallel v\sim u\parallel y\parallel v$ for all $u,v$. Actually, as the parallel product commutes in sp-algebras, the condition~(2) is equivalent to $x\parallel u\sim y\parallel u$ for all $u$. An equivalence relation has \emph{finite index} if it has a finite number of equivalence classes.
It is well-known that the map $\varphi_\sim:S\to S/\mathord{\sim}$ which associates to any element of $S$ its equivalence class in the quotient sp-algebra $S/\mathord{\sim}$ can be extended in a unique way into a morphism of sp-algebras.
Let $X$ be a set whose elements are called \emph{variables} and $S$ be an sp-algebra. 
A \emph{term} on $S$ is a full binary tree whose leafs are either variables or elements of $S$, and nodes are a sequential or a parallel product.
Formally, the set ${\mathcal T}$ of terms on $S$ is defined inductively by $X\subseteq {\mathcal T}$, each element of $S$ is in ${\mathcal T}$ and, for all $t,t'\in {\mathcal T}$, $t\cdot t'\in {\mathcal T}$ and $t\parallel t'\in {\mathcal T}$. 
Observe that a N-free poset labeled by $A$ can be thought of as a term of $A$ (which may not be unique) without variables, and reciprocally (note that a term corresponds to a unique N-free poset).
A value can be associated to any term $t$ whose leaves are all elements of $S$ by the partial function $e:{\mathcal T}\to S$ inductively defined by $e(s)=s$ for all $s\in S$, $e(t\parallel t')=t\parallel t'$ and $e(t\cdot t')=t\cdot t'$.
Let $L\subseteq S$. The \emph{syntactic congruence} $\sim_L$ of $L$ on $S$ is defined by: for all $x,y\in S$, $x\sim_L y$ if, for any term $t(x_0,\dots,x_n)$ on $S$ and any $s_1,\dots,s_n\in S$, $$e(t(x,s_1,\dots,s_n))\in L\iff e(t(y,s_1,\dots,s_n))\in L$$
It is well-known that the quotient sp-algebra $S/\mathord{\sim_L}$ recognizes $L$. Furthermore, the following property holds on $S/\mathord{\sim_L}$:
\begin{prop}[see~\cite{Alm94} or~\cite{LW00:sp}]
  \label{prop:synt}
  Let $S$ and $T$ be two sp-algebras, $L\subseteq S$ and $\varphi:S\to T$ be an onto morphism.
  Then $L=\varphi^{-1}\varphi(L)$ if and only if, for any $x,y\in S$, $\varphi(x)=\varphi(y)$ implies $x\sim_L y$.
\end{prop}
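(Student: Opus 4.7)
The plan is to verify both directions of the equivalence directly from the definition of $\sim_L$, the definitions of term and of $e$, and the fact that $\varphi$ is a morphism of sp-algebras. The key preliminary observation, which I would prove once and then reuse in both directions, is the following substitution lemma: for any term $t(x_0,x_1,\dots,x_n)$ on $S$, any elements $s_1,\dots,s_n\in S$, and any $x,y\in S$ with $\varphi(x)=\varphi(y)$, one has
\[
\varphi\bigl(e(t(x,s_1,\dots,s_n))\bigr)=\varphi\bigl(e(t(y,s_1,\dots,s_n))\bigr).
\]
This is a routine structural induction on $t$: the base cases (a variable replaced by $x$ or $y$, a variable replaced by some $s_i$, or a constant from $S$) are immediate, and the inductive cases $t=t_1\cdot t_2$ and $t=t_1\parallel t_2$ follow because $\varphi$ respects $\cdot$ and $\parallel$.

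For the forward implication, I would assume $L=\varphi^{-1}\varphi(L)$ and pick $x,y\in S$ with $\varphi(x)=\varphi(y)$. Given any term $t$ and any $s_1,\dots,s_n\in S$, the substitution lemma gives $\varphi(e(t(x,\vec s)))=\varphi(e(t(y,\vec s)))$. Hence $e(t(x,\vec s))\in L=\varphi^{-1}\varphi(L)$ iff its $\varphi$-image lies in $\varphi(L)$, iff the $\varphi$-image of $e(t(y,\vec s))$ lies in $\varphi(L)$, iff $e(t(y,\vec s))\in L$. This is exactly $x\sim_L y$.

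For the converse, the inclusion $L\subseteq\varphi^{-1}\varphi(L)$ is always trivial. For the other inclusion, I would take $z\in\varphi^{-1}\varphi(L)$ and choose a witness $w\in L$ with $\varphi(z)=\varphi(w)$, which exists by definition of $\varphi^{-1}\varphi(L)$. The hypothesis then gives $z\sim_L w$. Specialising to the one-variable term $t(x_0)=x_0$ (with no parameters), the defining property of $\sim_L$ yields $w=e(t(w))\in L\Leftrightarrow z=e(t(z))\in L$, and since $w\in L$ we conclude $z\in L$.

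There is no genuine obstacle here; the only point that deserves care is the structural induction establishing the substitution lemma, since it is what lets $\varphi$ be pushed through the evaluation map $e$. Everything else is a direct unpacking of the definitions of $\sim_L$ and of $\varphi^{-1}\varphi(L)$, together with the fact that the single-variable identity term makes $\sim_L$ refine the relation ``having the same membership status in $L$''.
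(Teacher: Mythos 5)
Your proof is correct, and it is the standard argument: the paper itself gives no proof of Proposition~\ref{prop:synt} (it only cites Almeida and Lodaya--Weil), and the substitution lemma pushed through $e$ by structural induction, plus the one-variable term $t(x_0)=x_0$ for the converse, is exactly how the cited sources argue. One small observation: your argument never uses that $\varphi$ is onto, which is fine --- surjectivity is not needed for this particular equivalence, it only matters for the surrounding results where the syntactic algebra is exhibited as a quotient of $T$.
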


\begin{exa}
  \label{ex:abcAlgebra}
  Let $A=\{a,b,c\}$ and $L=c\circ_\xi(a\parallel(b\xi))^{*\xi}$ be the language of Example~\ref{ex:abc}.
  Let $(S,\cdot,\parallel)$ be the sp-algebra defined by $S=\{a,b,c,s,0,1\}$, $bc=s$, $a\parallel s=c$, $1$ is the neutral element for both sequential and parallel products, and all other products are mapped to $0$. 
  Then $S$ recognizes $L$.
  Indeed, let $\varphi: SP^+(A)\to S$ be the morphism defined by $\varphi(a)=a$, $\varphi(b)=b$ and $\varphi(c)=c$.
  Then $\varphi(L)=\{c\}$ and $L=\varphi^{-1}(c)$.
  Furthermore $S=SP^+(A)/\mathord{\sim_L}$.
\end{exa}

Lodaya and Weil have proved the following connection between recognizable and rational languages:
\begin{thm}[Lodaya and Weil, Theorem~1 of~\cite{lodaya98kleene}]
  \label{th:RecRat}
  Recognizable languages are rational.
\end{thm}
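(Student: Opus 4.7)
The plan is to reduce the problem to the rationality of the fibers of $\varphi$. Fix a recognizing morphism $\varphi:SP^+(A)\to S$ with $S$ a finite sp-algebra, so that $L=\varphi^{-1}(\varphi(L))=\bigcup_{s\in\varphi(L)}L_s$ with $L_s:=\varphi^{-1}(s)$. By closure of rational languages under finite union (Proposition~\ref{prop:union}), it suffices to express each $L_s$ as a rational language. Every nonempty N-free poset is uniquely either (i)~a single letter, (ii)~a poset with a nontrivial maximal sequential factorization, or (iii)~a poset with a maximal parallel factorization, giving the disjoint decomposition $L_s=L_s^{\mathrm{atom}}\cup L_s^{\mathrm{seq}}\cup L_s^{\mathrm{par}}$, in which $L_s^{\mathrm{atom}}=\{a\in A:\varphi(a)=s\}$ is finite (hence rational). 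Abbreviate $N_t:=L_t^{\mathrm{atom}}\cup L_t^{\mathrm{seq}}$ (the nonparallel posets of weight~$t$) and $M_t:=L_t^{\mathrm{atom}}\cup L_t^{\mathrm{par}}$ (the nonsequential ones).

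For the parallel piece, any $P\in L_s^{\mathrm{par}}$ factors uniquely as $Q_1\parallel\cdots\parallel Q_m$ with $m\geq 2$ and each $Q_i\in N_{t_i}$; the induced multiset $\{t_1,\ldots,t_m\}$ lies in the preimage of $s$ under the $\parallel$-product morphism $\pi:S^\circledast\to S^1$, where $S^1$ adjoins a formal identity to the commutative semigroup $(S,\parallel)$ to make it a monoid. Since $S^\circledast$ is finitely generated, Proposition~\ref{prop:morphInv} gives that $\pi^{-1}(s)$ is $\parallel$-rational in $S^\circledast$; removing the finitely many multisets of size at most $1$ preserves this by Theorem~\ref{th:schutz}. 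Theorem~\ref{th:ratSemLin} then converts the result into an explicit $\parallel$-rational expression over the alphabet~$S$, into which substituting each letter $t$ by the language $N_t$ produces a $\parallel$-rational expression for $L_s^{\mathrm{par}}$ in terms of the $N_t$.

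Symmetrically, any $P\in L_s^{\mathrm{seq}}$ factors as a sequential product of $k\geq 2$ nonsequential factors, yielding a word $t_1\cdots t_k\in S^+$ of length at least $2$ whose sequential product equals $s$. The set of such words is accepted by the deterministic finite word automaton on alphabet $S$ with state set $S\cup\{q_0\}$, initial state $q_0$, a transition from $q_0$ to $t$ on letter $t$, a transition from $u$ to $u\cdot t$ on letter $t$ for $u\in S$, and unique final state $s$ (removing the single length-$1$ word $s$ if present is trivial). Classical Kleene produces a rational expression for this word language over $S$, into which substituting $t\mapsto M_t$ yields a (Kleene-)rational expression for $L_s^{\mathrm{seq}}$ in terms of the $M_t$.

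The main obstacle is that the two descriptions are mutually recursive: $L_s^{\mathrm{par}}$ is built from the $L_t^{\mathrm{seq}}$ through $N_t$, and $L_s^{\mathrm{seq}}$ from the $L_t^{\mathrm{par}}$ through $M_t$. My plan for closing the recursion inside the rational toolkit is to introduce a fresh letter $\xi_s$ for each $s\in S$ standing for $L_s$, assemble the above constructions into a single system of equations in the $\xi_s$, and eliminate the variables one at a time. Each self-reference of a variable in the parallel part sits inside a parallel product, so it meets the side condition governing legal uses of ${}^{*\xi}$ (the $\xi$-labeled element must be incomparable with another element), and may therefore be absorbed by the iterated-substitution operator ${}^{*\xi_s}$; self-references that arise in the sequential part are absorbed by ordinary Kleene ${}^+$. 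Showing that these eliminations terminate after at most $|S|$ steps and that the resulting expression correctly captures the least solution of the system is the delicate point, in which the finiteness of $S$ is crucially used.
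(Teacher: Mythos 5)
The paper itself gives no proof of this statement: it is imported verbatim from Lodaya and Weil (Theorem~1 of~\cite{lodaya98kleene}), so the relevant comparison is with their original argument, and your plan is essentially a reconstruction of it. The first three paragraphs are correct: splitting each fiber $\varphi^{-1}(s)$ into letters, sequential and parallel posets; observing that the maximal parallel (resp.\ sequential) factorization of a parallel (resp.\ sequential) poset has $m\geq 2$ non-parallel (resp.\ non-sequential) nonempty factors, so that the substituted expressions denote exactly $L_s^{\mathrm{par}}$ and $L_s^{\mathrm{seq}}$; handling the parallel level in the finitely generated free commutative monoid $S^\circledast$ via Proposition~\ref{prop:morphInv}, Theorem~\ref{th:schutz} and Theorem~\ref{th:ratSemLin}; and handling the sequential level by the classical finite-semigroup Kleene construction over the alphabet $S$.

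The genuine gap is that the last paragraph, which is where the theorem is actually proved, is only announced. Three points need attention there. First, the unknowns of your system are never $L_s$ itself: the equations you wrote involve only $N_t=L_t^{\mathrm{atom}}\cup L_t^{\mathrm{seq}}$ and $M_t=L_t^{\mathrm{atom}}\cup L_t^{\mathrm{par}}$, so the fresh letters must stand for these (or for $L_t^{\mathrm{seq}}$, $L_t^{\mathrm{par}}$), not for $L_s$. Second, ordinary Kleene $^+$ absorbs nothing: in the sequential equations the unknowns occur only as letters of a word-rational expression (which already contains whatever $^+$ it needs), so after substituting one family of equations into the other, \emph{every} occurrence of an unknown sits inside a parallel product with a nonempty sibling factor, and the entire seq/par recursion has to be funneled through $^{*\xi}$ and $\circ_\xi$. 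Third, the elimination itself is exactly the missing lemma: you must check that the incomparability proviso of $^{*\xi}$ is preserved when an already-solved expression (still containing the other variables) is substituted into the remaining equations; prove a Beki\'c-style statement that eliminating one variable by $B\circ_\xi E^{*\xi}$ yields the least solution of a guarded equation, so that $|S|$ eliminations suffice; and verify that the intended family $(L_s^{\mathrm{seq}},L_s^{\mathrm{par}})_{s\in S}$ \emph{is} the least solution, which follows by induction on poset size because every unfolding of the equations strictly decreases size (all factorizations have at least two nonempty factors). None of these steps would fail --- this is precisely how the cited proof goes --- but until they are carried out, what you have is the correct plan rather than a proof.
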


However, the following example (from~\cite{lodaya98kleene}) shows that in general, rational languages are not recognizable.
\begin{exa}
  \label{ex:anbnAlgebra}
  Let $A=\{a,b\}$ and $L=(a\parallel b)^\oplus$.
  Let $\varphi:SP^+(A)\to\mathbb{Z}\cup\{\bot\}$ the morphism defined by $\varphi(a)=1$, $\varphi(b)=-1$, $xy=\bot$ for all $x,y\in \mathbb{Z}\cup\{\bot\}$.
  Then $L=\varphi^{-1}(0)$, and thus $\varphi$ recognizes $L$.
  Furthermore, $SP^+(A)/\mathord{\sim_L}$ is isomorphic to $\mathbb{Z}\cup\{\bot\}$, thus $L$ is not recognizable as a consequence of Proposition~\ref{prop:synt}.
  Example~\ref{ex:anbnAuto} gives an automaton $\mathcal{A}$ with $L(\mathcal{A})=L$.
\end{exa}

\medskip

Let us return to the proof of Theorem~\ref{th:complement}.
The first step is the construction of an algebra from an automaton.
We need to introduce some new definitions, which are applied in Example~\ref{ex:complement} at the end of this section.

\medskip

Let $L\subseteq SP^+(A)$ and $\mathcal{A}=(Q,A,E,I,F)$ be an automaton such that $L(\mathcal{A})=L$.

For every pair $(p,q)$ of states, define $K_{p,q}$ to be the set of multi-sets of pair of states as follows:
\begin{multline*}
K_{p,q}=\{\{(p_1,q_{\sigma(1)}),\dots,(p_n,q_{\sigma(n)})\} : (p,\{p_1,\dots,p_n\})\in E_\text{fork},\\ (\{q_1,\dots,q_n\},q)\in E_\text{join}, \sigma\in S_n\text{ and } p_i \mathop{\Longrightarrow}^{P_i}\limits_{\mathcal{A}} q_{\sigma(i)}, P_i\in SP^+(A), \text{ for all }i\in[n]\}
$$
\end{multline*}
Define also $\mathcal{F}_{p,q}$ to be the smallest set of multi-sets of pairs of states as follows.
Let
$$\mathcal{F}^0_{p,q}=
\begin{cases}
\{\{(p,q)\}\} &\text{ if there exists } p \mathop{\Longrightarrow}^P\limits_{\mathcal{A}} q, P\in SP^+(A),\\
\emptyset &\text{ otherwise.}
\end{cases}
$$
and 
\begin{multline*}
\mathcal{F}_{p,q}^{i+1}=\mathcal{F}_{p,q}^{i}\cup\{ M-(p_j,q_j)\cup X :
M = \{(p_1,q_1)^{k_1},\dots,(p_n,q_n)^{k_n}\}\in \mathcal{F}_{p,q}^{i},\\
j\in[n], k_j>0, X\in K_{(p_j,q_j)} \}
\end{multline*}
Now let $\mathcal{F}_{p,q}=\cup_{i\in\mathbb{N}}\mathcal{F}^{i}_{p,q}$.
Observe that $\emptyset\not\in\mathcal{F}_{p,q}$.
If $P=P_1\parallel\dots\parallel P_n$ is a N-free poset such that there exists a path $\gamma:p\mathop{\Longrightarrow}\limits_{\mathcal{A}}^{P} q$ which is the parallel composition of $n>0$ paths $\gamma_i:p_i\mathop{\Longrightarrow}\limits_{\mathcal{A}}^{P_i} q_i$, for some $p_i,q_i\in Q$, $P_i\in SP^+(A)$, $i\in[n]$, observe that by construction the multi-set $\{(p_1,q_1),\dots,(p_n,q_n)\}$ belongs to $\mathcal{F}_{p,q}$. Reciprocally, if $\{(p_1,q_1),\dots,(p_n,q_n)\}\in \mathcal{F}_{p,q}$, there exist paths $p_i\mathop{\Longrightarrow}\limits_{\mathcal{A}}^{P_i} q_i$ for some $P_i\in SP^+(A)$ for all $i\in[n]$, that can be composed to build a path $p\mathop{\Longrightarrow}\limits_{\mathcal{A}}^{P_1\parallel\dots\parallel P_n} q$.

Let $E$, $I$ be sets, $X=\{ X_i :i\in I\}$ and $Y$ be respectively a set of multi-sets of elements of $E$ and a multi-set of elements of $E$. Set
$$\quotientparallel{\{X_i:i\in I\}}{Y}=\{X_i-Y : Y\subseteq X_i, i\in I\}$$

When $P\in SP^+(A)$, define
\begin{multline*}
\mathcal{R}(P)=\{\{ (p_1,q_1),\dots,(p_n,q_n) : p_i \mathop{\Longrightarrow}\limits_{\mathcal{A}}^{P_i} q_i \text{ for all }i\in[n]\} : \\ P=P_1\parallel\dots\parallel P_n, P_i\in SP^+(A)\text{ for all }i\in[n]\}
\end{multline*}
Thus $\mathcal{R}(P)$ is the set of all finite multi-sets $\{(p_1,q_1),\dots,(p_n,q_n)\}$ over $Q^2$ such that $P=P_1\parallel\dots\parallel P_n$ and, for all $i\in[n]$, $P_i\in SP^+(A)$ and $p_i \mathop{\Longrightarrow}\limits_{\mathcal{A}}^{P_i} q_i$.

Let $\sim_{\mathcal{A}}$ be the relation defined on $SP^+(A)$ by $P\sim_{\mathcal{A}}P'$ if and only if, for all $p,q\in Q$,
\begin{equation}
\label{eq:cong}
\mathop\bigcup_{x\in \mathcal{R}(P)} \quotientparallel{\mathcal{F}_{p,q}}{x}
=
\mathop\bigcup_{x\in \mathcal{R}(P')} \quotientparallel{\mathcal{F}_{p,q}}{x}
\end{equation}
Equivalently, $P\sim_{\mathcal{A}}P'$ if and only if, for all multi-sets $M$ over $Q^2$, there is some $x\in \mathcal{R}(P)$ such that $x\cup M\in\mathcal{F}_{p,q}$ if and only if there is some $x'\in \mathcal{R}(P')$ such that $x'\cup M\in\mathcal{F}_{p,q}$.
Obviously, $\sim_{\mathcal{A}}$ is an equivalence relation.
Also, for all $p,q\in Q$, $P\in SP^+(A)$, we have $\emptyset\in\mathop\bigcup_{x\in \mathcal{R}(P)} \quotientparallel{\mathcal{F}_{p,q}}{x}$ if and only if $p\mathop{\Longrightarrow}\limits_{\mathcal{A}}^{P} q$. Immediately,
\begin{lem}
  \label{lem:congChemin}
  If $P\sim_\mathcal{A}P'$ then $p\mathop{\Longrightarrow}\limits_{\mathcal{A}}^{P} q$ iff $p\mathop{\Longrightarrow}\limits_{\mathcal{A}}^{P'} q$ for all $p,q\in Q$.
\end{lem}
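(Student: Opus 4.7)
The plan is to reduce the lemma to the equivalence recorded in the paragraph immediately preceding its statement: for every $P\in SP^+(A)$ and all $p,q\in Q$,
\[
p\mathop{\Longrightarrow}\limits_{\mathcal{A}}^{P} q \quad\Longleftrightarrow\quad \emptyset\in\bigcup_{x\in \mathcal{R}(P)} \quotientparallel{\mathcal{F}_{p,q}}{x}.
\]
Granting this equivalence, the lemma is immediate from the definition of $\sim_\mathcal{A}$: by~(\ref{eq:cong}) the unions $\bigcup_{x\in\mathcal{R}(P)}\quotientparallel{\mathcal{F}_{p,q}}{x}$ and $\bigcup_{x\in\mathcal{R}(P')}\quotientparallel{\mathcal{F}_{p,q}}{x}$ coincide for every pair $(p,q)$, so one contains $\emptyset$ iff the other does, which by the displayed equivalence transfers the existence of a $P$-labeled path from $p$ to $q$ to the existence of a $P'$-labeled one, and conversely.

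To verify the characterization, first unfold the notation: $\emptyset\in\quotientparallel{\mathcal{F}_{p,q}}{x}$ means there exists $F\in \mathcal{F}_{p,q}$ with $x\subseteq F$ and $F-x=\emptyset$, i.e.\ $x\in\mathcal{F}_{p,q}$. Thus the right-hand side of the equivalence is precisely $\mathcal{R}(P)\cap\mathcal{F}_{p,q}\neq\emptyset$. If $p\Rightarrow^P q$, then the trivial single-factor parallel decomposition $P=P$ shows that $\{(p,q)\}\in\mathcal{R}(P)$, and since a path from $p$ to $q$ exists in $\mathcal{A}$ we also have $\{(p,q)\}\in\mathcal{F}^0_{p,q}\subseteq\mathcal{F}_{p,q}$. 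Conversely, given any $x=\{(p_1,q_1),\dots,(p_n,q_n)\}\in\mathcal{R}(P)\cap\mathcal{F}_{p,q}$, by definition of $\mathcal{R}(P)$ there are paths $\gamma_i:p_i\Rightarrow^{P_i} q_i$ with $P=P_1\parallel\cdots\parallel P_n$, and membership of $x$ in $\mathcal{F}_{p,q}$ furnishes, through the inductive clauses building $\mathcal{F}_{p,q}$ from the $K_{p,q}$, compatible fork and join transitions that compose the $\gamma_i$ into a path $p\Rightarrow^P q$. This last step is precisely the reciprocal statement spelled out in the paragraph introducing $\mathcal{F}_{p,q}$.

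There is no real obstacle: the entire content of the lemma is packaged in the preceding characterization, which is itself an easy rereading of the inductive definition of $\mathcal{F}_{p,q}$. The only mildly delicate point is bookkeeping the multi-set operation $M-(p_j,q_j)\cup X$ inside the inductive clause defining $\mathcal{F}^{i+1}_{p,q}$, and checking that each refinement step corresponds to expanding exactly one previously unresolved parallel factor through a matching fork/join pair; this correspondence is, however, already explicit in the discussion preceding the definition of $\sim_\mathcal{A}$. Once the characterization is recorded the proof of the lemma itself becomes a one-liner.
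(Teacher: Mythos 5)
Your proof is correct and follows essentially the paper's own route: the paper also treats the lemma as an immediate consequence of the observation that $\emptyset\in\bigcup_{x\in\mathcal{R}(P)}\quotientparallel{\mathcal{F}_{p,q}}{x}$ iff $p\mathop{\Longrightarrow}\limits_{\mathcal{A}}^{P}q$, combined with the definition of $\sim_{\mathcal{A}}$. Your additional unfolding of that observation (that it amounts to $\mathcal{R}(P)\cap\mathcal{F}_{p,q}\neq\emptyset$, using the trivial one-factor decomposition in one direction and the fork/join composition recorded by $\mathcal{F}_{p,q}$ in the other) matches the informal justification the paper gives in the paragraphs preceding the lemma.
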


The following lemma shows in particular that $SP^+(A)/\mathord\sim_\mathcal{A}$ is equipped with a structure of sp-algebra.
\begin{lem}
  \label{lem:cong}
  $\sim_{\mathcal{A}}$ is a congruence of sp-algebra.
\end{lem}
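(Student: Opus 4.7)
The plan is to verify the two defining properties of an sp-algebra congruence: compatibility with $\parallel$ and with $\cdot$. In both cases the verification reduces, via the equivalent characterisation of $\sim_\mathcal{A}$ stated right after equation~(\ref{eq:cong}) (``for all multi-sets $M$ and all $p,q\in Q$, there exists $x\in\mathcal{R}(P)$ with $x\cup M\in\mathcal{F}_{p,q}$ iff there exists $x'\in\mathcal{R}(P')$ with $x'\cup M\in\mathcal{F}_{p,q}$''), to understanding how $\mathcal{R}(\cdot)$ interacts with the two products.

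First, I would establish three elementary identities. (a) For any $P,U\in SP^+(A)$,
\[
\mathcal{R}(P\parallel U)\;=\;\{x\cup y : x\in\mathcal{R}(P),\ y\in\mathcal{R}(U)\}.
\]
This uses the N-freeness of series-parallel posets: every top-level parallel factor of $P\parallel U$ is entirely contained either in $P$ or in $U$, so any parallel decomposition of $P\parallel U$ splits into a parallel decomposition of $P$ together with one of $U$, and conversely the union of any two such decompositions is a parallel decomposition of $P\parallel U$. (b) For any $U,V,P\in SP^+(A)$, the posets $UP$ and $PV$ are sequential (their only top-level parallel decomposition is trivial), so $\mathcal{R}(UP)=\{\{(r,s)\}:r\mathop{\Longrightarrow}\limits_{\mathcal{A}}^{UP}s\}$ and similarly for $PV$. (c) Multi-set difference is associative in the sense that $\mathcal{F}\backslash\!\!\backslash(x\cup y)=(\mathcal{F}\backslash\!\!\backslash x)\backslash\!\!\backslash y$, which is immediate from the definition.

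For parallel compatibility, assume $P\sim_\mathcal{A}P'$ and fix $U,p,q$ and a multi-set $M$ over $Q^2$. Using (a) and (c),
\[
\exists z\in\mathcal{R}(P\parallel U),\ z\cup M\in\mathcal{F}_{p,q}
\iff
\exists y\in\mathcal{R}(U),\ \exists x\in\mathcal{R}(P),\ x\cup(y\cup M)\in\mathcal{F}_{p,q}.
\]
Applying the hypothesis $P\sim_\mathcal{A}P'$ with the auxiliary multi-set $y\cup M$ turns the inner existential over $\mathcal{R}(P)$ into one over $\mathcal{R}(P')$, and then (a) again rewrites this as $\exists z'\in\mathcal{R}(P'\parallel U),\ z'\cup M\in\mathcal{F}_{p,q}$. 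Hence $P\parallel U\sim_\mathcal{A}P'\parallel U$.

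For sequential compatibility, assume $P\sim_\mathcal{A}P'$. By Lemma~\ref{lem:congChemin}, $t\mathop{\Longrightarrow}\limits_{\mathcal{A}}^{P}s\iff t\mathop{\Longrightarrow}\limits_{\mathcal{A}}^{P'}s$ for all $t,s\in Q$; prepending a path labelled by $U$ gives $r\mathop{\Longrightarrow}\limits_{\mathcal{A}}^{UP}s\iff r\mathop{\Longrightarrow}\limits_{\mathcal{A}}^{UP'}s$ for all $r,s$. Combined with (b), this yields the \emph{equality} $\mathcal{R}(UP)=\mathcal{R}(UP')$, so trivially $UP\sim_\mathcal{A}UP'$. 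The case $PV\sim_\mathcal{A}P'V$ is symmetric. I expect the main obstacle to be (a), the parallel compositionality of $\mathcal{R}$: one has to ensure that matching a parallel decomposition of $P\parallel U$ against the pair of decompositions of $P$ and $U$ is done as a bijection at the level of multi-sets (the fork/join endpoints $(p_i,q_i)$ are tracked independently for each factor, which is precisely what the definition of $\mathcal{R}$ allows), and this relies crucially on the N-free hypothesis. Once (a) is in hand, the rest is direct manipulation.
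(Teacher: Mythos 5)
Your sequential case is correct, and in fact slightly cleaner than the paper's: observing that $UP$ (resp.\ $PV$) is a sequential poset, so that $\mathcal{R}(UP)$ consists only of singletons $\{(r,s)\}$ with $r\mathop{\Longrightarrow}\limits_{\mathcal{A}}^{UP}s$, and then invoking Lemma~\ref{lem:congChemin} to get $\mathcal{R}(UP)=\mathcal{R}(UP')$, immediately gives $UP\sim_{\mathcal{A}}UP'$ (the paper instead recomposes paths in parallel context). The problem is your identity (a). The multi-sets in $\mathcal{R}(P\parallel U)$ arise from \emph{arbitrary} decompositions $P\parallel U=X_1\parallel\dots\parallel X_n$ into nonempty factors, not only from the maximal parallel factorization, and a single factor $X_i$ may mix parts of $P$ and parts of $U$. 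Concretely, take $P=a$, $U=b$ and the automaton of Example~\ref{ex:anbnAuto}: since $1\mathop{\Longrightarrow}\limits_{\mathcal{A}}^{a\parallel b}6$, the singleton $\{(1,6)\}$ belongs to $\mathcal{R}(P\parallel U)$, yet it cannot be written as $x\cup y$ with $x\in\mathcal{R}(P)$, $y\in\mathcal{R}(U)$, because any such union has at least two elements. So the left-to-right inclusion of (a) is false, and with it the forward direction of your displayed equivalence --- which is precisely the direction needed to transfer a witness $z\in\mathcal{R}(P\parallel U)$ with $z\cup M\in\mathcal{F}_{p,q}$ to one for $P'\parallel U$.

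What is missing is the refinement step that is the actual core of the paper's proof of the parallel case: given such a $z$, one must replace it by a \emph{finer} multi-set that still combines with $M$ inside $\mathcal{F}_{p,q}$ and does split between $P$ and $U$. This is possible because a path labeled by a parallel poset begins with a fork and ends with a join, so the endpoint pairs of its parallel sub-paths form an element of $K_{p_i,q_i}$, and $\mathcal{F}_{p,q}$ is by construction closed under replacing a pair by an element of its $K$-set; iterating, one may assume the decomposition underlying $z$ is the maximal parallel factorization of $P\parallel U$, whose components do lie entirely inside $P$ or inside $U$ --- your ``top-level factor'' observation, which is correct, but only for the maximal factorization, not for arbitrary groupings of its components. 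Once this refinement is in place, the rest of your argument (apply $P\sim_{\mathcal{A}}P'$ with auxiliary multi-set $y\cup M$, then recombine) goes through and coincides with the paper's. As it stands, though, the false identity (a) is standing in for exactly the step where all the work lies, so the parallel case has a genuine gap.
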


\begin{proof}
  First we prove that if $P\sim_{\mathcal{A}}P'$ then $LPR\sim_{\mathcal{A}}LP'R$ for all $L,R\in SP(A)$.
  Let $r\in\mathop\bigcup_{x\in\mathcal{R}(LPR)} \quotientparallel{\mathcal{F}_{p,q}}{x}$.
  If $L=R=\epsilon$ the conclusion is trivially reached.
  Otherwise, $LPR$ is a sequential poset of $SP^+(A)$.
  Assume $r=\{(r_1,s_1),\dots,(r_k,s_k)\}$. By definition of $r$, there exists a path $\gamma:p' \mathop{\Longrightarrow}\limits_{\mathcal{A}}^{LPR} q'$ such that $r\cup\{(p',q')\}\in \mathcal{F}_{p,q}$. By definition of $\mathcal{F}_{p,q}$ there exists a path $\gamma_i:r_i \mathop{\Longrightarrow}\limits_{\mathcal{A}}^{S_i} s_i$ for some $S_i\in SP^+(A)$ and for all $i\in[k]$, and the paths $(\gamma_i:r_i \mathop{\Longrightarrow}\limits_{\mathcal{A}}^{S_i} s_i)_{i\in[k]}$ and $\gamma:p' \mathop{\Longrightarrow}\limits_{\mathcal{A}}^{LPR} q'$ can be used to compose a path $p \mathop{\Longrightarrow}\limits_{\mathcal{A}}^{LPR\parallel S_1\parallel\dots\parallel S_k} q$. The path $\gamma:p' \mathop{\Longrightarrow}\limits_{\mathcal{A}}^{LPR} q'$ can be decomposed into $\gamma:p' \mathop{\Longrightarrow}\limits_{\mathcal{A}}^{L} t_1 \mathop{\Longrightarrow}\limits_{\mathcal{A}}^{P} t_2 \mathop{\Longrightarrow}\limits_{\mathcal{A}}^{R} q'$ for some $t_1,t_2\in Q$. As $P\sim_{\mathcal{A}}P'$ by Lemma~\ref{lem:congChemin} we also have $t_1 \mathop{\Longrightarrow}\limits_{\mathcal{A}}^{P'} t_2$, thus there exists $\gamma':p' \mathop{\Longrightarrow}\limits_{\mathcal{A}}^{L} t_1 \mathop{\Longrightarrow}\limits_{\mathcal{A}}^{P'} t_2 \mathop{\Longrightarrow}\limits_{\mathcal{A}}^{R} q'$, which can be used in parallel with the paths $(\gamma_i:r_i \mathop{\Longrightarrow}\limits_{\mathcal{A}}^{S_i} s_i)_{i\in[k]}$ to build a path  $p \mathop{\Longrightarrow}\limits_{\mathcal{A}}^{LP'R\parallel S_1\parallel\dots\parallel S_k} q$. Thus, $r\in \mathop\bigcup_{x\in\mathcal{R}(LP'R)} \quotientparallel{\mathcal{F}_{p,q}}{x}$.
  
We now show that if $P\sim_{\mathcal{A}}P'$ then $P\parallel P''\sim_{\mathcal{A}}P'\parallel P''$ for all $P''\in SP(A)$. The case $P''=\epsilon$ is a triviality, so we assume that $P''\not=\epsilon$. Let $r=\{(r_1,s_1),\dots,(r_k,s_k)\}\in \mathop\bigcup_{x\in \mathcal{R}(P\parallel P'')} \quotientparallel{\mathcal{F}_{p,q}}{x}$. There exist a decomposition $P\parallel P''=X_1\parallel\dots\parallel X_n$ of $P\parallel P''$, $(p_i,q_i)_{i\in[n]}$,  $S_i\in SP^+(A)$ for all $i\in[k]$, paths $(p_i\mathop{\Longrightarrow}\limits_{\mathcal{A}}^{X_i} q_i)_{i\in[n]}$  and $(r_i\mathop{\Longrightarrow}\limits_{\mathcal{A}}^{S_i} s_i)_{i\in[k]}$,
 that can be composed to form a path $p  \mathop{\Longrightarrow}\limits_{\mathcal{A}}^{X_1\parallel\dots\parallel X_n\parallel S_1\parallel\dots\parallel S_k} q$. If $X_1\parallel\dots\parallel X_n$ is not a maximal parallel factorization of $P\parallel P''$ then the paths $(p_i\mathop{\Longrightarrow}\limits_{\mathcal{A}}^{X_i} q_i)_{i\in[n]}$ can be replaced by $(p'_i\mathop{\Longrightarrow}\limits_{\mathcal{A}}^{X'_i} q'_i)_{i\in[n']}$ where $X'_1\parallel\dots\parallel X'_{n'}$ is a maximal parallel factorization of $P\parallel P''$, such that the paths $(p'_i\mathop{\Longrightarrow}\limits_{\mathcal{A}}^{X'_i} q'_i)_{i\in[n']}$ and the paths $(r_i\mathop{\Longrightarrow}\limits_{\mathcal{A}}^{S_i} s_i)_{i\in[k]}$ can be composed to form a path $p  \mathop{\Longrightarrow}\limits_{\mathcal{A}}^{X'_1\parallel\dots\parallel X'_{n'}\parallel S_1\parallel\dots\parallel S_k} q$. Since $X'_1\parallel\dots\parallel X'_{n'}$ is a maximal parallel factorization of $P\parallel P''$ there exists a partition $(I,J)$ of $[n']$ such that $P=\parallel_{i\in I} X'_i$ and $P''=\parallel_{j\in J} X'_j$. As $P\sim_{\mathcal{A}}P'$ we have
$$\{(p'_j,q'_j):j\in J\}\cup\{(r_i,s_i) : i\in[k]\}\in (\quotientparallel{\mathcal{F}_{p,q}}{\{(p'_i,q'_i):i\in I\}})\cap (\quotientparallel{\mathcal{F}_{p,q}}{x})$$
for some $x\in\mathcal{R}(P')$, 
so
$r\in \quotientparallel{(\quotientparallel{\mathcal{F}_{p,q}}{x})}{\{(p'_j,q'_j):j\in J\}}$, ie. $r\in \quotientparallel{\mathcal{F}_{p,q}}{y}$ with $y=x\cup \{(p'_j,q'_j):j\in J\}$.
  So $\sim_{\mathcal{A}}$ is a congruence of sp-algebra. 
\end{proof}

Let $\varphi_{\sim_{\mathcal{A}}}:SP^+(A)\to SP^+(A)/\mathord\sim_\mathcal{A}$ the morphism which associates to each poset $P\in SP^+(A)$ its equivalence class in $SP^+(A)/\mathord\sim_\mathcal{A}$. Then $\varphi_{\sim_{\mathcal{A}}}$ recognized $L$, since $L(\mathcal{A})=\varphi^{-1}(X)$ where
$$
X=\{\varphi_{\sim_{\mathcal{A}}}(P) : \emptyset\in\mathop\bigcup_{x\in \mathcal{R}(P)}\quotientparallel{\mathcal{F}_{i,f}}{x}\text{ for some }(i,f)\in I\times F\}
$$
Observe that $X$ may be infinite.

\begin{exa}
  Let $\mathcal{A}$ be the automaton of Example~\ref{ex:abcAuto}. Then $SP^+(A)/\mathord{\sim_{\mathcal{A}}}$ is isomorphic to $SP^+(A)/\mathord{\sim_{L(\mathcal{A})}}$ (see Example~\ref{ex:abcAlgebra} for $SP^+(A)/\mathord{\sim_{L(\mathcal{A})}}$).
\end{exa}

Observe that $\sim_{\mathcal{A}}$ may have an infinite index (take, for example, any automaton of language $(a\parallel b)^\oplus$ - see Example~\ref{ex:anbnAuto}).

\begin{lem}
  \label{lem:finiteIndex}
  The number of equivalence classes for $\sim_{\mathcal{A}}$ containing a sequential poset (of $SP^+(A)$) is finite.
\end{lem}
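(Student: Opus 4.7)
The plan is to observe that for a sequential poset $P$, the set $\mathcal{R}(P)$ takes a very restricted form, which reduces the data determining the $\sim_\mathcal{A}$-class of $P$ to a subset of $Q^2$, and there are only finitely many such subsets.

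More precisely, I would first note that if $P\in SP^+(A)$ is sequential (i.e. $|P|=1$ or $P$ has a sequential factorization), then $P$ admits no nontrivial parallel factorization: the only decomposition $P=P_1\parallel\dots\parallel P_n$ with $P_i\in SP^+(A)$ is the trivial one $n=1$, $P_1=P$. Consequently, directly from the definition of $\mathcal{R}$, one has
$$\mathcal{R}(P)\;=\;\bigl\{\{(r,s)\}\;:\;r\mathop{\Longrightarrow}\limits_{\mathcal{A}}^{P}s\bigr\}.$$
Hence $\mathcal{R}(P)$ is completely determined by the relation $R(P):=\{(r,s)\in Q^2 : r\mathop{\Longrightarrow}\limits_{\mathcal{A}}^{P}s\}\subseteq Q^2$.

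Next I would observe that the congruence $\sim_\mathcal{A}$ depends on $P$ only through $\mathcal{R}(P)$: the condition in equation~(\ref{eq:cong}) is written purely in terms of $\bigcup_{x\in\mathcal{R}(P)}\quotientparallel{\mathcal{F}_{p,q}}{x}$ for each pair $(p,q)$. Therefore, whenever $P$ and $P'$ are two sequential posets with $R(P)=R(P')$, we have $\mathcal{R}(P)=\mathcal{R}(P')$, and a fortiori $P\sim_\mathcal{A}P'$. In other words, the map sending a sequential poset $P$ to $R(P)\in\mathcal{P}(Q^2)$ factors through the quotient $SP^+(A)/\mathord{\sim_\mathcal{A}}$, and its image on sequential posets is a set of cardinality at most $2^{|Q|^2}$.

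Thus the number of $\sim_\mathcal{A}$-equivalence classes containing a sequential poset is bounded by the number of subsets of $Q^2$, which is finite since $Q$ is finite. There is no real obstacle here; the only point to check carefully is the reading of the definitions, namely that ``sequential poset'' precludes any nontrivial parallel factorization and hence collapses $\mathcal{R}(P)$ to a set of singletons indexed by pairs of states connected by a $P$-labeled path.
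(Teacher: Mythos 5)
Your proof is correct and takes essentially the same route as the paper: both reduce the $\sim_{\mathcal{A}}$-class of a sequential poset to the finite datum $K_P=\{(p,q)\in Q^2 : p \mathop{\Longrightarrow}\limits_{\mathcal{A}}^{P} q\}$, so that the number of such classes is bounded by $2^{|Q|^2}$. The only difference is presentational: you note directly that a sequential poset admits no nontrivial parallel factorization, hence $\mathcal{R}(P)$ consists of singletons and $\sim_{\mathcal{A}}$ (being defined solely through $\mathcal{R}(P)$ and the fixed sets $\mathcal{F}_{p,q}$) factors through $K_P$, whereas the paper reaches the same implication by contradiction, replacing a path labeled $P_i$ by one labeled $P_j$ inside a larger parallel path.
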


\begin{proof}
  By contradiction, assume that there exists an infinite sequence $(P_i)_{i\in\mathbb{N}}$ of sequential posets such that for all $i,j\in\mathbb{N}$, if $i\not=j$ then $P_i\not\sim_{\mathcal{A}}P_j$.
  To each equivalence class $[P_i]_{\sim_\mathcal{A}}$  we associate the set $K_{P_i}=\{(p,q)\in Q^2 : p \mathop{\Longrightarrow}\limits_{\mathcal{A}}^{P_{i}} q\}$.
  Clearly there exist a finite number of such sets, so there exist $i,j$, with $i\not=j$ such that $K_{P_i}=K_{P_j}$.
  For all $p,q\in Q$, $S\in SP^+(A)$, let $X_{p,q}(R)=\mathop\bigcup_{x\in \mathcal{R}(S)}\quotientparallel{\mathcal{F}_{p,q}}{x}$.
  As $P_i\not\sim_{\mathcal{A}}P_j$ there exist $p,q\in Q$ and wlog. $r=\{(r_1,s_1),\dots,(r_n,s_n)\}\in X_{p,q}(P_i)-X_{p,q}(P_j)$.
  As $r\in X_{p,q}(P_i)$ there exist paths $\gamma_i:r_i \mathop{\Longrightarrow}\limits_{\mathcal{A}}^{S_i} s_i$ for some $S_i\in SP^+(A)$ and for all $i\in[n]$, and $\gamma:p' \mathop{\Longrightarrow}\limits_{\mathcal{A}}^{P_i} q'$, such that the paths $(\gamma_i)_{i\in[n]}$ and $\gamma$ can be used to compose a path $\delta:p \mathop{\Longrightarrow}\limits_{\mathcal{A}}^{P_i\parallel S_1\parallel\dots\parallel S_n} q$. In $\delta$, $\gamma$ can be replaced by $\gamma':p'\mathop{\Longrightarrow}\limits_{\mathcal{A}}^{P_j} q'$ to form a path $p \mathop{\Longrightarrow}\limits_{\mathcal{A}}^{P_j\parallel S_1\parallel\dots\parallel S_n} q$. Thus $r\in X_{p,q}(P_j)$, which is a contradiction.
\end{proof}

Let $X\subseteq SP^+(A)$. We denote the set of \emph{sequential posets} of $X$ by
\begin{align*}
  Seq(X) &= \{P\in X : P\in A\text{ or }\exists P_1,P_2\in SP^+(A)\text{ such that }P=P_1P_2\}
\end{align*}
Denote also by $L_{p,q}$ the set of non-empty labels of paths from state $p$ to state $q$ in $\mathcal{A}$.
We are now going to prove that $\varphi_{\sim_\mathcal{A}}(L(\mathcal{A}))$ is a $\parallel$-rational language of $(\varphi_{\sim_\mathcal{A}}(Seq(SP^+(A))))^\circledast$. If $\sim_{\mathcal{A}}$ has a finite index this is a triviality, so assume that it has an infinite number of equivalence classes, and recall that $\varphi_{\sim_\mathcal{A}}(Seq(L(\mathcal{A})))$ is finite by Lemma~\ref{lem:finiteIndex}. 

We have $$L_{p,q}=\mathop\bigcup_{X\in \mathcal{F}_{p,q}}\mathop\parallel_{(r,s)\in X} Seq(L_{r,s}) \hskip1cm \text{ and } \hskip1cm \varphi_{\sim_\mathcal{A}}(L_{p,q})=\mathop\bigcup_{X\in \mathcal{F}_{p,q}}\mathop\parallel_{(r,s)\in X} \varphi_{\sim_\mathcal{A}}(Seq(L_{r,s}))$$
So, it suffices to show that $\mathcal{F}_{p,q}$ is a $\parallel$-rational set of $(Q\times Q)^\oplus$ in order to prove that $\varphi_{\sim_\mathcal{A}}(L_{p,q})$ is a $\parallel$-rational set of elements of $\varphi_{\sim_\mathcal{A}}(Seq(SP^+(A)))$, and thus so is $\varphi_{\sim_\mathcal{A}}(L(\mathcal{A}))=\cup_{(i,f)\in I\times F}\varphi_{\sim_\mathcal{A}}(L_{i,f})$.

\begin{lem}
  \label{lem:FqpParallelRat}
  $\mathcal{F}_{p,q}$ is a $\parallel$-rational set of $(Q\times Q)^\oplus$.
\end{lem}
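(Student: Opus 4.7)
The plan is to mimic the technique used in the proof of Proposition~\ref{prop:parallelRational}: express $\mathcal{F}_{p,q}$ as the set of multi-sets generated by a commutative context-free grammar, deduce semi-linearity via Parikh's theorem, and finally invoke Theorem~\ref{th:ratSemLin} to conclude $\parallel$-rationality.

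Concretely, I would introduce a fresh alphabet $B=\{t_{r,s} : (r,s)\in Q\times Q\}$ and a grammar $G$ whose non-terminals are the symbols $N_{r,s}$ for pairs $(r,s)\in Q\times Q$ such that $r\mathop{\Longrightarrow}\limits_{\mathcal{A}}^{P} s$ for some $P\in SP^+(A)$, with start symbol $N_{p,q}$ and two families of productions: the axiom-like rule $N_{r,s}\to t_{r,s}$ for every non-terminal, and the rule $N_{r,s}\to \parallel_{i\in[n]} N_{r_i,s_i}$ for every $\{(r_1,s_1),\dots,(r_n,s_n)\}\in K_{r,s}$. By the very definition of $K_{r,s}$, every pair occurring on a right-hand side is connected by a path in $\mathcal{A}$, hence is a valid non-terminal.

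I would then prove, by two inductions, that the multi-sets over $B$ derivable from $N_{p,q}$ coincide, under the identification $t_{r,s}\leftrightarrow (r,s)$, with $\mathcal{F}_{p,q}$. For the inclusion $\mathcal{F}_{p,q}\subseteq L(G)$, one inducts on the level $i$ with $M\in\mathcal{F}^i_{p,q}$: the base case is the one-step derivation $N_{p,q}\Rightarrow t_{p,q}$, and in the inductive step $M=M'-(r,s)\cup X$ one takes a derivation tree for $M'$ and replaces one leaf labeled $t_{r,s}$ by the two-step subtree $N_{r,s}\Rightarrow \parallel_j N_{r'_j,s'_j}\Rightarrow \parallel_j t_{r'_j,s'_j}$ corresponding to $X\in K_{r,s}$. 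For the reverse inclusion, one inducts on the derivation tree, combining the sub-derivations produced by the inductive hypothesis via the replacement rule defining $\mathcal{F}$.

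Finally, because $\parallel$ is commutative, the set of multi-sets derivable from $N_{p,q}$ in $G$ is exactly the Parikh image of the context-free language of $G$ over $B$, which is semi-linear by Parikh's theorem (as already exploited in the proof of Proposition~\ref{prop:parallelRational}). By Theorem~\ref{th:ratSemLin} this image is $\parallel$-rational in $B^\circledast$, and transporting the conclusion back along $t_{r,s}\mapsto (r,s)$ shows that $\mathcal{F}_{p,q}$ is a $\parallel$-rational subset of $(Q\times Q)^\oplus$ (non-emptiness of the multi-sets is automatic since fork and join transitions have arity at least two and the initial multi-set $\{(p,q)\}$ is non-empty). The main subtlety I anticipate is the multi-set bookkeeping in the inductive equivalence: when $M'$ contains several copies of $(r,s)$, the derivation tree has several leaves labeled $t_{r,s}$, and one must expand exactly one in order to mirror the single-occurrence replacement used to build $\mathcal{F}^{i+1}_{p,q}$.
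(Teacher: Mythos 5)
Your proof is correct, but it takes a different route from the paper's. The paper does not build a grammar for $\mathcal{F}_{p,q}$ directly: it constructs an auxiliary \emph{branching automaton} $\mathcal{B}$ over the alphabet $Q\times Q$, made of two copies $Q_1,Q_2$ of the states of $\mathcal{A}$, with the fork transitions of $\mathcal{A}$ placed on $Q_1$, its join transitions on $Q_2$, and a sequential transition labeled $(r,s)$ from $r\in Q_1$ to $s\in Q_2$ whenever there is a non-empty path from $r$ to $s$ in $\mathcal{A}$ (initial state $p$ in $Q_1$, final state $q$ in $Q_2$); it then observes that $L(\mathcal{B})=\mathcal{F}_{p,q}$, that all labels of accepted posets are parallel, and concludes by Theorem~\ref{th:KleeneBranching} and Proposition~\ref{prop:parallelRational}. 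Since Proposition~\ref{prop:parallelRational} is itself proved by turning the path equations into a commutative context-free grammar and invoking Parikh's theorem plus Theorem~\ref{th:ratSemLin}, your argument essentially inlines that engine, working straight from the inductive definition of $\mathcal{F}_{p,q}$ via the sets $K_{r,s}$ (which are finite, as there are finitely many fork and join transitions, so your grammar is finite). What the paper's detour buys is that correctness of the construction is almost immediate and the $\parallel$-rationality is obtained by quoting an already-established proposition; what your direct construction buys is that you avoid introducing $\mathcal{B}$ at the cost of the explicit two-way induction identifying $L(G)$ with $\mathcal{F}_{p,q}$ (for the right-to-left inclusion you implicitly use, and should perhaps state, the substitution closure of $\mathcal{F}$: if $M\in\mathcal{F}_{p,q}$ contains an occurrence of $(r,s)$ and $M'\in\mathcal{F}_{r,s}$, then replacing that occurrence by $M'$ stays in $\mathcal{F}_{p,q}$). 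Both constructions are effective for the same reason, namely Proposition~\ref{prop:findPath}, which is needed to decide which pairs $(r,s)$ admit a path and hence which sequential transitions, respectively nonterminals, to include; and both handle the degenerate case where no path joins $p$ to $q$, in which $\mathcal{F}_{p,q}=\emptyset$ is trivially $\parallel$-rational.
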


\begin{proof}
  First observe that $\emptyset\not\in\mathcal{F}_{p,q}$.
  Build an automaton $\mathcal{B}$ whose alphabet is $Q\times Q$ as follows.
  Take two copies $Q_1$ and $Q_2$ of the states of $\mathcal{A}$.
  For each fork transition $(r,\{r_1,\dots,r_n\})$ of $\mathcal{A}$, add the same fork transition in $\mathcal{Q}_1$.
  For each join transition $(\{s_1,\dots,s_n\},s)$ of $\mathcal{A}$, add the same join transition in $\mathcal{Q}_2$.
  For each pair of states $(r,s)$ such that there is a non-empty path from $r$ to $s$ in $\mathcal{A}$, add a sequential transition from $r$ in $\mathcal{Q}_1$ to $s$ in $\mathcal{Q}_2$, labeled by $(r,s)$.
  The initial state is $p$ in $\mathcal{Q}_1$ and the final state is $q$ in $\mathcal{Q}_2$.
  The language of $\mathcal{B}$ is $\mathcal{F}_{p,q}$.
  There is no path in $\mathcal{B}$ with a sequential transition or a join transition followed by a sequential transition or by a fork transition.
  As a consequence of Theorem~\ref{th:KleeneBranching} and Proposition~\ref{prop:parallelRational}, $L(\mathcal{B})$ is $\parallel$-rational.
\end{proof}

Observe that the construction given in the proof of Lemma~\ref{lem:FqpParallelRat} is effective as a consequence of Proposition~\ref{prop:findPath}.

Define the equivalence relation $\sim_\emptyset^{SP^+(A)}$ over the elements of $SP^+(A)$ by 
$P\sim_\emptyset^{SP^+(A)} P'$ iff 
$\{(p,q)\in Q^2: p \mathop{\Longrightarrow}\limits_{\mathcal{A}}^{P} q\} = \{(p,q)\in Q^2: p \mathop{\Longrightarrow}\limits_{\mathcal{A}}^{P'} q\}$, or equivalently
 $P\sim_\emptyset^{SP^+(A)} P'$ iff 
 $
 \{(p,q)\in Q^2 : \emptyset\in\mathop\bigcup_{x\in \mathcal{R}(P)}\quotientparallel{\mathcal{F}_{p,q}}{x}\}
 =
 \{(p,q)\in Q^2 : \emptyset\in\mathop\bigcup_{x\in \mathcal{R}(P')}\quotientparallel{\mathcal{F}_{p,q}}{x}\}
 $.
Note that $\sim_\emptyset^{SP^+(A)}$ has finite index, since $Q^2$ is finite.
When $D\in\mathcal{P}(Q^2)$, denote by  $\Delta_D^{SP^+(A)}=\{P\in SP^+(A) : p \mathop{\Longrightarrow}\limits_{\mathcal{A}}^{P} q\text{ iff }(p,q)\in D\}$.

\begin{lem}
  \label{lem:equivEmptyRecLpq}
  Let $\varphi:SP^+(A)\to S$ be a morphism of sp-algebras.
  The following conditions are equivalent:
  \begin{enumerate}
  \item $P\sim_\emptyset^{SP^+(A)} P'$ for all $P,P'\in\varphi^{-1}(s)$, $s\in S$;
  \item $\varphi$ recognizes $L_{p,q}$ for all $(p,q)\in Q^2$.
  \end{enumerate}
\end{lem}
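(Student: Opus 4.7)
The plan is to reduce both directions to the straightforward observation that, for any $P\in SP^+(A)$, membership in $L_{p,q}$ is by definition equivalent to the existence of a path $p\Longrightarrow_{\mathcal{A}}^{P}q$. Consequently the relation $\sim_\emptyset^{SP^+(A)}$ can be reformulated as: $P\sim_\emptyset^{SP^+(A)}P'$ iff for every $(p,q)\in Q^2$, $P\in L_{p,q}$ iff $P'\in L_{p,q}$. This reformulation is what makes the equivalence essentially bookkeeping.

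For the implication $(2)\Rightarrow(1)$, I would fix $s\in S$ and two posets $P,P'\in\varphi^{-1}(s)$, so $\varphi(P)=\varphi(P')$. For any $(p,q)\in Q^2$, the hypothesis that $\varphi$ recognizes $L_{p,q}$ means $L_{p,q}=\varphi^{-1}(\varphi(L_{p,q}))$, and therefore $P\in L_{p,q}\iff\varphi(P)\in\varphi(L_{p,q})\iff\varphi(P')\in\varphi(L_{p,q})\iff P'\in L_{p,q}$. By the reformulation above this yields $P\sim_\emptyset^{SP^+(A)}P'$.

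For the converse $(1)\Rightarrow(2)$, I would fix $(p,q)\in Q^2$ and verify $L_{p,q}=\varphi^{-1}(\varphi(L_{p,q}))$. The inclusion $L_{p,q}\subseteq\varphi^{-1}(\varphi(L_{p,q}))$ is automatic. For the reverse inclusion, take $P\in\varphi^{-1}(\varphi(L_{p,q}))$: then there exists $P'\in L_{p,q}$ with $\varphi(P)=\varphi(P')$, so condition~(1) applied to $s=\varphi(P)$ yields $P\sim_\emptyset^{SP^+(A)}P'$. Since $P'\in L_{p,q}$ gives $p\Longrightarrow_{\mathcal{A}}^{P'}q$, the defining equality of $\sim_\emptyset^{SP^+(A)}$ forces $p\Longrightarrow_{\mathcal{A}}^{P}q$, i.e.\ $P\in L_{p,q}$. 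This holds for every $(p,q)\in Q^2$, giving~(2).

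There is no genuine obstacle here: the lemma is a direct consequence of the definition of recognition ($L=\varphi^{-1}\varphi(L)$) together with the fact that, when $P$ is nonempty, ``$P\in L_{p,q}$'' and ``$p\Longrightarrow_{\mathcal{A}}^{P}q$'' are literally the same statement. The only mild care needed is to remember that all posets considered lie in $SP^+(A)$, so that the non-emptiness condition attached to $L_{p,q}$ is automatically met and does not interfere with the bi-implications.
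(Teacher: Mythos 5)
Your proof is correct and amounts to the same unwinding of definitions as the paper's: both rest on the observations that the $\sim_\emptyset^{SP^+(A)}$-class of a nonempty poset $P$ is determined by the set of pairs $(p,q)$ with $P\in L_{p,q}$, and that recognition means $L_{p,q}$ is a union of fibers of $\varphi$. The only cosmetic difference is that the paper proves $(1)\Rightarrow(2)$ by writing $L_{p,q}$ as a union of the classes $\Delta_D^{SP^+(A)}$ with $(p,q)\in D$ and proves $(2)\Rightarrow(1)$ by contradiction, whereas you argue both directions directly elementwise.
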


\begin{proof}
  If~(1) is true then $\varphi$ recognizes $L_{p,q}$ for all $(p,q)\in Q^2$ since $$L_{p,q}=\mathop\bigcup_{D\in\mathcal{P}(Q^2)\atop(p,q)\in D}\varphi^{-1}(\varphi(\Delta^{SP^+(A)}_D))$$
  Conversely assume~(2) and, by contradiction, that~(1) is false, ie. there exist $s\in S$, $P,P'\in\varphi^{-1}(s)$ such that $P\in\Delta^{SP^+(A)}_D$, $P'\in\Delta^{SP^+(A)}_{D'}$ with $D\not=D'$. Then $\varphi$ can not recognize $L_{p,q}$ with  $(p,q)\in D$ and $(p,q)\not\in D'$, or the converse.
\end{proof}

Let $\varphi:SP^+(A)\to S$ be a morphism of sp-algebras such that $P\sim_\emptyset^{SP^+(A)} P'$ for all $P,P'\in\varphi^{-1}(s)$, $s\in S$. We define the equivalence relation $\sim_\emptyset^{S}$ over the elements of $S$ by $s\sim_\emptyset^{S} s'$ iff there exist $P\in \varphi(s)$, $P'\in \varphi(s')$ such that $P\sim_\emptyset^{SP^+(A)} P'$.
We have $P\sim_\emptyset^{SP^+(A)} P'$ iff $\varphi(P)\sim_\emptyset^{S} \varphi(P')$. 
If $\varphi$ is surjective then $\sim^S_\emptyset$ has finite index, and each equivalence class of $\sim_\emptyset^S$ can be denoted $\Delta_D^S=\{\varphi(P)\in S : P\in\Delta_D^{SP^+(A)}\}$ for some $D\in\mathcal{P}(Q^2)$. Furthermore, $\varphi(\Delta_D^{SP^+(A)})=\Delta_D^{S}$ and $\varphi^{-1}(\Delta_D^{S})=\Delta_D^{SP^+(A)}$.

\begin{lem}
  \label{lem:simEmptyParallelRational}
  Let $\varphi:SP^+(A)\to S$ be a surjective morphism of sp-algebras such that 
  \begin{itemize}
  \item $\varphi$ recognizes $L_{p,q}$ for all $(p,q)\in Q^2$,
  \item $\varphi(L_{p,q})$ is a $\parallel$-rational of $S$ for all $(p,q)\in Q^2$.
  \end{itemize}
  Each equivalence class $\Delta_D^S$ of $\sim_\emptyset^S$ is a $\parallel$-rational set of $S$.
\end{lem}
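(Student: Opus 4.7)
The plan is to express each equivalence class $\Delta_D^S$ as a Boolean combination of the sets $\varphi(L_{p,q})$ and then invoke Theorem~\ref{th:schutz} on the commutative monoid $(S,\parallel)$, adjoining a formal $\parallel$-identity to $S$ if necessary (which does not alter the class of $\parallel$-rational subsets of $S$, since none of them need to contain the new identity).

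The first step is purely set-theoretic. Unpacking the definition of $\Delta_D^{SP^+(A)}$, a poset $P$ belongs to this set exactly when $P\in L_{p,q}$ for every $(p,q)\in D$ and $P\notin L_{p,q}$ for every $(p,q)\notin D$, so
$$\Delta_D^{SP^+(A)}\;=\;\Bigl(\bigcap_{(p,q)\in D} L_{p,q}\Bigr)\;\setminus\;\Bigl(\bigcup_{(p,q)\notin D} L_{p,q}\Bigr).$$
The first hypothesis of the lemma says that every $L_{p,q}$ is $\varphi$-saturated (a union of $\varphi$-fibres), hence so are finite intersections, unions and set differences of the $L_{p,q}$'s; in particular $\Delta_D^{SP^+(A)}=\varphi^{-1}(\Delta_D^S)$. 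Since $\varphi$ is surjective and direct image commutes with intersection, union and set difference on $\varphi$-saturated sets, I may push $\varphi$ through the above identity to obtain
$$\Delta_D^S\;=\;\Bigl(\bigcap_{(p,q)\in D}\varphi(L_{p,q})\Bigr)\;\setminus\;\Bigl(\bigcup_{(p,q)\notin D}\varphi(L_{p,q})\Bigr).$$

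The second step is to read off $\parallel$-rationality from this Boolean expression. By the second hypothesis, each $\varphi(L_{p,q})$ is a $\parallel$-rational subset of $S$. Closure of the class of $\parallel$-rational subsets of $S$ under finite union is immediate from the definition, and Theorem~\ref{th:schutz} provides closure under finite intersection and set difference. The displayed expression for $\Delta_D^S$ is built from finitely many $\parallel$-rational sets using only these three operations, so $\Delta_D^S$ is itself $\parallel$-rational, as required. There is no real obstacle here beyond the set-theoretic bookkeeping in the first step; the entire algebraic content of the proof is carried by Theorem~\ref{th:schutz}.
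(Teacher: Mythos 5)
Your proof is correct and follows essentially the same route as the paper: write $\Delta_D^{SP^+(A)}$ as $\bigcap_{(p,q)\in D}L_{p,q}-\bigcup_{(p,q)\notin D}L_{p,q}$, use the fact that each $L_{p,q}$ is $\varphi$-saturated to push the surjection $\varphi$ through the Boolean combination, and conclude by the closure of $\parallel$-rational sets under Boolean operations (Theorem~\ref{th:schutz}). Your extra remark about adjoining a $\parallel$-identity so that $(S,\parallel)$ becomes a monoid is a harmless refinement of a point the paper leaves implicit.
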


\begin{proof}
  We have $\varphi^{-1}(\Delta_D^S)=\Delta_D^{SP^+(A)}=\cap_{(p,q)\in D}L_{p,q}-\cup_{(p,q)\not\in D}L_{p,q}$; as $\varphi(L_{p,q})$ is $\parallel$-rational and $\varphi$ recognizes $L_{p,q}$ for all $(p,q)\in Q^2$, and as by Theorem~\ref{th:schutz} the class of $\parallel$-rational sets is closed under finite boolean operations, $\Delta_D^S=\varphi(\cap_{(p,q)\in D}L_{p,q}-\cup_{(p,q)\not\in D}L_{p,q})=\cap_{(p,q)\in D}\varphi(L_{p,q})-\cup_{(p,q)\not\in D}\varphi(L_{p,q})$ is $\parallel$-rational.
\end{proof}

It is clear that the morphism of sp-algebras $\varphi_{\sim_{\mathcal{A}}}:SP^+(A)\to SP^+(A)/\mathord\sim_\mathcal{A}$ verifies the conditions of Lemma~\ref{lem:simEmptyParallelRational}.

\medskip

When $S$ is an sp-algebra, define also the equivalence relation $\sim_\text{seq}^S$ on the elements of $S$ by $s\sim_\text{seq}^Ss'$ iff $xs=xs'$ and $sx=s'x$ for all $x\in S$.
The relation between $\sim_\emptyset^{SP^+(A)/\mathord\sim_\mathcal{A}}$ and $\sim_\text{seq}^{SP^+(A)/\mathord\sim_\mathcal{A}}$ is given by the following lemma:
\begin{lem}
  \label{lem:simEmptyImpliqueSimSeq}
  If $s\sim_\emptyset^{SP^+(A)/\mathord\sim_\mathcal{A}}s'$ then $s\sim_\text{seq}^{SP^+(A)/\mathord\sim_\mathcal{A}}s'$.
\end{lem}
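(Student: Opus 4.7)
My plan is to translate the conclusion down to representatives in $SP^+(A)$ and then exploit the fact that a sequential poset admits no nontrivial parallel factorization. Set $S = SP^+(A)/\mathord\sim_\mathcal{A}$ and $\varphi = \varphi_{\sim_\mathcal{A}}$. By Lemma~\ref{lem:congChemin}, $\sim_\mathcal{A}$ refines $\sim_\emptyset^{SP^+(A)}$, so any two preimages of a single $s\in S$ are already mutually $\sim_\emptyset^{SP^+(A)}$-equivalent. Hence if $s\sim_\emptyset^S s'$ then \emph{every} $P\in\varphi^{-1}(s)$ and \emph{every} $P'\in\varphi^{-1}(s')$ satisfy $P\sim_\emptyset^{SP^+(A)} P'$, that is, $p\mathop{\Longrightarrow}\limits_{\mathcal{A}}^{P} q \iff p\mathop{\Longrightarrow}\limits_{\mathcal{A}}^{P'} q$ for all $p,q\in Q$.

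Given this, to establish $s\sim_\text{seq}^S s'$ it suffices to prove $XP\sim_\mathcal{A} XP'$ and $PX\sim_\mathcal{A} P'X$ for every $X\in SP^+(A)$ and some (equivalently, every) $P\in\varphi^{-1}(s)$, $P'\in\varphi^{-1}(s')$. I will treat the left version; the right one is symmetric. The key observation is that $XP$ is a sequential poset (both $X$ and $P$ being nonempty), so it admits no nontrivial parallel factorization; consequently
$$\mathcal{R}(XP) \;=\; \{\,\{(p,q)\} : p\mathop{\Longrightarrow}\limits_{\mathcal{A}}^{XP} q\,\},$$
and similarly for $XP'$. Cutting any $XP$-labeled path at the state separating its $X$-part from its $P$-part yields $p\mathop{\Longrightarrow}\limits_{\mathcal{A}}^{XP} q$ iff there exists $r\in Q$ with $p\mathop{\Longrightarrow}\limits_{\mathcal{A}}^{X} r$ and $r\mathop{\Longrightarrow}\limits_{\mathcal{A}}^{P} q$; applying the equivalence above to the second arrow, this is the same as $p\mathop{\Longrightarrow}\limits_{\mathcal{A}}^{XP'} q$. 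Therefore $\mathcal{R}(XP) = \mathcal{R}(XP')$, both sides of equation~(\ref{eq:cong}) coincide for $XP$ and $XP'$, and $XP\sim_\mathcal{A} XP'$ follows.

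I do not foresee any real obstacle in this argument. The only mildly subtle point is the promotion from ``there exist representatives'' to ``every pair of representatives'' in the definition of $\sim_\emptyset^S$, which is handled at once through Lemma~\ref{lem:congChemin}. After that step, the sequentiality of $XP$ trivialises $\mathcal{R}(XP)$ down to singleton multi-sets, and the $\sim_\mathcal{A}$-equivalence of $XP$ and $XP'$ reduces to comparing two sets which turn out to be literally equal.
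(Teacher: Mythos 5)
Your proof is correct and follows essentially the same route as the paper: promote the equivalence to all representatives, deduce $p\mathop{\Longrightarrow}\limits_{\mathcal{A}}^{XP} q$ iff $p\mathop{\Longrightarrow}\limits_{\mathcal{A}}^{XP'} q$ by cutting paths at the $X$/$P$ boundary, conclude $\mathcal{R}(XP)=\mathcal{R}(XP')$ hence $XP\sim_\mathcal{A}XP'$, and finish by surjectivity of $\varphi_{\sim_\mathcal{A}}$. Your explicit remark that sequentiality of $XP$ reduces $\mathcal{R}(XP)$ to singleton multi-sets is exactly what justifies the paper's terser ``it follows that $\mathcal{R}(TP)=\mathcal{R}(TP')$''.
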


\begin{proof}
  Let $P\in\varphi_{\sim_{\mathcal{A}}}^{-1}(s)$ and $P'\in\varphi_{\sim_{\mathcal{A}}}^{-1}(s')$.
  We have $P\sim_\emptyset^{SP^+(A)} P'$, then $p \mathop{\Longrightarrow}\limits_{\mathcal{A}}^{P} q$ iff $p \mathop{\Longrightarrow}\limits_{\mathcal{A}}^{P'} q$ for all $p,q\in Q$, thus $p \mathop{\Longrightarrow}\limits_{\mathcal{A}}^{TP} q$ iff $p \mathop{\Longrightarrow}\limits_{\mathcal{A}}^{TP'} q$, for all $T\in SP^+(A)$, $p,q\in Q$.
  It follows that $\mathcal{R}(TP)=\mathcal{R}(TP')$, and thus $TP\sim_\mathcal{A}TP'$, ie. $\varphi_{\sim_\mathcal{A}}(TP)=\varphi_{\sim_\mathcal{A}}(T)\varphi_{\sim_\mathcal{A}}(P)=\varphi_{\sim_\mathcal{A}}(T)\varphi_{\sim_\mathcal{A}}(P')=\varphi_{\sim_\mathcal{A}}(TP')$ for all $T\in SP^+(A)$.
  As $\varphi_{\sim_\mathcal{A}}:SP^+(A)\to SP^+(A)/\mathord{\sim_{\mathcal{A}}}$ is surjective, it follows that $x\varphi_{\sim_\mathcal{A}}(P)=x\varphi_{\sim_\mathcal{A}}(P')$ (and, using symmetrical arguments, $\varphi_{\sim_\mathcal{A}}(P)x=\varphi_{\sim_\mathcal{A}}(P')x$) for all $x\in S$, thus $\varphi_{\sim_\mathcal{A}}(P)\sim_\text{seq}^{SP^+(A)}\varphi_{\sim_\mathcal{A}}(P')$.
\end{proof}

\begin{lem}
  \label{lem:equivSeq}
  $\sim_\text{seq}^{SP^+(A)/\mathord{\sim_{\mathcal{A}}}}$ has finite index, and each of its equivalence classes is a $\parallel$-rational set of $SP^+(A)/\mathord{\sim_{\mathcal{A}}}$.
\end{lem}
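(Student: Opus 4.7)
The plan is to exploit the comparison between $\sim_\text{seq}^{SP^+(A)/\mathord\sim_\mathcal{A}}$ and $\sim_\emptyset^{SP^+(A)/\mathord\sim_\mathcal{A}}$ already established in Lemma~\ref{lem:simEmptyImpliqueSimSeq}. That lemma states that $\sim_\emptyset^{SP^+(A)/\mathord\sim_\mathcal{A}}$ refines $\sim_\text{seq}^{SP^+(A)/\mathord\sim_\mathcal{A}}$, so every equivalence class of $\sim_\text{seq}^{SP^+(A)/\mathord\sim_\mathcal{A}}$ is a union of equivalence classes of $\sim_\emptyset^{SP^+(A)/\mathord\sim_\mathcal{A}}$.

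First I would recall that $\sim_\emptyset^{SP^+(A)/\mathord\sim_\mathcal{A}}$ has finite index: its classes are indexed by subsets $D\in\mathcal{P}(Q^2)$, as already observed in the discussion following Lemma~\ref{lem:equivEmptyRecLpq}. Since $\sim_\text{seq}^{SP^+(A)/\mathord\sim_\mathcal{A}}$ is coarser than a finite-index relation, it is itself of finite index, which takes care of the first half of the statement.

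For the second half, I would verify that the morphism $\varphi_{\sim_\mathcal{A}}:SP^+(A)\to SP^+(A)/\mathord\sim_\mathcal{A}$ meets the hypotheses of Lemma~\ref{lem:simEmptyParallelRational}: it is surjective by construction, it recognizes each $L_{p,q}$ (as already noted just before Lemma~\ref{lem:simEmptyImpliqueSimSeq}), and each $\varphi_{\sim_\mathcal{A}}(L_{p,q})$ is $\parallel$-rational as shown through Lemma~\ref{lem:FqpParallelRat} and the subsequent discussion. Hence every class $\Delta_D^{SP^+(A)/\mathord\sim_\mathcal{A}}$ of $\sim_\emptyset^{SP^+(A)/\mathord\sim_\mathcal{A}}$ is a $\parallel$-rational subset of $SP^+(A)/\mathord\sim_\mathcal{A}$.

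Finally, since each class of $\sim_\text{seq}^{SP^+(A)/\mathord\sim_\mathcal{A}}$ is a \emph{finite} union of classes $\Delta_D^{SP^+(A)/\mathord\sim_\mathcal{A}}$ (there being only finitely many subsets $D\subseteq Q^2$), and the class of $\parallel$-rational sets is closed under finite union, each class of $\sim_\text{seq}^{SP^+(A)/\mathord\sim_\mathcal{A}}$ is $\parallel$-rational. There is no real obstacle here; the lemma is essentially a packaging of Lemmas~\ref{lem:simEmptyImpliqueSimSeq} and~\ref{lem:simEmptyParallelRational} together with the finiteness of $\mathcal{P}(Q^2)$ and the closure of $\parallel$-rational sets under finite union (itself a consequence of Theorem~\ref{th:schutz}).
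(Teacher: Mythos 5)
Your proposal is correct and follows essentially the same route as the paper: deduce finite index of $\sim_\text{seq}^{SP^+(A)/\mathord\sim_\mathcal{A}}$ from that of $\sim_\emptyset^{SP^+(A)/\mathord\sim_\mathcal{A}}$ via Lemma~\ref{lem:simEmptyImpliqueSimSeq}, then get $\parallel$-rationality of each class as a finite union of the classes $\Delta_D^{SP^+(A)/\mathord\sim_\mathcal{A}}$, which are $\parallel$-rational by Lemma~\ref{lem:simEmptyParallelRational} applied to $\varphi_{\sim_\mathcal{A}}$. The verification that $\varphi_{\sim_\mathcal{A}}$ satisfies the hypotheses of that lemma is exactly the observation the paper makes just after stating it, so there is nothing missing.
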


\begin{proof}
    As $\sim_\emptyset^{SP^+(A)}$ has finite index, then so is $\sim_\emptyset^{SP^+(A)/\mathord{\sim_{\mathcal{A}}}}$, and thus $\sim_\text{seq}^{SP^+(A)/\mathord\sim_\mathcal{A}}$ as a consequence of Lemma~\ref{lem:simEmptyImpliqueSimSeq}.
  Each equivalence class of $\sim_\text{seq}^{SP^+(A)/\mathord\sim_\mathcal{A}}$ is a finite union of equivalence classes of $\sim_\emptyset^{SP^+(A)/\mathord{\sim_{\mathcal{A}}}}$.
  As the class of $\parallel$-rational sets is closed under finite union, it follows from Lemma~\ref{lem:simEmptyParallelRational} that each equivalence class of $\sim_\text{seq}^{SP^+(A)/\mathord\sim_\mathcal{A}}$ is $\parallel$-rational.
\end{proof}

We have $SP^+(A)-L=\bigcup_{D\in\mathcal{P}(Q^2)\atop D\cap I\times F=\emptyset}\varphi^{-1}_{\sim_\mathcal{A}}(\Delta_D^{SP^+(A)/\mathord\sim_\mathcal{A}})$. Since the class of rational sets of $SP^+(A)$ is closed under finite union, it suffices to show that $\varphi^{-1}_{\sim_\mathcal{A}}(\Delta_D^{SP^+(A)/\mathord\sim_\mathcal{A}})$ is a rational set of $SP^+(A)$ for each $D\in\mathcal{P}(Q^2)$ in order to show that $SP^+(A)-L$ is a rational set of $SP^+(A)$. This will be achieved by Lemma~\ref{lem:revImageVarphiClasse} below.

We need to introduce the notion of \emph{$\parallel$-quotient} of a language. Let $L,L'\subseteq SP(A)$. The \emph{$\parallel$-quotient} of $L'$ by $L$ is $$\quotientparallel{L'}{L}=\{P\in SP(A) : \exists P'\in L\text{ such that } P\parallel P'\in L'\}$$

\begin{lem}
  \label{lem:quotientSemilinear}
  Let $A$ be an alphabet, and $X$, $Y$ be two $\parallel$-rational languages of $A^\circledast$. Then $\quotientparallel{Y}{X}$ is $\parallel$-rational.
\end{lem}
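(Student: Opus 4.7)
The plan is to encode the $\parallel$-quotient inside the direct product $A^\circledast\times A^\circledast$ and then exploit the closure properties of $\parallel$-rational subsets of finitely generated commutative monoids provided by Theorem~\ref{th:schutz} and Proposition~\ref{prop:morphInv}. The key observation is that $A^\circledast$, equipped with $\parallel$, is the free commutative monoid on the finite alphabet $A$, and hence is finitely generated; consequently $A^\circledast\times A^\circledast$, equipped with the componentwise operation, is also a finitely generated commutative monoid.

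First I would introduce two morphisms of commutative monoids: the second projection $\pi_2:A^\circledast\times A^\circledast\to A^\circledast$ sending $(u,v)$ to $v$, and the ``merging'' map $\mu:A^\circledast\times A^\circledast\to A^\circledast$ sending $(u,v)$ to $u\parallel v$. Applying Proposition~\ref{prop:morphInv} to each morphism, the preimages $\pi_2^{-1}(X)=A^\circledast\times X$ and $\mu^{-1}(Y)=\{(u,v)\in A^\circledast\times A^\circledast:u\parallel v\in Y\}$ are $\parallel$-rational subsets of $A^\circledast\times A^\circledast$. Theorem~\ref{th:schutz} then yields that their intersection
\[
Z=\{(u,v)\in A^\circledast\times A^\circledast:v\in X\text{ and }u\parallel v\in Y\}
\]
is also a $\parallel$-rational subset of $A^\circledast\times A^\circledast$.

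Finally, I would push $Z$ back down to $A^\circledast$ via the first projection $\pi_1:(u,v)\mapsto u$, which is again a morphism of commutative monoids. By the very definition of the $\parallel$-quotient, $\pi_1(Z)=\quotientparallel{Y}{X}$. To conclude, one checks by a routine structural induction on the definition of $\parallel$-rationality that any morphism of commutative monoids maps $\parallel$-rational sets to $\parallel$-rational sets: the operations $\cup$, $\parallel$ and $^\circledast$ all commute with such a morphism. I do not expect any genuine obstacle; the whole argument is a direct reduction to the closure properties already available for commutative monoids, the only mildly delicate point being the preliminary setup of the two morphisms into $A^\circledast\times A^\circledast$ that transform the existential quantifier in the definition of $\quotientparallel{Y}{X}$ into an intersection followed by a projection.
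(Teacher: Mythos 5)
Your argument is correct, but it follows a genuinely different route from the paper. The paper stays on the logical side: it converts $X$ and $Y$ into Presburger formul{\ae} via Theorems~\ref{th:ratSemLin} and~\ref{th:PresburgerSemilinear}, writes the quotient as the existentially quantified formula $\exists \bar x,\bar y\ \varphi_X(\bar x)\land\varphi_Y(\bar y)\land\bigwedge_i z_i+x_i=y_i$, and translates back; the existential quantifier of the definition of $\quotientparallel{Y}{X}$ is absorbed directly by Presburger quantification. You instead stay entirely on the algebraic side: you realize the same existential quantifier as an intersection followed by a projection inside the finitely generated commutative monoid $A^\circledast\times A^\circledast$, using Proposition~\ref{prop:morphInv} for the inverse images under $\pi_2$ and the merge morphism $\mu$, Theorem~\ref{th:schutz} for the intersection, and the (routine, induction-on-expressions) fact that direct images under monoid morphisms preserve $\parallel$-rationality --- a fact the paper never states for commutative monoids, so you do need to supply that small induction, as you indicate. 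Two minor points worth making explicit: the product $A^\circledast\times A^\circledast$ is finitely generated (by the pairs $(a,\epsilon)$ and $(\epsilon,a)$), which is what licenses Proposition~\ref{prop:morphInv} and the effectiveness of Theorem~\ref{th:schutz}; and although $\quotientparallel{Y}{X}$ is a priori defined as a subset of $SP(A)$, any $P$ with $P\parallel P'\in Y\subseteq A^\circledast$ is itself in $A^\circledast$, so your identification $\pi_1(Z)=\quotientparallel{Y}{X}$ is legitimate. What each approach buys: the paper's proof is shorter given that the Presburger machinery is already needed elsewhere (and is reused heavily in Section~\ref{sec:PMSO}), while yours avoids the logical detour, remains effective, and generalizes verbatim to $\parallel$-quotients in arbitrary finitely generated commutative monoids rather than just free ones.
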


\begin{proof}
  By Theorem~\ref{th:ratSemLin}, $X$ and $Y$ are semi-linear. 
  By Theorem~\ref{th:PresburgerSemilinear}, $X$ and $Y$ are also Presburger sets of some formul{\ae}\ $\varphi_X(x_1,\dots,x_n)$ and $\varphi_Y(y_1,\dots,y_n)$. Up to a change in variables names we can assume that the free variables of both formul{\ae}\ are disjoint. Then $\quotientparallel{Y}{X}$ is the Presburger set of $$\varphi_{\quotientparallel{Y}{X}}(z_1,\dots,z_n)=\exists x_1,\dots,x_n,y_1,\dots,y_n\ \varphi_X(x_1,\dots,x_n)\land\varphi_Y(y_1,\dots,y_n)\land_{i\in[n]}z_i+x_i=y_i$$
  Using Theorems~\ref{th:PresburgerSemilinear} and~\ref{th:ratSemLin} again, $\quotientparallel{Y}{X}$ is $\parallel$-rational.
\end{proof}

\begin{lem}
  \label{lem:revImageRatEltFinite}
  Let $\varphi: SP^+(A)\to S$ be a morphism of sp-algebras.
  If $S$ is finite, then $\varphi^{-1}(s)$ is a regular set of $SP^+(A)$, for all $s\in S$.
\end{lem}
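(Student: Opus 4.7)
The statement is essentially a direct combination of definitions already in place and two theorems stated earlier, so my plan is short. The key observation is that the lemma is a triviality once we recognize that $\varphi^{-1}(s)$ fits the definition of a recognizable subset of $SP^+(A)$.

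First I would observe that for each $s\in S$, writing $\varphi^{-1}(s)=\varphi^{-1}(\{s\})$, the morphism $\varphi:SP^+(A)\to S$ itself witnesses that $\varphi^{-1}(s)$ is recognizable in the sense defined just before Proposition~\ref{prop:synt}: it is the preimage under a morphism of sp-algebras of a subset of a \emph{finite} sp-algebra, namely $\{s\}\subseteq S$. Hence $\varphi^{-1}(s)$ is a recognizable subset of $SP^+(A)$ for every $s\in S$.

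Then I would invoke Theorem~\ref{th:RecRat} (Lodaya and Weil), which states that every recognizable language is rational, to conclude that $\varphi^{-1}(s)$ is a rational language of $SP^+(A)$. Finally, Theorem~\ref{th:KleeneBranching} (again by Lodaya and Weil) states that rational languages and regular languages coincide, so $\varphi^{-1}(s)$ is regular as desired.

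There is essentially no obstacle here: the whole content is bookkeeping, checking that the hypothesis ``$S$ is finite'' is precisely what is needed to fit the paper's definition of recognizability, after which the implication ``recognizable $\Rightarrow$ rational $\Rightarrow$ regular'' is already on the record. The only thing worth being careful about is that $\varphi$ need not be surjective; but this is harmless, since the definition of recognizability only requires the existence of some finite sp-algebra receiving a morphism, and one can if desired replace $S$ by the (still finite) image $\varphi(SP^+(A))$ without affecting either the preimages $\varphi^{-1}(s)$ or the applicability of Theorem~\ref{th:RecRat}.
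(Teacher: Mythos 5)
Your proof is correct: with the paper's definition of recognizability, $\varphi^{-1}(s)=\varphi^{-1}(\{s\})$ is recognized by $\varphi$ itself, the finiteness of $S$ is exactly what the definition demands, and the chain recognizable $\Rightarrow$ rational (Theorem~\ref{th:RecRat}) $\Rightarrow$ regular (Theorem~\ref{th:KleeneBranching}) is available and non-circular; your remark on surjectivity is also right and harmless. However, this is a genuinely different route from the paper's. The paper does not invoke Theorem~\ref{th:RecRat} but instead builds an explicit automaton $\mathcal{B}_s$ whose states are the elements of $S$ plus a neutral state, with auxiliary ``special'' states and letters to simulate the parallel product, and then erases the special letters by a projection whose rationality is justified by modifying the McNaughton--Yamada-style system of equations. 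The reason for this heavier choice becomes clear immediately after the lemma: the same construction is reused, essentially verbatim, in Lemmas~\ref{lem:revImageMuRatElt} and~\ref{lem:revImageMuRatDelta}, where the target sp-algebra is the \emph{infinite} algebra $\mathbb{N}^{k*}$ and the abstract argument via recognizability is simply unavailable (recognizability requires a finite sp-algebra, and indeed $L$ itself need not be recognizable, cf.\ Example~\ref{ex:anbnAlgebra}). So your argument buys a two-line proof of this lemma in isolation, and is perfectly acceptable as such, while the paper's explicit construction buys the template that carries the rest of the complementation proof; if one adopted your proof, the automaton construction would still have to be introduced from scratch for Lemma~\ref{lem:revImageMuRatElt}.
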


\begin{proof}
  We build an automaton $\mathcal{B}_s$ such that $L(\mathcal{B}_s)=\varphi^{-1}(s)$.
 The construction is a generalization of the well-know construction from finite semigroups to automata for finite words.
 Consider all the elements of $S$ as the states of $\mathcal{B}_s$, with one new state $1$. The unique initial state is $1$, the unique final state is $s$. Furthermore, in the following, $1$ is considered as a neutral element for both sequential and parallel product regarding the definition of transitions
 The sequential transitions are as usual: for each state $t$ and letter $a$, add a sequential transition from $s$ to $t\varphi(a)$.
 Let us deal now with the parallel product. For each $t\in S$, add two new states $\overline{t}_1$ and $\overline{t}_2$, a new letter $\underline{t}$ in the alphabet of $\mathcal{B}_s$, and a sequential transition $(\overline{t}_1,\underline{t},\overline{t}_2)$. We name the states of the form $\overline{t}_1$ and $\overline{t}_2$ \emph{special states}, and the new letters of the form $\underline{t}$ \emph{special letters}. For each $t\in S$, add a fork transition $(t,\{1,1,\overline{t}_1\})$. For each $t,u,v\in S$, add a join transition $(\{\overline{v}_2,t,u\},v\cdot(t\parallel u))$. Let $B$ be the set of special letters. Consider the projection $p:SP^+(A\cup B)\to SP^+(A)$ which removes special letters from posets. We have $p(L(\mathcal{B}_s))=\varphi^{-1}_{\sim_\mathcal{A}}(s)$. We now have to show that $p(L(\mathcal{B}_s))$ is rational (note that in general, projection does not preserve rationality). This is achieved by replacing, in the system of equations in the McNaughton-Yamada-like construction of a rational expression from an automaton (see~\cite[Section~4.2]{lodaya98kleene}), the equations of the form $\xi_{\overline{s}_1,\overline{s}_2}=\underline{s}$, whose solution is $L_{p,q}$, by $\xi_{\overline{s}_1,\overline{s}_2}=\{\emptyset\}$: those replacements do not affect the form of the system, whose solution remains rational.
\end{proof}

Lemma~\ref{lem:revImageRatEltFinite} proves that  $\varphi^{-1}_{\sim_\mathcal{A}}(s)$ is regular, for all $s\in SP^+(A)/\mathord\sim_\mathcal{A}$, when $SP^+(A)/\mathord\sim_\mathcal{A}$ is finite. We are now going to prove that $\varphi^{-1}_{\sim_\mathcal{A}}(\Delta_D^{SP^+(A)/\mathord\sim_\mathcal{A}})$ is regular for every equivalence class $\Delta_D^{SP^+(A)/\mathord\sim_\mathcal{A}}$ of  $\sim_\emptyset^{SP^+(A)/\mathord\sim_\mathcal{A}}$, even when $SP^+(A)/\mathord\sim_\mathcal{A}$ is not finite. The idea is to build an automaton $\mathcal{B}_{\Delta_D^{SP^+(A)/\mathord\sim_\mathcal{A}}}$ as in the proof of Lemma~\ref{lem:revImageRatEltFinite}, by showing that a finite subset of $SP^+(A)/\mathord\sim_\mathcal{A}$ can be used for the states of $\mathcal{B}_{\Delta_D^{SP^+(A)/\mathord\sim_\mathcal{A}}}$. Actually we prove this by translating the problem into an sp-algebra $\mathbb{N}^{k*}$ with more properties than $SP^+(A)/\mathord\sim_\mathcal{A}$. Very informally speaking, denote by $\{g_1,\dots,g_k\}$ the set of equivalence classes of $\sim_\mathcal{A}$ containing a sequential poset of $SP^+(A)$ (which is finite by Lemma~\ref{lem:finiteIndex}). For every $P\in SP^+(A)$ whose maximal parallel factorization is $P=P_1\parallel\dots\parallel P_n$, the morphism $\mu:SP^+(A)\to\mathbb{N}^{k*}$ enables the count of $\#i$, $i\in[n]$, such that $P_i\in g_j$, for every $j\in[k]$. Also, every language recognized by $SP^+(A)/\mathord\sim_\mathcal{A}$ is recognized by $\mathbb{N}^{k*}$.

\medskip

Let $G=\{g_1,\dots,g_k\}=\{ \varphi_{\sim_\mathcal{A}}(P) : P\in Seq(SP^+(A))\}$. Then $G$ is a (finite, by Lemma~\ref{lem:finiteIndex}) generator of $(SP^+(A)/\mathord\sim_\mathcal{A},\parallel)$. We may suppose, by Proposition~\ref{prop:seqSepAuto}, that $\mathcal{A}$ is sequentially separated. Thus that 
the elements of $G$ are indecomposable with respect to the parallel product, that is to say, each $g_i\in G$ can not be written $g_i=s\parallel s'$ with $s,s'\in SP^+(A)/\mathord\sim_\mathcal{A}$.

Denote by $(\mathbb{N}^{k*},+)$ the commutative semigroup whose elements are $k$-tuples of non-negative integers, without $(0,\dots,0)$. It is generated by the $k$-tuples with all components set to 0, except one which is set to 1. For short we denote by $1^i$ the element of the generator of $\mathbb{N}^{k*}$ with the $i^\text{th}$ component set to~1. The parallel product $+$ of $(\mathbb{N}^{k*},+)$ is the sum componentwise. We define a morphism of commutative semigroups $\psi:(\mathbb{N}^{k*},+)\to (SP^+(A)/\mathord\sim_\mathcal{A},\parallel)$ by $\psi(1^i)=g_i$ for all $i\in[k]$. Note that $\psi$ is surjective, and that $\psi^{-1}(g_i)=\{1^i\}$ for all $i\in[k]$. As a consequence $\psi^{-1}(ss')$ is a singleton for all $ss'\in SP^+(A)/\mathord\sim_\mathcal{A}$. Now we equip $(\mathbb{N}^{k*},+)$ with a sequential product, by setting, for all $n_1,n_2\in\mathbb{N}^{k*}$, $n_1n_2=\psi^{-1}(\psi(n_1)\psi(n_2))$. This sequential product equips $\mathbb{N}^{k*}$ with a structure of semigroup since $(n_1n_2)n_3=\psi^{-1}(\psi(n_1n_2)s_3)=\psi^{-1}(\psi(\psi^{-1}(s_1s_2))s_3)=\psi^{-1}(s_1s_2s_3)=n_1(n_2n_3)$, considering $\psi(n_i)=s_i$ for all $i\in[3]$. Thus, $\mathbb{N}^{k*}$ equipped with its parallel and sequential products is an sp-algebra. Observe that $\psi(n_1n_2)=\psi(n_1)\psi(n_2)$. Now, we define a morphism of sp-algebras $\mu:SP^+(A)\to \mathbb{N}^{k*}$ by $\mu(a)=\psi^{-1}\varphi_{\sim_\mathcal{A}}(a)$ for all $a\in A$. The diagram of Figure~\ref{fig:diagram} sums up the situation.
\begin{figure}
  \begin{diagram}
    SP^+(A)         & \rTo^{\varphi_{\sim_\mathcal{A}}}   & SP^+(A)/\mathord\sim_\mathcal{A} \\
    \dTo^\mu        & \ruDashto^\psi                      &                          \\
    \mathbb{N}^{k*} &                                     &
  \end{diagram}
  \caption{The morphisms between the sp-algebras. Full arrows represent morphisms of sp-algebras, and dashed arrows morphisms of commutative semigroups.}
  \label{fig:diagram}
\end{figure}

\begin{lem}
  \label{lem:phiMuPsi}
  For all $s\in SP^+(A)/\mathord\sim_\mathcal{A}$, $\varphi^{-1}_{\sim_\mathcal{A}}(s)=\mu^{-1}\psi^{-1}(s)
$.
\end{lem}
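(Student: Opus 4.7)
The plan is to reduce the statement to the equality of morphisms $\psi\circ\mu=\varphi_{\sim_\mathcal{A}}$ from $SP^+(A)$ to $SP^+(A)/\mathord\sim_\mathcal{A}$. Once that is established, the lemma follows immediately from the chain of equivalences $P\in\varphi^{-1}_{\sim_\mathcal{A}}(s)\iff\varphi_{\sim_\mathcal{A}}(P)=s\iff\psi(\mu(P))=s\iff\mu(P)\in\psi^{-1}(s)\iff P\in\mu^{-1}(\psi^{-1}(s))$, so the whole content is the commutativity of the triangle in Figure~\ref{fig:diagram}.

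First I would note that both $\mu$ and $\psi$ are morphisms of sp-algebras. For $\mu$ this is by construction. For $\psi$ it was already recorded in the text: $\psi$ is a morphism of commutative semigroups for $\parallel$, and the sequential product on $\mathbb{N}^{k*}$ was defined precisely so that $\psi(n_1n_2)=\psi(n_1)\psi(n_2)$ (this makes sense because any sequential product $ss'$ in $SP^+(A)/\mathord\sim_\mathcal{A}$ is the class of a sequential poset, hence lies in $G$, and $\psi^{-1}(g_i)=\{1^i\}$ is a singleton). Consequently $\psi\circ\mu:SP^+(A)\to SP^+(A)/\mathord\sim_\mathcal{A}$ is again a morphism of sp-algebras.

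Next I would check that $\psi\circ\mu$ and $\varphi_{\sim_\mathcal{A}}$ agree on the generating alphabet $A$. For any $a\in A$ we have $a\in Seq(SP^+(A))$, so $\varphi_{\sim_\mathcal{A}}(a)\in G$ and $\psi^{-1}(\varphi_{\sim_\mathcal{A}}(a))$ is a legitimate singleton. Unwinding the definition $\mu(a)=\psi^{-1}\varphi_{\sim_\mathcal{A}}(a)$ gives $\psi(\mu(a))=\varphi_{\sim_\mathcal{A}}(a)$ directly.

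Finally, by the universal property of the free sp-algebra $SP^+(A)$ (the first of the two propositions stated just after the definition of sp-algebras in the paper), any morphism of sp-algebras out of $SP^+(A)$ is uniquely determined by its restriction to $A$. Therefore $\psi\circ\mu=\varphi_{\sim_\mathcal{A}}$, and the lemma is proved. There is no real obstacle: the argument is pure bookkeeping, and the only nontrivial ingredient, that $\psi$ respects the sequential product, has already been observed in the preceding paragraph of the text.
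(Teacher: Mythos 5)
Your proof is correct, and it packages the argument differently from the paper. The paper proves the set equality directly by structural induction on $P\in SP^+(A)$, treating the cases $P=a$, $P=P_1\parallel P_2$ and $P=P_1P_2$ separately and verifying both inclusions of $\varphi^{-1}_{\sim_\mathcal{A}}(s)=\mu^{-1}\psi^{-1}(s)$ in each case (using along the way the inclusions $(\mu^{-1}\psi^{-1}(s_1))\parallel(\mu^{-1}\psi^{-1}(s_2))\subseteq\mu^{-1}\psi^{-1}(s_1\parallel s_2)$ and $\psi^{-1}(s_1)\psi^{-1}(s_2)\subseteq\psi^{-1}(s_1s_2)$). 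You instead reduce everything to the identity of morphisms $\psi\circ\mu=\varphi_{\sim_\mathcal{A}}$, obtained from the uniqueness clause of the universal property of the free sp-algebra $SP^+(A)$, and then pass to preimages. The ingredients you need are exactly the ones already recorded in the text: $\psi$ respects the parallel product by construction and the sequential product by the observation $\psi(n_1n_2)=\psi(n_1)\psi(n_2)$ (which in turn rests on $\psi^{-1}(ss')$ being a singleton, a consequence of sequential separation and $\psi^{-1}(g_i)=\{1^i\}$), $\mu$ is a morphism by definition, and $\psi\mu$ and $\varphi_{\sim_\mathcal{A}}$ agree on letters since $\mu(a)=\psi^{-1}\varphi_{\sim_\mathcal{A}}(a)$. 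Your route is shorter and makes transparent that the only substantive fact is that $\psi$ is an sp-algebra morphism; the paper's hands-on induction is essentially the uniqueness argument unwound, at the cost of repeating the same verifications case by case but without needing to state the commutativity of the triangle as a separate fact. Both are valid; yours is arguably the cleaner formulation of the same underlying computation.
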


\begin{proof}
  We show by induction on $P\in SP^+(A)$ that if $P$ belongs to one side of the equality then it also belongs to the other.
Let $P=a\in A$. If $P\in\varphi^{-1}_{\sim_\mathcal{A}}(s)$ (thus $s\in G$), by definition $\mu(P)=\psi^{-1}(s)$. If $\mu(P)\in\psi^{-1}(s)$ then $s=\varphi_{\sim_\mathcal{A}}(P)$.
Assume now that $P$ has the form $P=P_1\parallel P_2$ for some $P_1,P_2\in SP^+(A)$. If $P\in\varphi^{-1}_{\sim_\mathcal{A}}(s)$ then $\varphi_{\sim_\mathcal{A}}(P)=s=\varphi_{\sim_\mathcal{A}}(P_1)\parallel\varphi_{\sim_\mathcal{A}}(P_2)$. Denote by $s_i=\varphi_{\sim_\mathcal{A}}(P_i)$, $i\in[2]$. By induction hypothesis $P_i\in\mu^{-1}\psi^{-1}(s_i)$. Thus $P_1\parallel P_2\in(\mu^{-1}\psi^{-1}(s_1))\parallel(\mu^{-1}\psi^{-1}(s_2))\subseteq \mu^{-1}\psi^{-1}(s)$. On the other side, if $P\in \mu^{-1}\psi^{-1}(s)$, it follows from the induction hypothesis that $\psi\mu(P_i)=s_i$, $i\in[2]$. As $\psi\mu(P)=\psi\mu(P_1\parallel P_2)=\psi\mu(P_1)\parallel\psi\mu(P_2)=s_1\parallel s_2$, we have $P\in\mu^{-1}\psi^{-1}(s_1\parallel s_2)$, thus $s_1\parallel s_2=s$ and thus $P\in\varphi^{-1}_{\sim_\mathcal{A}}(s)$.
Finally assume that $P$ has the form $P=P_1P_2$ for some $P_1,P_2\in SP^+(A)$. If $P\in\varphi^{-1}_{\sim_\mathcal{A}}(s)$ we proceed as in the case $P=P_1\parallel P_2$, observing that $\psi^{-1}(s_1)\psi^{-1}(s_2)\subseteq\psi^{-1}(s)$ if $s_1s_2=s$: if $n_i\in\psi^{-1}(s_i)$, $i\in[2]$, then by definition $n_1n_2=\psi^{-1}(s_1s_2)=\psi^{-1}(s)$. On the other side, if $P\in \mu^{-1}\psi^{-1}(s)$, it follows from the induction hypothesis that $\psi\mu(P_i)=s_i$, $i\in[2]$. We have $\psi\mu(P)=\psi\mu(P_1P_2)=\psi(\mu(P_1)\mu(P_2))=\psi(\mu(P_1))\psi(\mu(P_2))=s_1s_2$,
so $s_1s_2=s$. As $\varphi_{\sim_\mathcal{A}}(P)=\varphi_{\sim_\mathcal{A}}(P_1)\varphi_{\sim_\mathcal{A}}(P_2)=s_1s_2=s$. 
\end{proof}

\begin{lem}
  \label{lem:muSurj}
   $\mu$ is a surjective morphism of sp-algebras.
\end{lem}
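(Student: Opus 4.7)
The plan is to verify both that $\mu$ is a well-defined morphism of sp-algebras and that it is surjective; the former is a brief application of the universal property, while the latter is the substantive content of the lemma.

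For the morphism claim, I would observe that the map $a\mapsto \psi^{-1}\varphi_{\sim_\mathcal{A}}(a)$ is well-defined as a function $A\to \mathbb{N}^{k*}$. Indeed, every letter $a\in A$ is a sequential poset, so $\varphi_{\sim_\mathcal{A}}(a)\in G$, and it has already been noted that $\psi^{-1}(g_i)=\{1^i\}$ is a singleton for each $i\in[k]$ (a consequence of the parallel-indecomposability of $g_i$ in $SP^+(A)/\mathord\sim_\mathcal{A}$, which itself follows from the sequentially separated form of $\mathcal{A}$ obtained through Proposition~\ref{prop:seqSepAuto}). By the universal property of $SP^+(A)$ as the free sp-algebra over $A$ (the first proposition stated earlier in this section), this map extends uniquely to a morphism of sp-algebras $\mu:SP^+(A)\to \mathbb{N}^{k*}$, which is precisely the morphism asserted by the lemma.

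For surjectivity, it is enough to show that each generator $1^i$ of $(\mathbb{N}^{k*},+)$ lies in the image of $\mu$, since $\mu$ preserves the parallel product and every element of $\mathbb{N}^{k*}$ is a finite parallel sum of generators $1^i$. To produce a preimage, I would choose any sequential poset $P_i\in Seq(SP^+(A))$ with $\varphi_{\sim_\mathcal{A}}(P_i)=g_i$, which exists by the very definition of $G$. Lemma~\ref{lem:phiMuPsi} applied to $s=g_i$ then yields $P_i\in \varphi_{\sim_\mathcal{A}}^{-1}(g_i)=\mu^{-1}\psi^{-1}(g_i)=\mu^{-1}(\{1^i\})$, so $\mu(P_i)=1^i$. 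Given an arbitrary $n=\sum_{i\in[k]}c_i1^i\in\mathbb{N}^{k*}$, the parallel poset $Q$ obtained by composing $c_i$ copies of each $P_i$ in parallel (for those indices $i$ with $c_i>0$) then lies in $SP^+(A)$ and satisfies $\mu(Q)=\sum_{i\in[k]} c_i\mu(P_i)=n$.

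There is no real obstacle beyond unwinding the definitions and invoking Lemma~\ref{lem:phiMuPsi}; the only delicate ingredient is well-definedness on letters, which hinges on the equality $\psi^{-1}(g_i)=\{1^i\}$. Without the sequential separation assumption this equality could fail, as some $g_i$ might admit a non-trivial parallel factorization inside $SP^+(A)/\mathord\sim_\mathcal{A}$ and thus pull back to several elements of $\mathbb{N}^{k*}$, compromising the very definition of $\mu$.
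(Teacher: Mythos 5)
Your proof is correct and follows essentially the same route as the paper: surjectivity is obtained by hitting each generator $1^i$ via a preimage of $g_i$ (using Lemma~\ref{lem:phiMuPsi}, equivalently the surjectivity of $\varphi_{\sim_\mathcal{A}}$) and then composing such preimages in parallel for an arbitrary $n\in\mathbb{N}^{k*}$. Your preliminary discussion of well-definedness of $\mu$ on letters merely recapitulates the paper's setup preceding the lemma and does not change the substance.
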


\begin{proof}
  Let $n\in\mathbb{N}^{k*}$. If $n=1^i$ for some $i\in[k]$ then $n=\psi^{-1}(g_i)$. As $\varphi_{\sim_\mathcal{A}}$ is surjective there exists $P\in\varphi^{-1}_{\sim_\mathcal{A}}(g_i)$, thus $P\in\mu^{-1}(n)$. Otherwise, $n=1^{i_1}+\dots+1^{i_r}$ for some $r>1$ and $i_1,\dots i_r\in[k]$. As for all $j\in[r]$, $1^{i_j}=\psi^{-1}(g_{i_j})$, there is some $P_j\in\varphi^{-1}_{\sim_\mathcal{A}}(g_{i_j})$, and $\mu(P_1\parallel\dots\parallel P_r)=\mu(P_1)+\dots+\mu(P_r)=1^{i_1}+\dots+1^{i_r}=n$. 
\end{proof}

\begin{lem}
  \label{lem:muOK}
  $\mu$ verifies the conditions of Lemma~\ref{lem:simEmptyParallelRational}.
\end{lem}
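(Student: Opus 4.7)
The plan is to verify the two hypotheses of Lemma~\ref{lem:simEmptyParallelRational} for $\mu$, leveraging the fact (stated just after Lemma~\ref{lem:simEmptyParallelRational}) that $\varphi_{\sim_\mathcal{A}}$ already satisfies them. The key transfer tool is Lemma~\ref{lem:phiMuPsi}, which says $\varphi^{-1}_{\sim_\mathcal{A}}(s)=\mu^{-1}\psi^{-1}(s)$ for every $s\in SP^+(A)/\mathord\sim_\mathcal{A}$, together with Lemma~\ref{lem:muSurj} (surjectivity of $\mu$).

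First I would show that $\mu$ recognizes $L_{p,q}$ for every $(p,q)\in Q^2$. Since $\varphi_{\sim_\mathcal{A}}$ recognizes $L_{p,q}$, there is a set $X\subseteq SP^+(A)/\mathord\sim_\mathcal{A}$ with $L_{p,q}=\varphi^{-1}_{\sim_\mathcal{A}}(X)$. Applying Lemma~\ref{lem:phiMuPsi} pointwise and taking the union gives $L_{p,q}=\mu^{-1}(\psi^{-1}(X))$, so $\mu$ recognizes $L_{p,q}$ via the set $\psi^{-1}(X)\subseteq \mathbb{N}^{k*}$.

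Next I would verify that $\mu(L_{p,q})$ is a $\parallel$-rational subset of $\mathbb{N}^{k*}$. Using Lemma~\ref{lem:phiMuPsi} again, and the surjectivity of $\mu$ from Lemma~\ref{lem:muSurj}, one computes
\[
\mu(L_{p,q})=\mu\Bigl(\bigcup_{s\in\varphi_{\sim_\mathcal{A}}(L_{p,q})}\varphi^{-1}_{\sim_\mathcal{A}}(s)\Bigr)=\bigcup_{s\in\varphi_{\sim_\mathcal{A}}(L_{p,q})}\mu(\mu^{-1}\psi^{-1}(s))=\psi^{-1}(\varphi_{\sim_\mathcal{A}}(L_{p,q})).
\]
Now $\varphi_{\sim_\mathcal{A}}(L_{p,q})$ is $\parallel$-rational in $SP^+(A)/\mathord\sim_\mathcal{A}$ since $\varphi_{\sim_\mathcal{A}}$ satisfies the conditions of Lemma~\ref{lem:simEmptyParallelRational}. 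The map $\psi$ is a morphism of commutative semigroups, and $\mathbb{N}^{k*}$ is finitely generated (by the $k$ elements $1^1,\dots,1^k$). Proposition~\ref{prop:morphInv} (applied after adjoining a neutral element to turn the commutative semigroups into monoids, which is harmless for $\parallel$-rationality) then yields that $\psi^{-1}(\varphi_{\sim_\mathcal{A}}(L_{p,q}))$ is $\parallel$-rational in $\mathbb{N}^{k*}$, as required.

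The main obstacle, modest as it is, is the identity $\mu(L_{p,q})=\psi^{-1}(\varphi_{\sim_\mathcal{A}}(L_{p,q}))$: one direction is immediate from $\varphi_{\sim_\mathcal{A}}=\psi\circ\mu$ (which is essentially what Lemma~\ref{lem:phiMuPsi} encodes), while the other relies on surjectivity of $\mu$ to fill every $\psi$-fibre. Once this identity is in hand, both conditions of Lemma~\ref{lem:simEmptyParallelRational} fall out directly, and the lemma follows.
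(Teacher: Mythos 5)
Your proposal is correct, and it handles the first condition (that $\mu$ recognizes $L_{p,q}$) exactly as the paper does, namely as a direct consequence of Lemma~\ref{lem:phiMuPsi}. For the second condition, however, you take a genuinely different route. The paper simply mirrors, for $\mu$, the argument already used for $\varphi_{\sim_\mathcal{A}}$: it writes $\mu(L_{p,q})=\bigcup_{X\in\mathcal{F}_{p,q}}\parallel_{(r,s)\in X}\mu(Seq(L_{r,s}))$ and concludes $\parallel$-rationality from the $\parallel$-rationality of $\mathcal{F}_{p,q}$ (Lemma~\ref{lem:FqpParallelRat}) together with the finiteness of $\mu(Seq(SP^+(A)))$; this has the advantage of producing an explicit $\parallel$-rational expression for $\mu(L_{p,q})$ over a finite set of generators, which fits the effectiveness thread of the complementation proof. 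You instead prove the identity $\mu(L_{p,q})=\psi^{-1}(\varphi_{\sim_\mathcal{A}}(L_{p,q}))$ and pull back the already-established $\parallel$-rationality of $\varphi_{\sim_\mathcal{A}}(L_{p,q})$ through $\psi$ via Proposition~\ref{prop:morphInv}, using that $\mathbb{N}^{k*}$ is finitely generated (and adjoining neutral elements to pass from commutative semigroups to monoids, which is indeed harmless). This is a valid and arguably slicker transfer argument; the only steps worth making explicit are that your first equality $L_{p,q}=\bigcup_{s\in\varphi_{\sim_\mathcal{A}}(L_{p,q})}\varphi^{-1}_{\sim_\mathcal{A}}(s)$ uses the saturation of $L_{p,q}$ under $\sim_\mathcal{A}$ (available from Lemma~\ref{lem:congChemin}, or from the fact that $\varphi_{\sim_\mathcal{A}}$ recognizes $L_{p,q}$), and that the surjectivity of $\mu$ (Lemma~\ref{lem:muSurj}) is what gives $\mu\mu^{-1}(Y)=Y$, as you note. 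The trade-off is that your route rests on the Eilenberg--Sch\"utzenberger closure under inverse morphisms rather than on the concrete decomposition, so it is less immediately constructive than the paper's one-line formula, but it is logically sound.
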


\begin{proof}
  First, $\mu$ is surjective by Lemma~\ref{lem:muSurj}.
  That $\mu$ recognizes $L_{p,q}$ for all $(p,q)\in Q^2$ is a consequence of Lemma~\ref{lem:phiMuPsi}.
  Because $\mu(L_{p,q})=\mathop\bigcup_{X\in \mathcal{F}_{p,q}}\mathop\parallel_{(r,s)\in X} \mu(Seq(L_{r,s}))$, and as $\mu(Seq(SP^+(A)))$ is finite, then $\mu(L_{p,q})$ is a $\parallel$-rational set of $\mathbb{N}^{k*}$.
\end{proof}

It follows from Lemmas~\ref{lem:phiMuPsi} and~\ref{lem:equivEmptyRecLpq} that the equivalence relation $\sim_\emptyset^{\mathbb{N}^{k*}}$ can be defined over the elements of $\mathbb{N}^{k*}$.
Furthermore, recall that $\mu$ is surjective by Lemma~\ref{lem:muSurj}.
As a consequence of Lemmas~\ref{lem:phiMuPsi}, \ref{lem:muOK} and~\ref{lem:simEmptyParallelRational}, each of the equivalence class $\Delta_D^{\mathbb{N}^{k*}}$ of $\sim_\emptyset^{\mathbb{N}^{k*}}$ is a $\parallel$-rational of $\mathbb{N}^{k*}$, and thus, by Theorem~\ref{th:ratSemLin}, has the form $\Delta_D^{\mathbb{N}^{k*}}=\cup_{i\in I_D}(a_{D,i}+ B_{D,i}^\circledast)$ for some finite set $I_D$, $a_{D,i}\in\mathbb{N}^{k*}$, $B_{D,i}$ some finite part of $\mathbb{N}^{k*}$. For all $i\in I_D$ set $\Delta_{D,i}^{\mathbb{N}^{k*}}=a_{D,i}+ B_{D,i}^\circledast$. We may assume that all the $\Delta_{D,i}^{\mathbb{N}^{k*}}$ are pairwise disjoint~\cite[Theorem~IV]{EilSch69}. Note that the decomposition of $\Delta_D^{\mathbb{N}^{k*}}$ into $\Delta_D^{\mathbb{N}^{k*}}=\cup_{i\in I_D}\Delta_{D,i}^{\mathbb{N}^{k*}}$ is not unique. This decomposition may influence the constructions below, but not the main result (Lemma~\ref{lem:revImageVarphiClasse}).

The following lemma links the equivalence classes of $\sim_\emptyset^{SP^+(A)/\mathord\sim_\mathcal{A}}$ and $\sim_\emptyset^{\mathbb{N}^{k*}}$.
\begin{lem}
  For all $n_1,n_2\in\mathbb{N}^{k*}$, $n_1\sim_\emptyset^{\mathbb{N}^{k*}} n_2$ iff $\psi(n_1)\sim_\emptyset^{SP^+(A)/\mathord\sim_\mathcal{A}} \psi(n_2)$.
\end{lem}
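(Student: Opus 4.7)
The plan is to reduce both sides of the equivalence to the same existential condition on representatives in $SP^+(A)$, using Lemma~\ref{lem:phiMuPsi} as the bridge. Unfolding the definition of $\sim_\emptyset^{S}$ once with the morphism $\mu$ (so $S=\mathbb{N}^{k*}$) and once with $\varphi_{\sim_\mathcal{A}}$ (so $S=SP^+(A)/\mathord\sim_\mathcal{A}$) yields the following reformulation: $n_1\sim_\emptyset^{\mathbb{N}^{k*}}n_2$ iff there exist $P_1\in\mu^{-1}(n_1)$ and $P_2\in\mu^{-1}(n_2)$ with $P_1\sim_\emptyset^{SP^+(A)}P_2$, and symmetrically $\psi(n_1)\sim_\emptyset^{SP^+(A)/\mathord\sim_\mathcal{A}}\psi(n_2)$ iff there exist $P_1\in\varphi_{\sim_\mathcal{A}}^{-1}(\psi(n_1))$ and $P_2\in\varphi_{\sim_\mathcal{A}}^{-1}(\psi(n_2))$ with $P_1\sim_\emptyset^{SP^+(A)}P_2$. (Both $\mu$ and $\varphi_{\sim_\mathcal{A}}$ satisfy the hypothesis of Lemma~\ref{lem:equivEmptyRecLpq}, so $\sim_\emptyset^S$ is well defined on each, and fibers of $\sim_\emptyset^S$ match fibers of $\sim_\emptyset^{SP^+(A)}$.) The equivalence therefore follows at once from the set-theoretic identity $\mu^{-1}(n)=\varphi_{\sim_\mathcal{A}}^{-1}(\psi(n))$ for every $n\in\mathbb{N}^{k*}$, which is what I would establish.

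By Lemma~\ref{lem:phiMuPsi}, $\varphi_{\sim_\mathcal{A}}^{-1}(\psi(n))=\mu^{-1}(\psi^{-1}(\psi(n)))$, so it suffices to show $\psi^{-1}(\psi(n))=\{n\}$, i.e.\ injectivity of $\psi$. A short case analysis handles this: either $n=1^i$ is a generator, and $\psi^{-1}(\psi(n))=\psi^{-1}(g_i)=\{1^i\}$ by the very definition of $\psi$; or $n=1^{i_1}+\cdots+1^{i_r}$ with $r\geq 2$, so that $\psi(n)=g_{i_1}\parallel\cdots\parallel g_{i_r}$ is a non-trivial parallel product in $SP^+(A)/\mathord\sim_\mathcal{A}$, in which case $\psi^{-1}(\psi(n))$ is a singleton by the already-invoked fact that $\psi^{-1}(ss')$ is a singleton for every product $ss'$ in $SP^+(A)/\mathord\sim_\mathcal{A}$. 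In either case $\psi^{-1}(\psi(n))=\{n\}$, which gives the desired $\mu^{-1}(n)=\varphi_{\sim_\mathcal{A}}^{-1}(\psi(n))$, and the biconditional follows immediately.

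The main obstacle is the injectivity of $\psi$. It rests on two ingredients that are already in place before the lemma: the parallel-indecomposability of each $g_i\in G$, which comes from the sequential-separation of $\mathcal{A}$ obtained via Proposition~\ref{prop:seqSepAuto}, and the uniqueness of the maximal parallel factorization in $SP^+(A)/\mathord\sim_\mathcal{A}$ into parallel-indecomposable factors. Granted these, the proof of the lemma reduces to the definitional unfolding and the appeal to Lemma~\ref{lem:phiMuPsi} described above.
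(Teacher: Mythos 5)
There is a genuine gap: your argument hinges on the injectivity of $\psi$, i.e.\ on $\psi^{-1}(\psi(n))=\{n\}$, and this is false in general. The fact you invoke from the paper --- that $\psi^{-1}(ss')$ is a singleton --- concerns the \emph{sequential} product $ss'$: such a product is the class of a sequential poset, hence an element of $G$, whose $\psi$-preimage is $\{1^i\}$ by definition; it says nothing about parallel products. Likewise, parallel-indecomposability of the $g_i$ (obtained from sequential separation) does not give unique factorization into indecomposables in the quotient $(SP^+(A)/\mathord\sim_\mathcal{A},\parallel)$, which is just a commutative semigroup and may have idempotent or absorbing elements. The paper's own Example~\ref{ex:complement} (whose automaton is sequentially separated) refutes the claim: there $\psi(1^4)=0$ and $0\parallel 0=0$, so $\psi(1^4)=\psi(1^4+1^4)$; also the class $(aa)\parallel a$ is idempotent, so $\psi((1,1,0,0))=\psi((2,2,0,0))$. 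Consequently $\varphi_{\sim_\mathcal{A}}^{-1}(\psi(n))=\mu^{-1}(\psi^{-1}(\psi(n)))=\bigcup_{n'\in\psi^{-1}(\psi(n))}\mu^{-1}(n')$ is in general strictly larger than $\mu^{-1}(n)$, and your ``follows at once'' step breaks down precisely in the right-to-left direction of the lemma.

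The fix does not require the fiber equality, only the inclusion $\mu^{-1}(n_i)\subseteq\varphi_{\sim_\mathcal{A}}^{-1}(\psi(n_i))$ (from $\varphi_{\sim_\mathcal{A}}=\psi\circ\mu$, a consequence of Lemma~\ref{lem:phiMuPsi}) together with the property you only mention in passing, namely that each fiber of $\varphi_{\sim_\mathcal{A}}$ is $\sim_\emptyset^{SP^+(A)}$-homogeneous (Lemma~\ref{lem:equivEmptyRecLpq}). With these, left to right is immediate; for right to left, given witnesses $P_i\in\varphi_{\sim_\mathcal{A}}^{-1}(\psi(n_i))$ with $P_1\sim_\emptyset^{SP^+(A)}P_2$, any $P'_i\in\mu^{-1}(n_i)$ also lies in $\varphi_{\sim_\mathcal{A}}^{-1}(\psi(n_i))$, hence $P'_i\sim_\emptyset^{SP^+(A)}P_i$, and transitivity gives $P'_1\sim_\emptyset^{SP^+(A)}P'_2$, i.e.\ $n_1\sim_\emptyset^{\mathbb{N}^{k*}}n_2$. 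This is exactly the paper's argument; your definitional unfolding in the first paragraph is fine, but reducing the lemma to injectivity of $\psi$ is the wrong --- and unprovable --- bridge.
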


\begin{proof}
  Assume first there exist $P_1\in\mu^{-1}(n_1)$, $P_2\in\mu^{-1}(n_2)$ with $P_1\sim_\emptyset^{SP^+(A)} P_2$. As a consequence of Lemma~\ref{lem:phiMuPsi} we have $\varphi_{\sim_\mathcal{A}}(P_i)=\psi\mu(P_i)$ for all $i\in[2]$, thus $\psi\mu(P_1)\sim_\emptyset^{SP^+(A)/\mathord\sim_\mathcal{A}} \psi\mu(P_2)$.
Assume now, for the inclusion from right to left, that there exist $P_i\in\varphi_{\sim_\mathcal{A}}^{-1}(\psi(n_i))$ for all $i\in[2]$, with $P_1\sim_\emptyset^{SP^+(A)} P_2$, but $P'_1\not\sim_\emptyset^{SP^+(A)} P'_2$ for all $P'_1\in\mu^{-1}(n_1)$, $P'_2\in\mu^{-1}(n_2)$. As a consequence of Lemma~\ref{lem:phiMuPsi} we have, for all $i\in[2]$, $\psi\mu(P'_i)=\psi(n_i)$, thus $P_i,P'_i\in\varphi^{-1}_{\sim_\mathcal{A}}(\psi(n_i))$ and thus $P_i\sim_\emptyset^{SP^+(A)} P'_i$. As $P'_1\sim_\emptyset^{SP^+(A)} P_1\sim_\emptyset^{SP^+(A)} P_2\sim_\emptyset^{SP^+(A)} P'_2$ we have $P'_1\sim_\emptyset^{SP^+(A)} P'_2$, which is a contradiction.
\end{proof}

\begin{lem}
  \label{lem:pdtSeqEq}
  For all $t,t'\in\mathbb{N}^{k*}$, if $t\sim_\emptyset^{\mathbb{N}^{k*}} t'$ then $t\sim_\text{seq}^{\mathbb{N}^{k*}}t'$.
\end{lem}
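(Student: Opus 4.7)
The plan is to reduce this to the already-established Lemma~\ref{lem:simEmptyImpliqueSimSeq} by transporting through the morphism $\psi$, exploiting the fact that the sequential product in $\mathbb{N}^{k*}$ was defined by pulling back along $\psi$.

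First, I would invoke the preceding lemma, which states that $t \sim_\emptyset^{\mathbb{N}^{k*}} t'$ is equivalent to $\psi(t) \sim_\emptyset^{SP^+(A)/\mathord\sim_\mathcal{A}} \psi(t')$. Then Lemma~\ref{lem:simEmptyImpliqueSimSeq} applied in $SP^+(A)/\mathord\sim_\mathcal{A}$ yields $\psi(t) \sim_\text{seq}^{SP^+(A)/\mathord\sim_\mathcal{A}} \psi(t')$, so that $y \psi(t) = y \psi(t')$ and $\psi(t) y = \psi(t') y$ for every $y \in SP^+(A)/\mathord\sim_\mathcal{A}$.

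Next, I would fix an arbitrary $x \in \mathbb{N}^{k*}$ and apply the above with $y = \psi(x)$ to obtain $\psi(x)\psi(t) = \psi(x)\psi(t')$ and $\psi(t)\psi(x) = \psi(t')\psi(x)$. Recall that the sequential product on $\mathbb{N}^{k*}$ was defined by $n_1 n_2 = \psi^{-1}(\psi(n_1)\psi(n_2))$, which is unambiguous because, as noted just before Figure~\ref{fig:diagram}, $\psi^{-1}$ of any sequential product in $SP^+(A)/\mathord\sim_\mathcal{A}$ is a singleton. Consequently
\[
xt = \psi^{-1}(\psi(x)\psi(t)) = \psi^{-1}(\psi(x)\psi(t')) = xt',
\]
and symmetrically $tx = t'x$. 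Since $x$ was arbitrary, this establishes $t \sim_\text{seq}^{\mathbb{N}^{k*}} t'$.

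There is no real obstacle here: the argument is a routine transfer of the statement through the surjective morphism of commutative semigroups $\psi$, the only point requiring attention being that the well-definedness of the sequential product of $\mathbb{N}^{k*}$ as $\psi^{-1} \circ (\psi \times \psi)$ is exactly what allows us to move the equality $\psi(x)\psi(t) = \psi(x)\psi(t')$ back down to an equality $xt = xt'$ in $\mathbb{N}^{k*}$, without needing any separate injectivity hypothesis on $\psi$.
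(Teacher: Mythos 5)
Your argument is correct, but it takes a slightly different route from the paper's. The paper proves the lemma ``from below'': it lifts $t$, $t'$ and an arbitrary $n$ to posets $P,P',N\in SP^+(A)$ via $\mu$, re-runs the path argument ($P$ and $P'$ label paths between the same pairs of states, hence so do $PN$ and $P'N$, hence $\mathcal{R}(PN)=\mathcal{R}(P'N)$, hence $\varphi_{\sim_\mathcal{A}}(PN)=\varphi_{\sim_\mathcal{A}}(P'N)$, i.e. $\psi(tn)=\psi(t'n)$), and only then concludes $tn=t'n$ because both are the singleton preimage $1^i$ of a generator $g_i$; in effect it repeats, inside this proof, the argument already used for Lemma~\ref{lem:simEmptyImpliqueSimSeq}. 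You instead stay entirely at the level of the quotient algebras: you transport $\sim_\emptyset^{\mathbb{N}^{k*}}$ to $\sim_\emptyset^{SP^+(A)/\sim_\mathcal{A}}$ via the lemma preceding the statement, invoke Lemma~\ref{lem:simEmptyImpliqueSimSeq} to get $\psi(t)\sim_\text{seq}^{SP^+(A)/\sim_\mathcal{A}}\psi(t')$, and pull the resulting equalities back through the definition $xt=\psi^{-1}(\psi(x)\psi(t))$. This buys a shorter proof with no duplication of the path argument, at the cost of depending on the transport lemma (only its forward direction is needed) rather than being self-contained; the paper's version manipulates concrete paths and never needs to mention $\sim_\text{seq}^{SP^+(A)/\sim_\mathcal{A}}$. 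Both proofs hinge on the same final point, which you identify correctly: the sequential product of $\mathbb{N}^{k*}$ is defined as $\psi^{-1}$ of a product in $SP^+(A)/\sim_\mathcal{A}$, and such preimages are singletons (the product of two classes is the class of a sequential poset, hence some parallel-indecomposable generator $g_i$ with $\psi^{-1}(g_i)=\{1^i\}$), so no injectivity of $\psi$ is needed.
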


\begin{proof}
  We first show $\psi(tn)=\psi(t'n)$ and $\psi(nt)=\psi(nt')$. Indeed, let $P,P',N\in SP^+(A)$ such that $\mu(P)=t$, $\mu(P')=t'$ and $\mu(N)=n$. If $t\sim_\emptyset^{\mathbb{N}^{k*}}t'$ then for all $(p,q)\in Q^2$, $P$ is the label of a path from $p$ to $q$ in $\mathcal{A}$ iff so is $P'$. As a consequence, for all $(p,q)\in Q^2$, $PN$ is the label of a path from $p$ to $q$ in $\mathcal{A}$ iff so is $P'N$. Thus $\mathcal{R}(PN)=\mathcal{R}(P'N)$. This implies $\varphi_{\sim_\mathcal{A}}(PN)=\varphi_{\sim_\mathcal{A}}(P'N)$, thus $\psi(tn)=\psi(t'n)$. Now, by definition $tn=\psi^{-1}(\psi(t)\psi(n))$, thus $\psi(tn)=\psi(t)\psi(n)=g_i=\psi(t'n)$ for some $g_i\in G$, thus $tn=t'n=1^i$. We show that $nt=nt'$ using symmetrical arguments.
\end{proof}

\begin{lem}
  \label{lem:revImageMuRatElt}
  For all $n\in\mathbb{N}^{k*}$, $\mu^{-1}(n)$ is a regular set of $SP^+(A)$.
\end{lem}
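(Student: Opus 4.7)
My plan is to build a branching automaton $\mathcal{B}_n$ accepting $\mu^{-1}(n)$ by adapting the construction of Lemma~\ref{lem:revImageRatEltFinite}. The central difficulty is that the sp-algebra $\mathbb{N}^{k*}$ is infinite, so one must restrict to a finite subset as the state set.

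First, I reduce to the case of a basis element $n=1^j$. If $n=n_1\cdot 1^1+\dots+n_k\cdot 1^k$ with $\sum_i n_i\geq 2$, so $n$ is not a basis element, then any $P\in\mu^{-1}(n)$ must have parallel root, since letters and posets with sequential root all have $\mu$-value in $\{1^1,\dots,1^k\}$. In the maximal parallel factorization $P=P_1\parallel\dots\parallel P_r$, each $P_j$ is again a letter or has sequential root, so $\mu(P_j)\in\{1^1,\dots,1^k\}$; matching the multiplicities of the basis elements yields
\[\mu^{-1}(n) \;=\; L_1^{\parallel n_1}\parallel\dots\parallel L_k^{\parallel n_k},\]
where $L_j=\mu^{-1}(1^j)$ and $L^{\parallel m}$ denotes the $m$-fold parallel product of $L$ (the factor being omitted when $m=0$). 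Since the class of rational languages is closed under parallel product, it suffices to establish the claim for $n=1^j$.

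For the base case $n=1^j$, I would follow the fork--join schema of Lemma~\ref{lem:revImageRatEltFinite}, taking as states a finite subset of $\mathbb{N}^{k*}$ consisting of $\{1,1^1,\dots,1^k\}$ together with one representative per equivalence class of $\sim_\emptyset^{\mathbb{N}^{k*}}$; this is finite because $\sim_\emptyset^{\mathbb{N}^{k*}}$ is determined by subsets of the finite set $Q^2$. Sequential transitions, fork transitions and join transitions are defined as in Lemma~\ref{lem:revImageRatEltFinite}, with any intermediate parallel sum produced in a join being replaced by the representative of its $\sim_\emptyset^{\mathbb{N}^{k*}}$-class. The initial state is $1$ and the final state is $1^j$.

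The main obstacle is to prove that this collapse to class representatives is sound. The key observation is that, since the final state $1^j$ is a basis element, no top-level parallel sum can be the final state of an accepting computation; every parallel sum produced inside a computation for $L_j$ is therefore eventually consumed by a subsequent sequential product, and by Lemma~\ref{lem:pdtSeqEq} the outcome of that sequential product depends only on the $\sim_\emptyset^{\mathbb{N}^{k*}}$-class of the sum. This justifies identifying parallel sums with their class representatives, and correctness of the construction then follows by induction on the parse tree of $P$. Finally, the auxiliary special letters introduced by the fork--join encoding are eliminated exactly as at the end of the proof of Lemma~\ref{lem:revImageRatEltFinite}, via the McNaughton--Yamada construction of a rational expression from an automaton.
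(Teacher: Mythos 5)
Your reduction to basis elements is fine: every poset with $\mu$-value outside $\{1^1,\dots,1^k\}$ has a parallel root whose maximal parallel factors are sequential posets or letters, so $\mu^{-1}(n)$ is indeed a finite parallel product of the languages $\mu^{-1}(1^j)$, and rationality is closed under $\parallel$. The gap is in the base case. Your automaton collapses every intermediate parallel sum to the representative of its $\sim_\emptyset^{\mathbb{N}^{k*}}$-class, and you justify this by saying that such a sum is ``eventually consumed by a subsequent sequential product'', where Lemma~\ref{lem:pdtSeqEq} applies. But ``eventually'' is exactly what breaks the argument: in the fork--join encoding of Lemma~\ref{lem:revImageRatEltFinite}, a parallel poset $R=P_1\parallel\cdots\parallel P_r$ inside a sequential factor is built up two branches at a time, so a collapsed partial sum is fed into \emph{further joins}, i.e.\ into further parallel additions, before any sequential product is taken. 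Lemma~\ref{lem:pdtSeqEq} only says that $\sim_\emptyset^{\mathbb{N}^{k*}}$ refines $\sim_\text{seq}^{\mathbb{N}^{k*}}$; it says nothing about the parallel sum, and $\sim_\emptyset^{\mathbb{N}^{k*}}$ is in general \emph{not} a congruence for $+$. Concretely, take an automaton with a single sequential transition $r\xrightarrow{a}s$, fork transitions $(p,\{r,r\})$ and $(p,\{r,r,r,r\})$, and join transitions $(\{s,s\},q)$, $(\{s,s,s,s\},q)$: then $2\cdot 1^a$ and $4\cdot 1^a$ lie in the same class of $\sim_\emptyset$ (both label a path exactly from $p$ to $q$), yet adding $2\cdot 1^a$ to each yields $4\cdot 1^a$ and $6\cdot 1^a$, which lie in different classes. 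So if the class representative of the partial sum $4\cdot 1^a$ is $2\cdot 1^a$, your automaton computes the wrong class for the full parallel factor, and hence (through the subsequent sequential products) accepts posets outside $\mu^{-1}(1^j)$ or rejects posets inside it. Note also that the final state being a basis element does not bound the intermediate values: the sequential factors of a poset in $\mu^{-1}(1^j)$ can be parallel posets with arbitrarily large $\mu$-value, so this unboundedness cannot be argued away.

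This is precisely the difficulty the paper's construction is designed around, and it uses more than the finiteness of $\sim_\emptyset^{\mathbb{N}^{k*}}$: it keeps the \emph{exact} value of the partial sum as long as it stays below a threshold $m$ (chosen larger than all $a_{D,i}+b_i$), and beyond that it collapses not to a class representative but to a state $\Delta_{D,i}$ naming one \emph{linear component} $a_{D,i}+B_{D,i}^\circledast$ of the class, with the parallel addition $\oplus$ left \emph{undefined} except when the result is guaranteed to remain in the same linear set ($\Delta_{D,i}$ plus an element of its own period set $B_{D,i}$, or two sub-threshold values). It is this semilinear bookkeeping, not $\sim_\emptyset$-equivalence alone, that makes the collapse sound; the invariant proved there is $v\mathop{\Longrightarrow}\limits_{\mathcal{B}_n}^{P}x$ iff $x=v\circ\delta\mu(P)$, together with a parallel-refactorization argument for completeness. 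To repair your proof you would have to replace the class-representative states by something carrying this linear-set information (at which point you essentially reconstruct the paper's automaton); as it stands, the soundness step is not established.
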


The construction given in the proof of Lemma~\ref{lem:revImageMuRatElt} below is illustrated by Example~\ref{ex:complement} located at the end of this section.

\begin{proof}
  The lemma is achieved by constructing an automaton $\mathcal{B}_n$ from $\mathbb{N}^{k*}$. Take $m$ an element of $\mathbb{N}^{k*}$ which is greater than $n$ and all $a_{D,i}+b_i$, for all $b_i\in B_{D,i}$, $i\in I_D$, $D\in\mathcal{P}(Q^2)$. The finite set $S$ of states of $\mathcal{B}_n$ consists in
\begin{itemize}
\item $S_1=\{ x\in\mathbb{N}^{k} : x\leq m\}$; 
\item for each $\Delta_{D,i}^{\mathbb{N}^{k*}}=a_{D,i}+B_{D,i}^\circledast$, a new state $\Delta_{D,i}$ (set $S_2=\{ \Delta_{D,i} : i\in I_D, D\in\mathcal{P}(Q^2)\}$);
\item for each element $x\in S_1\cup S_2$, two additional special states $\overline{x_1}$ and $\overline{x_2}$ (set $S_3=\{\overline{x_i} : i\in[2], x\in S_1\cup S_2\}$) and a new letter $\underline{x}$.
\end{itemize}
For uniformity with the construction given in the proof of Lemma~\ref{lem:revImageRatEltFinite}, we denote by $1=(0,\dots,0)$. We have $S=S_1\cup S_2\cup S_3$. 
For all $s,s'\in S_1$, $\Delta_{D,i}\in S_2$, $s'',s'''\in (S_1\cup S_2)-\{1\}$, define $\circ:(S_1\cup S_2)^2\to S_1\cup S_2$ 
such that $1$ is a neutral element for $\circ$,
 $\Delta_{D,i}\circ s'''=a_{D,i}\circ s'''$, $s''\circ \Delta_{D,i}=s''\circ a_{D,i}$, $s\circ s'=s\cdot s'$.
Note that $x\circ y\in S_2$ iff one of $x,y$ belongs to $S_2$ and the other is~$1$.
It is just verification to check that $\circ$ is associative, as a consequence of the associativity of~$\cdot$.
The sequential and fork transitions whose source belongs to $S_1\cup S_2$ are defined as in the construction in the proof of Lemma~\ref{lem:revImageRatEltFinite}, by replacing the sequential product of $SP^+(A)/\mathord\sim_\mathcal{A}$ by $\circ$.
For all $s,s'\in S_1\cup S_2$, add a sequential transition $(\overline{s}_1,\underline{s},\overline{s}_2)$. Define also
$$s\oplus s'=
\begin{cases}
  \delta(s+s')&\text{if }s,s'\in S_1,\\
  \Delta_{D,i}&\text{if }s=\Delta_{D,i}\text{ and } s'\in B_{D,i},\\
  \Delta_{D,i}&\text{if }s\in B_{D,i}\text{ and } s'=\Delta_{D,i},\\
  \text{undefined }&\text{otherwise}
\end{cases}
$$
where $\delta:\mathbb{N}^k\to S_1\cup S_2$ is given by, for all $n\in\mathbb{N}^k$,
$$\delta(n)=
\begin{cases}
  n & \text{if }n\leq m,\\
  \Delta_{D,i} & \text{if }n\in\Delta_{D,i}^{\mathbb{N}^{k*}}\text{ and not }n\leq m.
\end{cases}$$
 The join transitions are defined as follows. For each $s,t\in (S_1\cup S_2)-\{1\}$, $\overline{u}_2\in S_3$, add a join transition $(\{\overline{u}_2,s,t\},u\circ(s\oplus t))$ if $u\circ(s\oplus t)$ is defined. The unique initial state of $\mathcal{B}_n$ is $1$, and its unique final state is $n$.
 From now we slightly change our notation for simplicity: we denote by $v\mathop{\Longrightarrow}\limits_{\mathcal{B}_n}^{P} x$ the existence of a poset $P'\in SP^+(A\cup B)$ such that $p(P')=P$ ($p$ is defined as in the proof of Lemma~\ref{lem:revImageRatEltFinite}) and of path in $\mathcal{B}_n$ from $v$ to $x$ labeled by $P'$. 
We claim that, for all $P\in SP^+(A)$, $v,x\in S_1\cup S_2$,  $v\mathop{\Longrightarrow}\limits_{\mathcal{B}_n}^{P} x$ iff $x=v\circ\delta\mu (P)$. First, the implication from left to right. We proceed by induction on $P$. If $P=a\in A$ then necessarily $\mu(P)\leq m$. By construction there is a sequential transition labeled by $a$ from $v$ to $x$ iff $x=v\circ\mu(a)=v\circ\delta\mu(a)$. Assume now $P=P_1P_2$ for some $P_1,P_2\in SP^+(A)$. A path $\gamma:v\mathop{\Longrightarrow}\limits_{\mathcal{B}_n}^{P} x$ can be decomposed into $\gamma:v\mathop{\Longrightarrow}\limits_{\mathcal{B}_n}^{P_1} y\mathop{\Longrightarrow}\limits_{\mathcal{B}_n}^{P_2} x$, and by induction hypothesis we have $y=v\circ\delta\mu(P_1)$, thus $x=(v\circ\delta\mu(P_1))\circ\delta\mu(P_2)=v\circ(\delta\mu(P_1)\circ\delta\mu(P_2))$ with the help of the associativity of $\circ$. As a consequence of Lemma~\ref{lem:pdtSeqEq} and by definition of $\delta$, we have $(\delta (n))\circ x=n\cdot x$ for all $n,x\in\mathbb{N}^{k*}$, thus $\delta\mu(P_1)\circ\delta\mu(P_2)=\mu(P_1)\cdot\mu(P_2)=\mu(P_1P_2)=\delta\mu(P_1P_2)$ since $\mu(P_1P_2)\leq m$ because $\mu(P_1P_2)$ has the form $1^i$ for some $i\in[k]$. Finally assume $P=P_1\parallel P_2$ for some $P_1,P_2\in SP^+(A)$. If there is a path $\gamma:v\mathop{\Longrightarrow}\limits_{\mathcal{B}_n}^{P} x$, then by construction and with the help of the induction hypothesis it has the form $\gamma=(v,\{1,1,\overline{v}_1\})(\gamma_1\parallel\gamma_2\parallel\gamma_3)(\{\delta\mu(P'_1),\delta\mu(P'_2),\overline{v}_2\},x)$ where $\gamma_i:1\mathop{\Longrightarrow}\limits_{\mathcal{B}_n}^{P'_i}\delta\mu(P'_i)$ for all $i\in[2]$, and for some $P'_1,P'_2\in SP^+(A)$ such that $P=P'_1\parallel P'_2$, and $\gamma_3$ is the path consisting of the sequential transition $(\overline{v}_1,\underline{v},\overline{v}_2)$. By definition of the join transitions we have $x=v\circ(\delta\mu(P'_1)\oplus\delta\mu(P'_2))$. Thus $\delta\mu(P'_1)\oplus\delta\mu(P'_2)$ is defined and we have three cases. In the first case $\delta\mu(P'_1),\delta\mu(P'_2)\in S_1$ and we have $\delta\mu(P'_1)\oplus\delta\mu(P'_2)=\delta(\mu(P'_1)+\mu(P'_2))=\delta\mu(P)$. Up to a symmetry, the second and third cases are similar, so assume wlog. we are in the second case: $\delta\mu(P'_1)\oplus\delta\mu(P'_2)=\Delta_{D,i}$ with $\delta\mu(P'_1)=\Delta_{D,i}$ and $\delta\mu(P'_2)\in B_{D,i}$. Necessarily $\mu(P'_1)=a_{D,i}+b_{i,1}+\dots+b_{i,r}$ for some $r\in\mathbb{N}$, $b_{i,j}\in B_{D,i}$ for all $j\in[r]$, and $\mu(P'_2)=b_{i,r+1}$ for some $b_{i,r+1}\in B_{D,i}$. Thus $\mu(P)=a_{D,i}+b_{i,1}+\dots+b_{i,r}+b_{i,r+1}$ and $\delta\mu(P)=\Delta_{D,i}$. Let us turn now to the implication from right to left. The cases $P=a$ and $P=P_1P_2$ for some $P_1,P_2\in SP^+(A)$ are as above, so assume $P=P_1\parallel P_2$. Up to a parallel refactorization of $P$ we may assume, if $\delta\mu(P)=\Delta_{D,i}$ for some $\Delta_{D,i}$, that $\mu(P_1)=a_{D,i}+b_{i,1}+\dots+b_{i,r}$ for some $r\in\mathbb{N}$, $b_{i,j}\in B_{D,i}$ for all $j\in[r]$, and $\mu(P_2)=b_{i,r+1}$ for some $b_{i,r+1}\in B_{D,i}$. So assume first $\delta\mu(P)=\Delta_{D,i}$: either $\mu(P_1),\mu(P_2)\in S_1$, or $\delta\mu(P_1)=\Delta_{D,i}$ and $\mu(P_2)\in S_1$. In the first case, for all $i\in[2]$, $\mu(P_i)=\delta\mu(P_i)$, and by induction hypothesis there is a path $\gamma_i:1\mathop{\Longrightarrow}\limits_{\mathcal{B}_n}^{P_i}1\circ\delta\mu(P_i)=\delta\mu(P_i)=\mu(P_i)$. By construction there is a fork transition $f=(v,\{1,1,\overline{v}_1\})$, a sequential transition $t=(\overline{v}_1,\underline{v},\overline{v}_2)$ and a join transition $j=(\{\mu(P_1),\mu(P_2),\overline{v}_2\},v\circ(\mu(P_1)\oplus\mu(P_2))$ with $v\circ(\mu(P_1)\oplus\mu(P_2))=v\circ\delta(\mu(P_1)+\mu(P_2))=v\circ\delta\mu(P)$, thus $f(\gamma_1\parallel\gamma_2\parallel t)j$ forms a path $\gamma:v\mathop{\Longrightarrow}\limits_{\mathcal{B}_n}^{P} v\circ\delta\mu(P)$. In the second case, by induction hypothesis there exist a path  $\gamma_1:1\mathop{\Longrightarrow}\limits_{\mathcal{B}_n}^{P_i}1\circ\delta\mu(P_1)=\Delta_{D,i}$ and a path $\gamma_2:1\mathop{\Longrightarrow}\limits_{\mathcal{B}_n}^{P_i}1\circ\delta\mu(P_2)=\mu(P_2)$.  By construction there is a fork transition $f=(v,\{1,1,\overline{v}_1\})$, a sequential transition $t=(\overline{v}_1,\underline{v},\overline{v}_2)$ and a join transition $j=(\{\Delta_{D,i},\mu(P_2),\overline{v}_2\},v\circ(\Delta_{D,i}\oplus\mu(P_2)))$ that can be used to form a path  $\gamma:v\mathop{\Longrightarrow}\limits_{\mathcal{B}_n}^{P} v\circ\delta\mu(P)$ because $\Delta_{D,i}\oplus\mu(P_2)=\Delta_{D,i}=\mu(P)$. Finally the case $\delta\mu(P)\in S_1$ is identical to the case $\delta\mu(P)=\Delta_{D,i}$ with $\mu(P_1),\mu(P_2)\in S_1$.
\end{proof}

\begin{lem}
  \label{lem:revImageMuRatDelta}
  For all $D\in\mathcal{P}(Q^2)$, $i\in I_D$, $\mu^{-1}(\Delta_{D,i}^{\mathbb{N}^{k*}})$ is a regular set of $SP^+(A)$.
\end{lem}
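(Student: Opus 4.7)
The plan is to reuse almost verbatim the construction from the proof of Lemma~\ref{lem:revImageMuRatElt}. In that proof, for any $n\in\mathbb{N}^{k*}$, we built an automaton $\mathcal{B}_n$ whose state set is $S=S_1\cup S_2\cup S_3$ with $S_1=\{x\in\mathbb{N}^k : x\leq m\}$ and $S_2=\{\Delta_{D',j} : j\in I_{D'},\ D'\in\mathcal{P}(Q^2)\}$, and we established that for all $v,x\in S_1\cup S_2$ and all $P\in SP^+(A)$, there is a path $v\mathop{\Longrightarrow}\limits_{\mathcal{B}_n}^{P} x$ iff $x=v\circ\delta\mu(P)$. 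Since this characterization does not depend on which state is declared final, it can be imported here as is.

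First, I would construct an automaton $\mathcal{B}$ with the same states and transitions as $\mathcal{B}_n$, with unique initial state $1$, but with set of final states
$$F=\{\Delta_{D,i}\}\cup\bigl(S_1\cap\Delta_{D,i}^{\mathbb{N}^{k*}}\bigr).$$
The second component is a finite set (since $S_1$ is finite) gathering those elements of the linear set $\Delta_{D,i}^{\mathbb{N}^{k*}}=a_{D,i}+B_{D,i}^\circledast$ that happen to be bounded by $m$, while the state $\Delta_{D,i}\in S_2$ plays the role of an absorbing final state for the ``large'' elements of the same linear set.

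Next, I would verify that $L(\mathcal{B})=\mu^{-1}(\Delta_{D,i}^{\mathbb{N}^{k*}})$. Applied with $v=1$, the characterization above yields that $P$ labels a successful path iff $\delta\mu(P)\in F$. If $\delta\mu(P)=\Delta_{D,i}$, then by definition of $\delta$ we have $\mu(P)\in\Delta_{D,i}^{\mathbb{N}^{k*}}$; otherwise $\delta\mu(P)\in S_1\cap F$, forcing $\mu(P)=\delta\mu(P)\in\Delta_{D,i}^{\mathbb{N}^{k*}}$. Conversely, any $P$ with $\mu(P)\in\Delta_{D,i}^{\mathbb{N}^{k*}}$ satisfies $\delta\mu(P)\in F$, regardless of whether $\mu(P)\leq m$. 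Hence $L(\mathcal{B})=\mu^{-1}(\Delta_{D,i}^{\mathbb{N}^{k*}})$, so the latter is regular.

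The main potential obstacle lies in the treatment of the boundary between those elements of $\Delta_{D,i}^{\mathbb{N}^{k*}}$ that fall into $S_1$ and those collapsed to $\Delta_{D,i}\in S_2$: both must be accepting, and both must be reachable via paths that faithfully track $\mu(P)$. The well-definedness of $\delta$ (which itself rests on Lemma~\ref{lem:pdtSeqEq}) together with the pairwise disjointness of the sets $\Delta_{D',j}^{\mathbb{N}^{k*}}$ ensures that no spurious poset is accepted and that the two regimes glue together cleanly. Beyond this point, the argument amounts to a careful re-reading of the proof of Lemma~\ref{lem:revImageMuRatElt} with a different choice of final states.
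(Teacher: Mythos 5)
Your proposal is correct and matches the paper's own proof: both reuse the automaton of Lemma~\ref{lem:revImageMuRatElt} unchanged except for the choice of final states, taken to be $\Delta_{D,i}$ together with the states of $S_1$ lying in $a_{D,i}+B_{D,i}^\circledast$, and both justify correctness via the path characterization $v\mathop{\Longrightarrow}\limits_{\mathcal{B}}^{P} x$ iff $x=v\circ\delta\mu(P)$. The only cosmetic difference is that the paper re-chooses $m$ without reference to $n$, while you keep the original (possibly larger) bound, which is harmless.
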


\begin{proof}
  The construction is almost the same as in the proof of Lemma~\ref{lem:revImageMuRatElt}.
  We only change $m$ to be greater than all $a_{D,i}+b_i$, for all $b_i\in B_{D,i}$, $i\in I_D$, $D\in\mathcal{P}(Q^2)$, without considering $n$, and the final states are $\Delta_{D,i}$, and all states belonging to $a_{D,i}+B_{D,i}^\circledast$.
\end{proof}

\begin{lem}
  \label{lem:revImageVarphiClasse}
  For all $D\in\mathcal{P}(Q^2)$, $\varphi^{-1}_{\sim_\mathcal{A}}(\Delta_D^{SP^+(A)/\mathord\sim_\mathcal{A}})$ is a regular set of $SP^+(A)$. Similarly, for each equivalence class $c$ of $\sim_\text{seq}^{SP^+(A)/\mathord\sim_\mathcal{A}}$, $\varphi^{-1}_{\sim_\mathcal{A}}(c)$ is a regular set of $SP^+(A)$.
\end{lem}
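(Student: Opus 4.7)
The plan is to combine Lemma~\ref{lem:phiMuPsi}, which gives $\varphi^{-1}_{\sim_\mathcal{A}}(s)=\mu^{-1}\psi^{-1}(s)$ pointwise, with Lemma~\ref{lem:revImageMuRatDelta}, which already delivers regularity of the preimages under $\mu$ of the pieces $\Delta_{D,i}^{\mathbb{N}^{k*}}$ of the semi-linear decomposition of $\Delta_D^{\mathbb{N}^{k*}}$. The skeleton is therefore: transport the equivalence class $\Delta_D^{SP^+(A)/\mathord\sim_\mathcal{A}}$ from the quotient sp-algebra down to $\mathbb{N}^{k*}$ via $\psi$, decompose it into finitely many linear pieces, pull each piece back through $\mu$ using Lemma~\ref{lem:revImageMuRatDelta}, and close under finite union (Proposition~\ref{prop:union}).

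First I would observe that the preceding lemma relating $\sim_\emptyset^{\mathbb{N}^{k*}}$ and $\sim_\emptyset^{SP^+(A)/\mathord\sim_\mathcal{A}}$ gives $\psi^{-1}(\Delta_D^{SP^+(A)/\mathord\sim_\mathcal{A}})=\Delta_D^{\mathbb{N}^{k*}}$: indeed $n\in\psi^{-1}(\Delta_D^{SP^+(A)/\mathord\sim_\mathcal{A}})$ exactly when $\psi(n)\sim_\emptyset^{SP^+(A)/\mathord\sim_\mathcal{A}}\psi(n')$ for an $n'\in\Delta_D^{\mathbb{N}^{k*}}$, which by that lemma is equivalent to $n\sim_\emptyset^{\mathbb{N}^{k*}}n'$, i.e.\ $n\in\Delta_D^{\mathbb{N}^{k*}}$. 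Pointwise application of Lemma~\ref{lem:phiMuPsi} then yields
\[
\varphi^{-1}_{\sim_\mathcal{A}}(\Delta_D^{SP^+(A)/\mathord\sim_\mathcal{A}})
\;=\;\mu^{-1}\bigl(\psi^{-1}(\Delta_D^{SP^+(A)/\mathord\sim_\mathcal{A}})\bigr)
\;=\;\mu^{-1}(\Delta_D^{\mathbb{N}^{k*}}).
\]

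Next I would use the already established semi-linear decomposition $\Delta_D^{\mathbb{N}^{k*}}=\bigcup_{i\in I_D}\Delta_{D,i}^{\mathbb{N}^{k*}}$ with $I_D$ finite, so that
\[
\mu^{-1}(\Delta_D^{\mathbb{N}^{k*}})\;=\;\bigcup_{i\in I_D}\mu^{-1}(\Delta_{D,i}^{\mathbb{N}^{k*}}).
\]
Each term on the right is regular by Lemma~\ref{lem:revImageMuRatDelta}, and the class of regular languages of $SP^+(A)$ is closed under finite union by Proposition~\ref{prop:union}; this establishes regularity of $\varphi^{-1}_{\sim_\mathcal{A}}(\Delta_D^{SP^+(A)/\mathord\sim_\mathcal{A}})$.

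For the second assertion, I would invoke Lemma~\ref{lem:simEmptyImpliqueSimSeq}: every class $c$ of $\sim_\text{seq}^{SP^+(A)/\mathord\sim_\mathcal{A}}$ is a union of classes of $\sim_\emptyset^{SP^+(A)/\mathord\sim_\mathcal{A}}$, and by Lemma~\ref{lem:equivSeq} the former has finite index, so only finitely many $\Delta_D^{SP^+(A)/\mathord\sim_\mathcal{A}}$ are involved. Thus $\varphi^{-1}_{\sim_\mathcal{A}}(c)$ is a finite union of sets of the form already shown to be regular, and is itself regular by Proposition~\ref{prop:union}. There is no genuine obstacle left at this point; the only subtlety worth double-checking is that the equality $\psi^{-1}(\Delta_D^{SP^+(A)/\mathord\sim_\mathcal{A}})=\Delta_D^{\mathbb{N}^{k*}}$ really holds with the labeling by $D$ matching on both sides, which follows from the fact that $\mu$ is surjective (Lemma~\ref{lem:muSurj}) and from the definition of the decoration $D$ via $\{(p,q):p\Longrightarrow^{P}q\}$ being preserved along $\varphi_{\sim_\mathcal{A}}=\psi\mu$.
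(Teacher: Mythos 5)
Your proposal is correct and follows essentially the same route as the paper: identify $\varphi^{-1}_{\sim_\mathcal{A}}(\Delta_D^{SP^+(A)/\mathord\sim_\mathcal{A}})$ with $\mu^{-1}(\Delta_D^{\mathbb{N}^{k*}})$, write the latter as the finite union $\bigcup_{i\in I_D}\mu^{-1}(\Delta_{D,i}^{\mathbb{N}^{k*}})$, invoke Lemma~\ref{lem:revImageMuRatDelta} and closure under finite union, and handle $\sim_\text{seq}^{SP^+(A)/\mathord\sim_\mathcal{A}}$ classes via Lemma~\ref{lem:simEmptyImpliqueSimSeq}. The only (harmless) difference is cosmetic: you justify the key identity through Lemma~\ref{lem:phiMuPsi} and the lemma comparing $\sim_\emptyset^{\mathbb{N}^{k*}}$ with $\sim_\emptyset^{SP^+(A)/\mathord\sim_\mathcal{A}}$, whereas the paper obtains it directly from Lemmas~\ref{lem:muOK} and~\ref{lem:equivEmptyRecLpq}.
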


\begin{proof}
  By Lemmas~\ref{lem:muOK} and~\ref{lem:equivEmptyRecLpq}, $\varphi^{-1}_{\sim_\mathcal{A}}(\Delta_D^{SP^+(A)/\mathord\sim_\mathcal{A}})=\Delta_D^{SP^+(A)}=\mu^{-1}(\Delta_D^{\mathbb{N}^{k*}})$.
  Because $\mu^{-1}(\Delta_D^{\mathbb{N}^{k*}})=\cup_{i\in I_D}\mu^{-1}(\Delta_{D,i}^{\mathbb{N}^{k*}})$, with $I_D$ finite, and regular sets are closed under finite union, it follows from Lemma~\ref{lem:revImageMuRatDelta} that  $\varphi^{-1}_{\sim_\mathcal{A}}(\Delta_D^{SP^+(A)/\mathord\sim_\mathcal{A}})$ is a regular set of $SP^+(A)$. As by Lemma~\ref{lem:simEmptyImpliqueSimSeq} an equivalence class $c$ of $\sim_\text{seq}^{SP^+(A)/\mathord\sim_\mathcal{A}}$ is a finite union of equivalence classes of $\sim_\emptyset^{SP^+(A)/\mathord\sim_\mathcal{A}}$,  $\varphi^{-1}_{\sim_\mathcal{A}}(c)$ is also a regular set of $SP^+(A)$.
\end{proof}

We now give an example illustrating the construction given in the proof of Lemma~\ref{lem:revImageMuRatElt}.

\begin{exa}
  \label{ex:complement}
  We consider the rational language $L=((aa)\parallel a)^\oplus a$ of $SP^+(A)$ with $A=\{a\}$, and the automaton $\mathcal{A}$ pictured in Figure~\ref{fig:exComplement} which verifies $L(\mathcal{A})=L$.
  \begin{figure}[htbp]
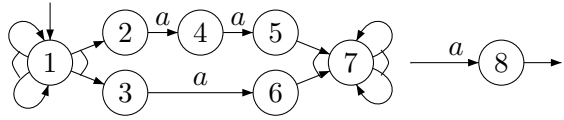

    \begin{center}
      \gasset{Nw=6,Nh=6,loopdiam=4}
      \begin{gpicture}
        \node(1)(-10,-12){$1$}\imark[iangle=90](1)
        \node(2)(0,-8){$2$}
        \node(3)(0,-16){$3$}
        \node(4)(20,-8){$5$}
        \node(5)(20,-16){$6$}
        \node(6)(10,-8){$4$}
        \node(7)(30,-12){$7$}
        \node[Nw=0,Nh=0,loopdiam=0](9)(38,-12){}
        \node(8)(50,-12){$8$}\fmark[fangle=0](8)
        \drawedge(2,6){$a$}
        \drawedge(6,4){$a$}
        \drawedge(3,5){$a$}
        \drawedge(1,2){}
        \drawedge(1,3){}
        \drawedge(9,8){$a$}
        \drawloop[loopangle=135](1){}
        {\gasset{AHnb=0,ATnb=1}
          \drawloop[loopangle=235](1){}}
        \drawcurve[AHnb=0](-6,-10.3)(-5,-12)(-6,-13.7)
        \drawcurve[AHnb=0](-14,-10.5)(-15,-12)(-14,-14)
        \drawedge(4,7){}
        \drawedge(5,7){}
        {\gasset{AHnb=0,ATnb=1}\drawloop[loopangle=45](7){}}
        \drawloop[loopangle=-45](7){}
        \drawcurve[AHnb=0](34,-10.3)(35,-12)(34,-13.7)
        \drawcurve[AHnb=0](26,-10.5)(25,-12)(26,-14)
      \end{gpicture}
      \caption{An automaton $\mathcal{A}$ with $L(\mathcal{A})=((aa)\parallel a)^\oplus a$.}
      \label{fig:exComplement}
    \end{center}
  \end{figure}
  We have
  $$
    \mathcal{F}_{(p,q)}=
    \begin{cases}
      \bigcup_{n,m\in\mathbb{N}\atop n+m>0}\{\{(1,7)^n,(2,5)^m,(3,6)^m\}\} & \text{if }(p,q)=(1,7);\\
      \{\{(p,q)\}\} & \text{if }(p,q)\in\{(2,4),(4,5),(2,5),\\&\hfill (3,6),(7,8),(1,8)\};\\
      \emptyset & \text{otherwise.}
    \end{cases}
  $$
  As stated in Lemma~\ref{lem:FqpParallelRat}, $\mathcal{F}_{(p,q)}$ is a $\parallel$-rational set of $(Q\times Q)^\oplus$, with $Q$ the set of states of $\mathcal{A}$, for all $(p,q)\in Q\times Q$. For example, $\mathcal{F}_{(1,7)}=((1,7)+(2,5)\parallel(3,6))^\oplus$.
  We now compute $SP^+(A)/\mathord{\sim_\mathcal{A}}$.
  First observe that ($k>0$ in the equalities below):
  \begin{align*}
    \bigcup_{x\in \mathcal{R}((aa)^{\parallel k})}\quotientparallel{\mathcal{F}_{(1,7)}}{x}=&\bigcup_{n,m\in\mathbb{N}\atop m\geq k}\{\{(1,7)^n,(2,5)^{m-k},(3,6)^m\}\}\\
    \bigcup_{x\in \mathcal{R}(a^{\parallel k})}\quotientparallel{\mathcal{F}_{(1,7)}}{x}=&\bigcup_{n,m\in\mathbb{N}\atop m\geq k}\{\{(1,7)^n,(2,5)^{m},(3,6)^{m-k}\}\}\\
    \bigcup_{x\in \mathcal{R}((a\parallel (aa))^{\parallel k})}\quotientparallel{\mathcal{F}_{(1,7)}}{x}=&\bigcup_{n,m\in\mathbb{N}}\{\{(1,7)^{n},(2,5)^{m},(3,6)^{m}\}\}\\
  \end{align*}
  with
  \begin{align*}
    \mathcal{R}(a^{\parallel k})=&\{x_1,\dots,x_k\}\text{ with } x_i\in\{(2,4),(3,6),(4,5),(7,8)\}\text{ for all }i\in[k]\\
    \mathcal{R}((aa)^{\parallel k})=&\{\{(2,5)^k\}\}\\
    \mathcal{R}((a\parallel (aa))^{\parallel k})=&\{\{(1,7)^k\}\}
  \end{align*}
  As $\emptyset\in \bigcup_{x\in \mathcal{R}((a\parallel (aa))^{\parallel k})}\quotientparallel{\mathcal{F}_{(1,7)}}{x}$ (take $n=m=0$) then $1 \mathop{\Longrightarrow}\limits_{\mathcal{A}}^{(a\parallel (aa))^{\parallel k}} 7$ for all $k>0$.
  On the other side, as $\emptyset\not\in\bigcup_{x\in \mathcal{R}(a^{\parallel k})}\quotientparallel{\mathcal{F}_{(1,7)}}{x}$ (resp. $\emptyset\not\in\bigcup_{x\in \mathcal{R}((aa)^{\parallel k})}\quotientparallel{\mathcal{F}_{(1,7)}}{x}$), then for all $k>0$, not $1 \mathop{\Longrightarrow}\limits_{\mathcal{A}}^{a^{\parallel k}} 7$ for all $k>0$ (resp. not $1 \mathop{\Longrightarrow}\limits_{\mathcal{A}}^{(aa)^{\parallel k}} 7$).
  As $\bigcup_{x\in \mathcal{R}((aa)^{\parallel k})}\quotientparallel{\mathcal{F}_{(1,7)}}{x}\not=\bigcup_{x\in \mathcal{R}((aa)^{\parallel k'})}\quotientparallel{\mathcal{F}_{(1,7)}}{x}$ and $\bigcup_{x\in \mathcal{R}(a^{\parallel k})}\quotientparallel{\mathcal{F}_{(1,7)}}{x}\not=\bigcup_{x\in \mathcal{R}(a^{\parallel k'})}\quotientparallel{\mathcal{F}_{(1,7)}}{x}$ for all $k,k'>0$ with $k\not=k'$, then $SP^+(A)/\mathord{\sim_\mathcal{A}}$ has not finite index. 
  Actually $SP^+(A)/\mathord{\sim_\mathcal{A}}$ is composed of the following equivalence classes (recall that $\varphi_{\sim_{\mathcal{A}}}:SP^+(A)\to SP^+(A)/\mathord\sim_\mathcal{A}$ is the morphism which associates to each poset $P\in SP^+(A)$ its equivalence class in $SP^+(A)/\mathord\sim_\mathcal{A}$):
  \begin{itemize}
  \item for all $k>0$, one class denoted by $a^{\parallel k}$, such that $\varphi_{\sim_{\mathcal{A}}}(a^{\parallel k})=a^{\parallel k}$;
  \item for all $k>0$, one class denoted by $(aa)^{\parallel k}$, such that $\varphi_{\sim_{\mathcal{A}}}((aa)^{\parallel k})=(aa)^{\parallel k}$;
  \item one class denoted by $(aa)\parallel a$, such that $\varphi_{\sim_{\mathcal{A}}}(((aa)\parallel a)^{\parallel k})=(aa)\parallel a$ for all $k>0$;
  \item one class denoted by $((aa)\parallel a)a$, such that $\varphi_{\sim_{\mathcal{A}}}((((aa)\parallel a)^{\parallel k})a)=((aa)\parallel a)a$ for all $k>0$;
  \item one class denoted by $0$, such that $\varphi_{\sim_{\mathcal{A}}}(P)=0$ for all $P\in SP^+(A)$ which are not mentioned above.
  \end{itemize}
  The sp-algebra $SP^+(A)/\mathord\sim_\mathcal{A}$ is equipped with the parallel product $\parallel$ verifying
  \begin{align*}
    a^{\parallel k}\parallel (aa)^{\parallel k'} =&
    \begin{cases}
      a^{\parallel (k-k')} & \text{if }k>k'\\
      (aa)^{\parallel (k'-k)} & \text{if }k'>k\\
      (aa)\parallel a & \text{otherwise}
    \end{cases}
    \end{align*}
    \begin{multicols}{2}
      \begin{itemize}
    \item $a^{\parallel k}\parallel a^{\parallel k'} = a^{\parallel (k+k')}$;
    \item $(aa)^{\parallel k}\parallel (aa)^{\parallel k'} = (aa)^{\parallel (k+k')}$;
    \item $((aa)\parallel a)\parallel a^k = a^k$;
    \item $((aa)\parallel a)\parallel (aa)^k = (aa)^k$.
  \end{itemize}
  \end{multicols}
  \noindent with $(aa)\parallel a$ idempotent, and the sequential product $\cdot$ verifying
  \begin{multicols}{2}
  \begin{itemize}
    \item $a\cdot a = aa$;
    \item $((aa)\parallel a)\cdot a = ((aa)\parallel a) a$.
  \end{itemize}
  \end{multicols}
  \noindent such that $0$ is a zero for both products, and all products undefined above are mapped to $0$.
  It recognizes $L$ since $L=\varphi_{\sim_{\mathcal{A}}}^{-1}(((aa)\parallel a) a)$.
  We have $\varphi_{\sim_{\mathcal{A}}}(Seq(SP^+(A)))=\{a,aa,((aa)\parallel a) a,0\}$, whose cardinality is 4.
  Consider $\mathbb{N}^{4*}$, and define the morphism of commutative semigroups $\psi:(\mathbb{N}^{4*},+)\to (SP^+(A)/\mathord\sim_\mathcal{A},\parallel)$ by 
  \begin{multicols}{2}
    \begin{itemize}
    \item $\psi((1,0,0,0))=a$;
    \item $\psi((0,1,0,0))=aa$;
    \item $\psi((0,0,1,0))=((aa)\parallel a)a$;
    \item $\psi((0,0,0,1))=0$.
    \end{itemize}
  \end{multicols}

  Equip $\mathbb{N}^{4*}$ with its sequential product. We have $n(1,0,0,0)=(0,0,1,0)$ for all $n\in\{(k,k,0,0) : k>0 \}$, $(1,0,0,0)(1,0,0,0)=(0,1,0,0)$ and all other sequential products are mapped to $(0,0,0,1)$.
  Define also the morphism of sp-algebras $\mu:SP^+(A)\to \mathbb{N}^{4*}$ by $\mu(a)=\psi^{-1}\varphi(a)$ for all $a\in A$. 
  Note that $L=\mu^{-1}((0,0,1,0))$ and $SP^+(A)-L=\mu^{-1}(\mathbb{N}^{4*}-\{(0,0,1,0)\})$ with $\mathbb{N}^{4*}-\{(0,0,1,0)\}$ a $\parallel$-rational language of $\mathbb{N}^{4*}$, since
  \begin{align*}
    \mathbb{N}^{4*}-\{(0,0,1,0)\} =& (1,0,0,0)\parallel B^\circledast + (0,1,0,0)\parallel B^\circledast + (0,0,1,0)\parallel B^\oplus + (0,0,0,1)\parallel B^\circledast
  \end{align*}
  where $B=\{(1,0,0,0),(0,1,0,0),(0,0,1,0),(0,0,0,1)\}$.

  Set 
  \begin{multicols}{2}
  \begin{itemize}
  \item $D_1=\{(2,4),(4,5),(3,6),(7,8)\}$;
  \item $D_2=\{(2,5)\}$;
  \item $D_3=\{(1,8)\}$;
  \item $D_4=\{(1,7)\}$.
  \end{itemize}
  \end{multicols}
  We have
  \begin{multicols}{2}
  \begin{itemize}
  \item $\Delta_{D_1}^{\mathbb{N}^{k*}}=\{(1,0,0,0)\}$;
  \item $\Delta_{D_2}^{\mathbb{N}^{k*}}=\{(0,1,0,0)\}$;
  \item $\Delta_{D_3}^{\mathbb{N}^{k*}}=\{(0,0,1,0)\}$;
  \item $\Delta_{D_4}^{\mathbb{N}^{k*}}=\{(k,k,0,0) : k>0 \}$;
  \item $\Delta_D^{\mathbb{N}^{k*}}=\emptyset$ for all $D\in\mathcal{P}^+(Q^2)-\{D_1,D_2,D_3,D_4\}$;
  \item $\Delta_\emptyset^{\mathbb{N}^{k*}}=\mathbb{N}^{4*}-(\cup_{i\in[4]}\Delta_{D_i}^{\mathbb{N}^{k*}})$.
  \end{itemize}
  \end{multicols}

  Now, from any element $s\in\mathbb{N}^{4*}$, say for example $s=(1,2,1,0)$, we construct on automaton $\mathcal{A}_s$ such that $L(\mathcal{A}_s)=\mu^{-1}(s)$, following the construction of the proof of Lemma~\ref{lem:revImageMuRatElt}. Note that $\mu^{-1}((1,2,1,0))=\{a\parallel(aa)\parallel(aa)\parallel((a\parallel(aa))^{\parallel k}a) : k>0 \}$. 
  The first step of the construction consists in writing all the sets $\Delta_D^{\mathbb{N}^{k*}}$, $D\in\mathcal{P}(Q^2)$, as a union of disjoint linear sets of $\mathbb{N}^{4*}$. This is trivial when $D$ is one of $D_1,D_2,D_3$, or $D\in\mathcal{P}^+(Q^2)-\{D_1,D_2,D_3,D_4\}$. For $\Delta_{D_4}^{\mathbb{N}^{k*}}$, we have $\Delta_{D_4}^{\mathbb{N}^{k*}}=(1,1,0,0)\parallel (1,1,0,0)^\circledast$. This could also easily be done for $\Delta_\emptyset^{\mathbb{N}^{k*}}$, but it can be avoided. 
Indeed, assume that $\Delta_\emptyset^{\mathbb{N}^{k*}}$ is partitioned into finitely many linear sets: $\Delta_\emptyset^{\mathbb{N}^{k*}}=\cup_{i\in I_\emptyset}\Delta_{\emptyset,i}^{\mathbb{N}^{k*}}$ with $I_\emptyset$ a finite set.
Assume also that $\mathcal{A}_s$ has one state $\Delta_{\emptyset,i}$ for each $i\in I_\emptyset$. Take one of those states $\Delta_{\emptyset,i}$. 
Following the construction of $\mathcal{A}_s$, it can be easily checked that if a path uses one of the states  $\Delta_{\emptyset,i}$ or $(0,0,0,1)$ then it continues either in $\Delta_{\emptyset,i}$ or in  $(0,0,0,1)$: the final state $s$ of $\mathcal{A}_s$ is unreachable.

  Let us then return to the construction of $\mathcal{A}_s$. Choosing $m$ as small as possible using the remark above, we have $m=(2,2,1,0)$, then $S_1=\{(x_1,x_2,x_3,0)\in\mathbb{N}^{4} : x_1,x_2\leq 2, x_3\leq 1\}$, and we can reduce $S_2$ to $S_2=\{\Delta_{D_4}\}$. The initial and final states are respectively $1$ and $s$. The (useful) sequential transitions are 
\begin{multicols}{2}
\begin{itemize}
\item $(1,a,(1,0,0,0))$, 
\item $((1,0,0,0),a,(0,1,0,0))$,
\item $((k,k,0,0),a,(0,0,1,0))$ for all $0<k\leq 2$,
\item $(\Delta_{D_4},a,(0,0,1,0))$.
\end{itemize}
\end{multicols}
\noindent The fork transitions are from all state $t$ to $\{1,1,\overline{t}_1\}$. 
Finally, the join transitions are 
\begin{itemize}
\item $(\{\overline{1}_2,(x_1,x_2,x_3,0),(x'_1,x'_2,x'_3,0)\},(x_1+x'_1,x_2+x'_2,x_3+x'_3,0))$ when $x_1+x'_1, x_2+x'_2\leq 2$ and $x_3+x'_3\leq 1$,
\item $(\{\overline{1}_2,(x_1,x_2,0,0),(x'_1,x'_2,0,0)\},\Delta_{D_4})$ when $x_1+x'_1=x_2+x'_2>2$,
\item $(\{\overline{1}_2,\Delta_{D_4},(1,1,0,0)\},\Delta_{D_4})$.
\end{itemize}

Figure~\ref{fig:pathAs} represents a successful path in $\mathcal{A}_s$ labeled by $a\parallel(aa)\parallel(aa)\parallel((a\parallel(aa))^{\parallel 3}a)$.
  \begin{figure}[htbp]
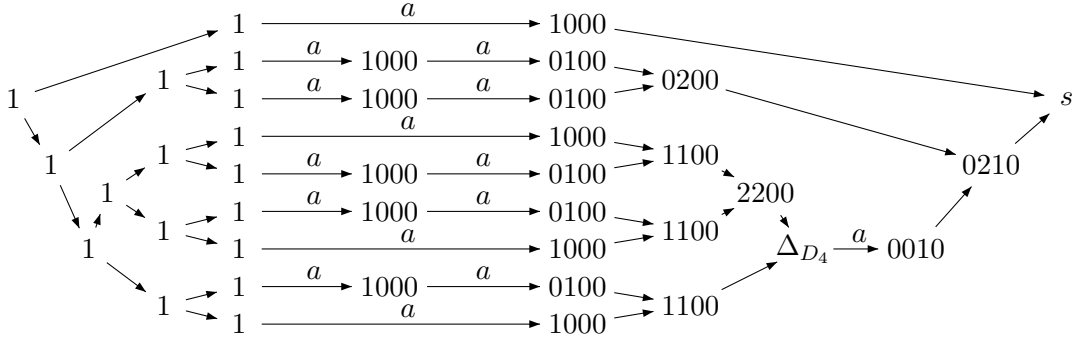

    \begin{center}
      \gasset{Nw=6,Nh=6,loopdiam=4}
      \begin{gpicture}
        \node[Nframe=n](30)(70,20){$1$}
        \node[Nw=10,Nframe=n](31)(115,20){$1000$}
        \drawedge(30,31){$a$}
        \node[Nframe=n](51)(50,-10){$1$}
        \node[Nframe=n](52)(60,12.5){$1$}
        \node[Nframe=n](53)(70,15){$1$}
        \node[Nframe=n](54)(60,2.5){$1$}
        \node[Nframe=n](55)(70,10){$1$}
        \node[Nframe=n](56)(70,5){$1$}
        \node[Nframe=n](57)(70,0){$1$}
        \drawedge(54,56){}
        \drawedge(54,57){}
        \node[Nw=10,Nframe=n](58)(90,10){$1000$}
        \node[Nw=10,Nframe=n](60)(90,0){$1000$}
        \node[Nw=10,Nframe=n](61)(130,2.5){$1100$}
        \node[Nw=10,Nframe=n](62)(130,12.5){$0200$}
        \node[Nw=10,Nframe=n](63)(90,15){$1000$}
        \node[Nw=8,Nframe=n](64)(145,-10){$\Delta_{D_4}$}
        \node[Nw=10,Nframe=n](65)(115,10){$0100$}
        \node[Nw=10,Nframe=n](66)(115,5){$1000$}
        \node[Nw=10,Nframe=n](67)(115,0){$0100$}
        \node[Nw=10,Nframe=n](68)(115,15){$0100$}
        \drawedge(52,53){}
        \drawedge(52,55){}
        \drawedge(53,63){$a$}
        \drawedge(57,60){$a$}
        \drawedge(55,58){$a$}
        \drawedge(56,66){$a$}
        \drawedge(58,65){$a$}
        \drawedge(60,67){$a$}
        \drawedge(63,68){$a$}
        \drawedge(65,62){}
        \drawedge(68,62){}
        \drawedge(66,61){}
        \drawedge(67,61){}
        \node[Nframe=n](7)(52.5,-2.5){$1$}
        \node[Nframe=n](8)(60,{-7.5}){$1$}
        \node[Nframe=n](9)(70,{-5}){$1$}
        \node[Nframe=n](10)(60,{-17.5}){$1$}
        \drawedge(7,8){}
        \drawedge(7,54){}
        \node[Nframe=n](11)(70,{-10}){$1$}
        \node[Nframe=n](12)(70,{-15}){$1$}
        \node[Nframe=n](13)(70,{-20}){$1$}
        \drawedge(10,12){}
        \drawedge(10,13){}
        \node[Nw=10,Nframe=n](15)(90,{-15}){$1000$}
        \node[Nw=10,Nframe=n](17)(130,{-17.5}){$1100$}
        \node[Nw=10,Nframe=n](18)(130,{-7.5}){$1100$}
        \node[Nw=10,Nframe=n](19)(90,{-5}){$1000$}
        \node[Nw=10,Nframe=n](20)(140,{-2.5}){$2200$}
        \node[Nw=10,Nframe=n](25)(115,-10){$1000$}
        \node[Nw=10,Nframe=n](26)(115,-15){$0100$}
        \node[Nw=10,Nframe=n](27)(115,-20){$1000$}
        \node[Nw=10,Nframe=n](28)(115,-5){$0100$}
        \drawedge(18,20){}
        \drawedge(8,9){}
        \drawedge(8,11){}
        \drawedge(9,19){$a$}
        \drawedge(13,27){$a$}
        \drawedge(11,25){$a$}
        \drawedge(12,15){$a$}
        \drawedge(15,26){$a$}
        \drawedge(19,28){$a$}
        \drawedge(25,18){}
        \drawedge(28,18){}
        \drawedge(26,17){}
        \drawedge(27,17){}
        \drawedge(51,7){}
        \drawedge(51,10){}
        \node[Nframe=n](71)(45,1.25){$1$}
        \drawedge(71,52){}
        \drawedge(71,51){}
        \node[Nframe=n](72)(40,10){$1$}
        \drawedge(72,30){}
        \drawedge(72,71){}
        \drawedge(61,20){}
        \drawedge(20,64){}
        \drawedge(17,64){}
        \node[Nw=10,Nframe=n](73)(160,-10){$0010$}
        \drawedge(64,73){$a$}
        \node[Nw=10,Nframe=n](74)(170,1.25){$0210$}
        \drawedge(73,74){}
        \drawedge(62,74){}
        \node[Nframe=n](75)(180,10){$s$}
        \drawedge(74,75){}
        \drawedge(31,75){}
      \end{gpicture}
      \caption{A path labeled by $a\parallel(aa)\parallel(aa)\parallel((a\parallel(aa))^{\parallel 3}a)$ in $\mathcal{A}_s$. In order to lighten the picture, states of the form $(x_1,x_2,x_3,x_4)$ are denoted $x_1x_2x_3x_4$. Also, the special states and transitions using them have been removed or simplified.}
      \label{fig:pathAs}
    \end{center}
  \end{figure}
\end{exa}

\section{P-MSO}
\label{sec:PMSO}

In this section we define a logical formalism called P-MSO, which is a mix between Presburger~\cite{Presburger1930} and monadic second-order logic, and that has exactly the same expressivity as branching automata. As all the constructions involved in the proof are effective, then the P-MSO theory of the class of finite N-free posets is decidable.

Let us recall useful elements of monadic second-order logic, and settle some notation. For more details about MSO logic we refer e.g. to Thomas' survey paper~\cite{EF99,Thomas97a}. The monadic second-order (MSO) logic is classical in set theory, and was first set up by B\"uchi-Elgot-Trakhtenbrot for words~\cite{Buc60,Elgot61,Trakhtenbrot62}. In our case, the domain of interpretation is the class of finite N-free posets.

Monadic second-order logic is an extension of first-order logic  that allows to quantify over elements as well as subsets of the domain of the structure. Given a signature $\mathcal L$, one can define the set of {\em MSO-formul{\ae}}\ over ${\mathcal L}$ as well-formed formul{\ae}\ that can use first-order variable symbols $x,y,\dots$ interpreted as elements of the domain of the structure, monadic second-order variable symbols $X,Y,\dots$ interpreted as subsets of the domain, symbols from ${\mathcal L}$, and a new unary predicate $X(x)$, also denoted $x\in X$ for readability, interpreted as ``the interpretation of $x$ belongs to the interpretation of $X$''. We call \emph{sentence} any formula without free variable. As usual, we will often confuse logical symbols with their interpretation.

Given a signature ${\mathcal L}$ and an  ${\mathcal L}-$structure $M$ with domain $D$, we say that a relation $R \subseteq D^m \times \Pf{D}^n$ is \emph{MSO-definable} in $M$ if and only if there exists an MSO-formula over ${\mathcal L}$, say $\psi(x_1,\dots,x_m,X_1,\dots,X_n)$, which is true in $M$ if and only if $(x_1,\dots,x_m,X_1,\dots,X_n)$ is interpreted by an $(m+n)$-tuple of $R$. 

Given a finite alphabet $A$, let us consider the signature ${\mathcal L}_A=\{<,(R_a)_{a \in A}\}$ where $<$ is a binary relation symbol and the $R_a$'s are unary predicates (over first-order variables). One can associate to every poset $(P,<,\rho)$ labeled over $A$ the ${\mathcal L}_A-$structure $M_{(P,<,\rho)}=(P;<;(R_a)_{a \in A})$ where $<$ is interpreted as the ordering over $P$, and $R_a(x)$ holds if and only if $\rho(x)=a$. In order to take into account the case $P=\emptyset$, which leads to the structure $M_{\emptyset}$ which has an empty domain, we will allow structures to be empty.  Given an MSO sentence $\psi$ over the signature ${\mathcal L}_A$, we define the language $L_\psi$ as the class of posets $(P,<,\rho)$ labeled over $A$ that satisfy $\psi$, or, using formal notation, such that $M_{(P,<,\rho)} \models \psi$. Two formul{\ae}\ $\psi$ and $\psi'$ are (logically) \emph{equivalent}, denoted by $\psi\equiv\psi'$, if $L_\psi=L_{\psi'}$. We will say that a language $L\subseteq SP^+(A)$ is {\em definable in MSO logic} (or {\em MSO-definable}) if and only if there exists an MSO-sentence $\psi$ over the signature ${\mathcal L}_A$ such that $L=L_\psi$.

In order to enhance readability of formul\ae\ we use several notations and abbreviations for properties expressible in MSO.
The shortcut $a(x)$ is used instead of $R_a(x)$.
The following are usual and self-understanding:
$\phi\rightarrow\psi$, 
$X\subseteq Y$,
$x=y$.
An existential (resp. universal) quantification $\exists x \psi(X)$ (resp. $\forall x\psi(X)$) is \emph{relative to $X$} if  $\exists x \psi(X)\equiv \exists x\ x\in X\land \psi(X)$ (resp. $\forall x \psi(X)\equiv \forall x\ x\in X\rightarrow\psi(X)$). Relative existential (resp. universal) quantification of $x$ over $X$ is denoted $\exists^X x$ (resp. $\forall^X x$). The notion of relative quantification naturally extends to second-order variables.

MSO logic is strictly less expressive than automata. There is no MSO-formula that defines the language $(a\parallel b)^\oplus$. On the contrary, 
MSO-definability implies rationality.

In order to capture the expressiveness of automata with logic we need to add Presburger expressivity to MSO.
Presburger logic is the first-order logic over the structure $(\mathbb{N},+)$ where $+=\{(a,b,c) : a+b=c\}$.
A language $L\subseteq \mathbb{N}^n$ is a \emph{Presburger set} of $\mathbb{N}^n$ if $L=\{(x_1,\dots,x_n) : \varphi(x_1,\dots,x_n) \text{ is true }\}$ for some Presburger formula  $\varphi(x_1,\dots,x_n)$. If $\varphi(x_1,\dots,x_n)$ is given then $L$ is called the \emph{Presburger set} of $\varphi(x_1,\dots,x_n)$ (or of $\varphi$ for short).
Presburger logic provides tools to manipulate semi-linear sets of $A^\circledast$ with formul{\ae}.
Indeed, let $A=\{a_1,\dots,a_n\}$ be an alphabet.
As a word $u$ of $A^\circledast$ can be thought of as a $n$-tuple $(\vert u\vert_{a_1},\dots,\vert u\vert_{a_n})$ of non-negative integers, where $\vert u\vert_a$ denotes the number of occurrences of letter $a$ in $u$, then $A^\circledast$ is isomorphic to $\mathbb{N}^n$.

\begin{exa}
  \label{ex:pres1}
  Let $A=\{a,b,c\}$ and $L=\{ u\in A^\circledast : \vert u\vert_a\leq \vert u\vert_b\leq \vert u\vert_c \}$.
  Then $L$ is isomorphic to $\{ (n_a,n_b,n_c)\in\mathbb{N}^3 : n_a\leq n_b\leq n_c \}$, and thus the Presburger set of $$\varphi(n_a,n_b,n_c)\equiv(\exists x\ n_b=n_a+x)\land (\exists y\ n_c=n_b+y)$$
\end{exa}

Semi-linear sets and Presburger sets are connected by the following Theorem:
\begin{thm}[Ginsburg and Spanier, Theorem~1.3 of~\cite{GS:PJM66}] 
  \label{th:PresburgerSemilinear}
  Let $A=\{a_1,\dots,a_n\}$ be an alphabet and $L\subseteq A^\circledast$.
  Then $L$ is semi-linear if and only if it is the Presburger set of some Presburger formula $\varphi(x_1,\dots,x_n)$. 
  Furthermore, the construction of one description from the other is effective.
\end{thm}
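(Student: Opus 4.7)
The plan is to prove the two directions of the equivalence separately, checking effectiveness at each step.

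For the direction semi-linear $\Rightarrow$ Presburger, I would translate syntactically. Given $L=\bigcup_{j=1}^{r}(a_j+B_j^\circledast)$ with each $B_j=\{b_{j,1},\dots,b_{j,m_j}\}\subseteq\mathbb{N}^n$ and $a_j\in\mathbb{N}^n$, I build the Presburger formula
$$\varphi(x_1,\dots,x_n)\equiv\bigvee_{j=1}^{r}\exists y_1\cdots\exists y_{m_j}\bigwedge_{i=1}^{n} x_i=(a_j)_i+\sum_{\ell=1}^{m_j}(b_{j,\ell})_i\cdot y_\ell,$$
where each product $c\cdot y$ with $c\in\mathbb{N}$ a fixed constant abbreviates the iterated sum $y+\cdots+y$ (repeated $c$ times). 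The construction is linear in the size of the semi-linear description, hence effective.

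For the direction Presburger $\Rightarrow$ semi-linear, I would proceed by structural induction on the formula. Atomic formul\ae\ of the form $c_1 x_1+\cdots+c_n x_n=d$ or $c_1 x_1+\cdots+c_n x_n\leq d$ (where negative coefficients are handled by moving terms across the (in)equality to stay inside $\mathbb{N}$) define semi-linear sets whose explicit linear decomposition is easily computed from the minimal nonnegative solutions. Conjunction corresponds to intersection, disjunction to union (trivially preserved), negation to complement, and existential quantification $\exists x_i$ to projection from $\mathbb{N}^n$ to $\mathbb{N}^{n-1}$. The projection of a linear set $a_0+\{b_1,\dots,b_m\}^\circledast$ is obtained by projecting $a_0$ and the $b_\ell$'s, so semi-linearity is preserved by projection.

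The main obstacle is proving that semi-linear sets of $\mathbb{N}^n$ are closed under intersection and complement, as these do not follow from the definition. The classical approach is combinatorial: an intersection $(a+P^\circledast)\cap(a'+Q^\circledast)$ reduces to enumerating the nonnegative integer solutions of a system of linear Diophantine equations, and those form a semi-linear set by a Dickson's-lemma argument on the minimal solutions of the associated homogeneous system. Closure under complement can then be obtained from closure under intersection and set difference by decomposing the complement of a single linear set via a periodicity argument, partitioning $\mathbb{N}^n$ according to residues modulo a common multiple of the components of the period vectors $b_\ell$. Alternatively, one invokes Presburger's quantifier-elimination theorem, which yields a quantifier-free formula in the language augmented with divisibility predicates $d\mid t$ for each constant $d$, and observes that each such quantifier-free set is manifestly semi-linear by direct decomposition. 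In either route every step is algorithmic, so the construction from one formalism to the other is effective.
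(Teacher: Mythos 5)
The paper does not prove this statement at all: it is imported verbatim as Theorem~1.3 of Ginsburg and Spanier, so there is no internal proof to measure you against. Your sketch is essentially the classical argument for that cited result, and its overall architecture is sound. The direction from semi-linear to Presburger is complete and unproblematic: your formula is the standard one, and constant multiples as iterated sums are fine in the signature $(\mathbb{N},+)$. For the converse, the induction on formula structure with projection for $\exists$ is correct (projection is a monoid morphism, so it sends $a_0+B^\circledast$ to $\pi(a_0)+\pi(B)^\circledast$), and you correctly identify that the entire weight of the proof rests on closure of semi-linear sets under intersection and complement. Your intersection argument (reduce to nonnegative solutions of a linear Diophantine system, decompose via minimal solutions of the inhomogeneous system plus the $^\circledast$-closure of the minimal homogeneous solutions, then push forward under the linear map) is the standard and correct one. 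The complementation step, however, is only gestured at: the residue-class periodicity argument as stated is not yet a proof, and this is precisely the nontrivial content of the theorem. Your alternative via Presburger quantifier elimination is the cleaner way to discharge it, since after eliminating quantifiers (over $+$, order and congruence predicates) negations of atomic formul{\ae} can be rewritten positively, and one only needs intersection closure for solution sets of systems of linear inequalities and congruences, which your Dickson-type argument already covers. Worth noting: inside this paper you could also shortcut the Boolean-closure obstacle by invoking Theorem~\ref{th:schutz} together with Theorem~\ref{th:ratSemLin}, which give effective closure of semi-linear (equivalently $\parallel$-rational) subsets of finitely generated commutative monoids under intersection and difference, so the only genuinely new work would be the two translations between formul{\ae} and linear decompositions, both of which you handle correctly.
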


The \emph{P-MSO logic} is a melt of Presburger and MSO logics.
From the syntactic point of view, P-MSO logic contains MSO logic, and in addition formul{\ae}\ of the form 
$$\mathcal{Q}_X(Z,(\psi_1(X),x_1),\dots,(\psi_n(X),x_n),\varphi(x_1,\dots,x_n))$$ 
where $X$ is the name of a new second-order variable, $Z$ is the name of a (free) second-order variable, $\psi_i(X)$ (for each $i\in[n]$) a P-MSO formula having 
no free first-order variables%
, and only quantifications relative to $X$, and $\varphi(x_1,\dots,x_n)$ a Presburger formula with $n$ free variables $x_1,\dots,x_n$. 
As the variable $X$ is for the internal use of $\mathcal{Q}_X$, then it is bounded by $\mathcal{Q}_X$: it is a free variable of all the $\psi_i(X)$, $i\in[n]$, but it is not a free variable of $\psi(Z)=\mathcal{Q}_X(Z,(\psi_1(X),x_1),\dots,(\psi_n(X),x_n),\varphi(x_1,\dots,x_n))$. Similarly, $x_1,\dots,x_n$ are free variables of $\varphi(x_1,\dots,x_n)$, but they must not be considered as free in $\psi(Z)$. 

As in monadic second-order logic, the class of syntactically correct P-MSO formul{\ae}\ is closed under boolean operations, and existential and universal quantification over first and second-order variables of a P-MSO formula that are interpreted over elements or sets of elements of the domain of the structure.
Semantics of P-MSO formul{\ae}\ is defined by extension of semantics of Presburger and MSO logics.
The notions of a language  and definability naturally extend from MSO to P-MSO.
Let us turn to the semantics of $$\mathcal{Q}_X(Z,(\psi_1(X),x_1),\dots,(\psi_n(X),x_n),\varphi(x_1,\dots,x_n))$$
A \emph{block} $B$ of a poset $(P,<)$ is a nonempty subset of $P$ such that, if $b,b'\in B$ such that $b<b'$, then for all elements of $p\in P$, if $b\leq p\leq b'$ then $p\in B$. 
A subset $G$ of $P$ is \emph{good} if, for all $p\in P$, if $p$ is comparable to an element of $G$ and incomparable to another, then $p\in G$. 
A \emph{connected block} $C$ of a block $X$ of a poset is a block such that, for any different and incomparable $c,c'\in C$ there exists $c''\in C$ such that $c,c'\leq c''$ or $c''\leq c,c'$.
 
Before continuing with formal definitions, let us give some intuition on the meaning of $\mathcal{Q}_X(Z,(\psi_1(X),x_1),\dots,(\psi_n(X),x_n),\varphi(x_1,\dots,x_n))$. Let $Y$ be an interpretation of a second-order variable $Z$ in $P$, such that $Y$ is a good block of $P$. That means, $Y$ is the poset associated with a sub-term of a term on $A$ describing $P$, and is the parallel composition of $m\geq 1$ connected blocks: $Y=Y_1\parallel\cdots\parallel Y_m$. Take $n$ different colors $c_1,\dots,c_n$. To each $Y_i$ we associate a color $c_j$ with the condition that $Y_i$ satisfies $\psi_j(Y_i)$. Observe that this coloring may not be unique, and may not exist. Denote by $x_j$ the number of uses of $c_j$ in the coloring of $Y$. Then $P,Y\models \mathcal{Q}_X(Z,(\psi_1(X),x_1),\dots,(\psi_n(X),x_n),\varphi(x_1,\dots,x_n))$ if there exists such a coloring with $x_1,\dots,x_n$ satisfying the Presburger condition $\varphi(x_1,\dots,x_n)$.

More formally,
 let $P\in SP^+(A)$, $\mathcal{Q}_X(Z,(\psi_1(X),x_1),\dots,(\psi_n(X),x_n),\varphi(x_1,\dots,x_n))$ be a P-MSO formula, $Y\subseteq P$ be an interpretation of $Z$ in $P$ such that $Y$ is a good block of $P$. Then $P,Y\models \mathcal{Q}_X(Z,(\psi_1(X),x_1),\dots,(\psi_n(X),x_n),\varphi(x_1,\dots,x_n))$ if there exist non negative integers $v_1,\dots,v_n$ and a partition $(Y_{1,1},\dots,Y_{1,v_1},\dots,Y_{n,1},\dots,Y_{n,v_n})$ of $Y$ into connected blocks $Y_{i,j}$ such that
 \begin{itemize}
 \item $(v_1,\dots,v_n)$ belongs to the Presburger set of $\varphi(x_1,\dots,x_n)$,
 \item $y\in Y_{i,j}$, $y'\in Y_{i',j'}$ implies that $y$ and $y'$ are incomparable, for all possible $(i,j)$ and $(i',j')$ with $(i,j)\not=(i',j')$,
 \item $P,Y_{i,j}\models \psi_i(Y_{i,j})$ for all $i\in[n]$ and $j\in[v_i]$.
 \end{itemize}

 \begin{exa}
   \label{ex:pmso1}
   Let $L$ be the language of Example~\ref{ex:pres1}, and $\varphi(n_a,n_b,n_c)$ be the Presburger formula of Example~\ref{ex:pres1}.
   For all $\alpha\in A$, set
   $\psi_\alpha(X)\equiv\texttt{Card}_1(X)\land\forall^X x\ \alpha(x)$,
   where $\texttt{Card}_1(X)$ is a MSO formula (thus a P-MSO formula) which is true if and only if the interpretation of $X$ has cardinality 1. 
   Then $L$ is the language of the following P-MSO sentence:
     $$\forall P\ (\forall p\ p\in P)\rightarrow \mathcal{Q}_X(P,(\psi_a(X),n_a),(\psi_b(X),n_b),(\psi_c(X),n_c),\varphi(n_a,n_b,n_c))$$
 \end{exa}

\begin{thm}
  \label{th:PMSORat}
  Let $A$ be an alphabet, and $L\subseteq SP^+(A)$.
  Then $L$ is rational if and only if is P-MSO definable.
  Furthermore the construction from one formalism to the other is effective.
\end{thm}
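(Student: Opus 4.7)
The plan is to establish both directions of the equivalence by induction, each producing effective translations.

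For the direction \emph{rational implies P-MSO-definable}, I would induct on the rational expression. Atomic languages $\{a\}$ are trivially MSO-definable. Union becomes disjunction. The sequential operators $\cdot,\,^+$, substitution $\circ_\xi$, and iterated substitution $^{*\xi}$ are translated by standard MSO encodings of sequential cuts and the tree-like N-free decomposition---these are classical MSO constructions for series-parallel posets. The parallel product $L_1\parallel L_2$ is expressed by quantifying over a partition of the domain into two mutually incomparable good blocks satisfying the relativized subformulas. The only case requiring Presburger expressivity is parallel iteration: if $\psi_L(X)$ is an inductively constructed P-MSO formula whose models (with $X$ interpreted by $Y$) are exactly the posets induced on $Y$ that belong to $L$, then $L^\circledast$ (resp.\ $L^\oplus$) is described by a sentence of the form $\exists Z\,(\forall p\, p\in Z)\land\mathcal{Q}_X(Z,(\psi_L(X),x),\varphi(x))$ with $\varphi(x)\equiv x\geq 0$ (resp.\ $x\geq 1$); the semantics of $\mathcal{Q}_X$ forces the whole poset to decompose into a parallel collection of connected blocks, each satisfying $\psi_L$.

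For the direction \emph{P-MSO-definable implies rational}, I would induct on the P-MSO formula, using the classical encoding of free variables as extra bits on the alphabet, so that each formula with $k$ free second-order variables corresponds to a language of labeled posets over $A\times\{0,1\}^k$. Atomic formulas give obvious rational languages; closure under boolean operations uses Proposition~\ref{prop:union} for union and Theorem~\ref{th:complement} for complementation; existential quantification over first- or second-order variables uses morphism-based projection (the second assertion of Proposition~\ref{prop:union}). The only novel case is the Presburger quantifier $\mathcal{Q}_X(Z,(\psi_1(X),x_1),\dots,(\psi_n(X),x_n),\varphi(x_1,\dots,x_n))$. By induction, each $\psi_i(X)$ corresponds to a rational language $M_i\subseteq SP^+(A)$ of connected posets (those eligible as a single block of the partition). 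By Theorems~\ref{th:PresburgerSemilinear} and~\ref{th:ratSemLin}, the Presburger set of $\varphi(x_1,\dots,x_n)$ is $\parallel$-rational in $\{\underline{1},\dots,\underline{n}\}^\circledast$. Substituting $\underline{i}\mapsto M_i$ into this $\parallel$-rational expression (replacing $\parallel,\cup,\,^\circledast$ in the commutative monoid by the corresponding rational operations on $SP^+(A)$) yields a rational expression describing exactly the good blocks $Y$ admitting a valid partition in the sense of $\mathcal{Q}_X$.

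The principal obstacle will be correctly formalizing the inductive hypothesis across relative quantifications: one must simultaneously maintain, for each subformula, a P-MSO formula \emph{and} a matching rational automaton/language that agree on substructures delimited by a good-block variable. This synchronization is delicate for the substitution operators in direction~1 and for $\mathcal{Q}_X$ in direction~2, where one must also verify that the connected-block decomposition used by the semantics of $\mathcal{Q}_X$ matches a maximal parallel factorization of the interpretation of $Z$. Effectiveness is immediate at each inductive step, and decidability of the P-MSO theory of finite N-free posets follows by combining the construction with the (decidable) emptiness test for branching automata, itself an instance of Proposition~\ref{prop:findPath}.
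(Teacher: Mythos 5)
Your second direction (from P-MSO to automata) is essentially the paper's own argument: induction on the formula with free variables coded into the alphabet, boolean closure via Proposition~\ref{prop:union} and Theorem~\ref{th:complement}, projection for the quantifiers, and, for $\mathcal{Q}_X$, substitution of the languages of the $\psi_i$ (intersected with the MSO-definable condition of being a connected block equal to the whole marked set) into an automaton for the Presburger set of $\varphi$, obtained from Theorems~\ref{th:PresburgerSemilinear}, \ref{th:ratSemLin} and~\ref{th:KleeneBranching}. That half is fine.

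The first direction, however, has a genuine gap. You induct on rational expressions and dismiss $\cdot$, $^{+}$, $\circ_\xi$ and $^{*\xi}$ as ``standard MSO encodings''; they are not, and they are precisely where the difficulty lies. Two things are missing. First, your induction needs every P-MSO-definable language to be relativizable to the factors of a sequential or parallel decomposition, but the quantifier $\mathcal{Q}_X(Z,\dots)$ is only given semantics when $Z$ is interpreted by a good block of the ambient poset and its inner formulas quantify relative to $X$; relativizing a $\mathcal{Q}$-formula to such factors is exactly the ``principal obstacle'' you flag yourself and leave unresolved. Second, your only genuinely new device, a single $\mathcal{Q}_X$ with $\varphi(x)\equiv x\geq 0$, handles the flat iterations $^\circledast$ and $^\oplus$ but not $\circ_\xi$ and $^{*\xi}$: there the witnessing decomposition is an unbounded, arbitrarily nested family of disjoint blocks interleaved with sequential composition, and encoding such an unbounded-depth hierarchical decomposition with finitely many set variables while checking each piece against a relativized P-MSO formula is essentially the theorem itself, not a classical construction (recall that plain MSO is strictly weaker than rationality in this setting, so no off-the-shelf word/tree argument applies verbatim). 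The paper sidesteps all of this by first converting the rational language to a branching automaton (Theorem~\ref{th:KleeneBranching}) and then encoding runs in the style of B\"uchi: existential set variables guess a sequential transition per letter and a pair of states per good maximal parallel block (identified through its witnesses), and the only Presburger ingredient needed is that each set $\mathcal{F}_{p,q}$ of multisets of state pairs realizable between a fork from $p$ and a join to $q$ is $\parallel$-rational, hence semilinear, hence a Presburger set (Lemma~\ref{lem:FqpParallelRat} with Theorems~\ref{th:ratSemLin} and~\ref{th:PresburgerSemilinear}). Reworking your first direction along these lines, or else actually proving closure of P-MSO definability under $\circ_\xi$ and $^{*\xi}$ together with a relativization lemma for $\mathcal{Q}$-formulas, is required before the proof stands.
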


The remainder of this Section is devoted to the proof of Theorem~\ref{th:PMSORat}.

\subsection{From automata to P-MSO}

The inclusion from left to right of Theorem~\ref{th:PMSORat} relies on the ideas of B\"uchi on words: the encoding of accepting paths of a branching automaton $\mathcal{A}$ into a P-MSO formula.
More precisely, for words this part of the proof consists in mapping each letter of the word to a state of the automaton, consistently with the transitions. In our case, each letter of the poset is mapped to a sequential transition of $\mathcal{A}$, and each part of the poset of the form $P=P_1\parallel\cdots\parallel P_n$ ($n>1)$, as great as possible relatively to inclusion and such that each $P_i$ is a connected block of $P$, is mapped to a pair $(p,q)$ of states; informally speaking, $p$ and $q$ are the states that are supposed to respectively begin and finish the part of the path labeled by $P$. The formula guarantees that pairs of states and sequential transitions are chosen consistently with the transitions of $\mathcal{A}$, and that, if $P=P_1\parallel\cdots\parallel P_n$ as above and $p_i \mathop{\Longrightarrow}\limits_{\mathcal{A}}^{P_i} q_i$ for all $i\in[n]$, then there exists a combination of fork transitions that connects $p$ to $p_1,\dots,p_n$, a sequence of join transitions that connects $q_1,\dots,q_n$ to $q$, such that a path $p \mathop{\Longrightarrow}\limits_{\mathcal{A}}^{P} q$ in $\mathcal{A}$ is formed.

Let us give this construction more formally.

Given two second-order variables representing sets, the  properties
$X\subset Y$,  $X\subseteq Y$, $X\cap Y=\emptyset$
are clearly definable in MSO.
The equality $x=y$ of two elements is clearly expressible with a MSO formula, that states for example that $\{x\}\subseteq\{y\}\land \{y\}\subseteq\{x\}$.
We denote 
``there exists an unique $x$'' by $\exists!x$, 
``$x$ and $y$ are different and not comparable'' by $x\parallel y$,
``there exists a non-empty set $X$'' by $\overline{\exists} X$,
``set $X$ has cardinality $j$'' by $\texttt{Card}_j(X)$, where $j$ is any integer,
``set $X$ has cardinality $>j$'' by $\texttt{Card}_{>j}(X)$, where $j$ is any integer,
``there exists (resp. for all) $x$ in $X$'' by $\exists^X x$ (resp. $\forall^X x$),
``$X$ contains all the elements'' by $\texttt{Universe}(X)$.
All those properties are definable in MSO.
In the further we will use the following shortcuts:
\begin{align*}
  y<X\equiv& \forall x\ x\in X\rightarrow y<x\\
  y\parallel X\equiv& (\lnot y\in X)\land(\forall x\ x\in X\rightarrow y\parallel x)\\
  X\parallel Y\equiv& X\cap Y=\emptyset\land\forall x\forall y (x\in X\land y\in Y)\rightarrow x\parallel y\\
  \texttt{Pred}(X,Y)\equiv&\forall x\ x\in X\rightarrow\exists y\ y\in Y\land x<y\land\lnot\exists z\ x<z\land z<y\\
  &\land\forall y\forall x( y\in Y\land x<y\land\lnot\exists z\ x<z<y )\rightarrow x\in X\\
  \texttt{Antichain}(X)\equiv&\forall^X x\forall^X x'\ \lnot(x<x'\lor x'<x)\\
  \texttt{Min}(M,X)\equiv&M\subseteq X\land \texttt{Antichain}(M)\land\forall^X x\exists^M m\ m\leq x                 
\end{align*}

\noindent The formal definitions of a block, a good block and a connected block can be directly translated into MSO formul{\ae}\ as follows:
 \begin{align*}
   \texttt{ConnectedBlock}(C,X)\equiv& \texttt{Block}(C,X)\land(\forall^C c\forall^C c'\ (c\not=c'\land\texttt{Incomp}(c,c')) \rightarrow\\&\ \hskip4cm\exists^C c''\ \texttt{Comp}(c,c'')\land\texttt{Comp}(c',c''))\\    
   \texttt{GoodBlock}(R,X)\equiv&
   \texttt{Block}(R,X)\land
   (\forall^R r\forall^R r'\forall^X x (\texttt{Comp}(x,r)\land\texttt{Incomp}(x,r'))\rightarrow R(x))\\
   \texttt{Comp}(x,y)\equiv& x<y\lor y<x\\
   \texttt{Incomp}(x,y)\equiv& (\lnot x<y)\land(\lnot y<x)\\
   \texttt{Block}(R,X)\equiv&R\subseteq X\land\texttt{Card}_{>0}(R)\land(\forall^R r\forall^X x\forall^R r'\ r<x\land x<r'\rightarrow R(x))
 \end{align*}\smallskip

\noindent Let $A$ be an alphabet and $L$ be a rational language of $SP^+(A)$ given by an automaton $\mathcal{A}=(Q,A,E,I,F)$ and $P\in SP^+(A)$.
Following B\"uchi's ideas~\cite{Buc60}, this section is devoted to the construction of a P-MSO sentence $\phi_\mathcal{A}$ such that $i \mathop{\Longrightarrow}\limits_{\mathcal{A}}^{P} f$ for some $i\in I$, $f\in F$, if and only if $P\models\phi_\mathcal{A}$. 

Informally speaking, the sentence annotates $P$ using second-order variables that encode the transitions of $\mathcal{A}$, consistently with the notion of a path.
To each transition $(p,a,q)$ we attach a second-order variable $X_{(p,a,q)}$, and use the following formula
\begin{multline*}
  \texttt{MarkSeq}\equiv \forall x\mathop\land_{a\in A}(a(x)\rightarrow\mathop\lor_{(p,a,q)\in E}X_{(p,a,q)}(x))\\
  \land(\forall x\mathop\land_{(p,a,q)\in E}(X_{(p,a,q)}(x)\rightarrow\lnot\mathop\lor_{a'\in A\atop {(p',a',q')\in E\atop(p',a',q')\not=(p,a,q)}}X_{(p',a',q')})))
\end{multline*}
in order to express that each element of $P$ labeled by $a$ is the label of a unique transition $(p,a,q)$ in a path. 
In order to express that if $(p,a,q)$ and $(p',b,q')$ are two consecutive transitions in a path then $q=p'$, we use the following formula
\begin{multline*}
  \texttt{ConsistentSeq}\equiv \forall x\forall y(\texttt{Succ}(\{y\},\{x\})\land \texttt{Pred}(\{x\},\{y\}))\\
  \rightarrow \mathop\land_{(p,a,q)\in E}(X_{(p,a,q)}(x)\rightarrow\mathop\lor_{(q,b,r)\in E}X_{(q,b,r)}(y))
\end{multline*}
We now turn to a more technical part of the construction of $\phi_\mathcal{A}$: expressing that each part of $P$ of the form $R_1\parallel\dots\parallel R_n$, $n>1$, is the label of path from a state $s$ to a state $t$ that uses fork and join transitions consistently to the definition of a path in $\mathcal{A}$. 
Second-order variables $X^-_{s,t}$ and $X^+_{s,t}$ are used to express that the part of the path labeled by $R$ starts in $s$ and finishes in $t$. The sets $X^-_{s,t}$ and $X^+_{s,t}$ are composed of particular elements of $P$ which are named \emph{witnesses} of $R$, and are attached unambiguously to $R$.

A \emph{good maximal parallel block} of $X\subseteq P$ is a good block $R$ of $X$ such that $R=P_1\parallel P_2$ for some nonempty $P_1$ and $P_2$, and maximal relatively to parallel decomposition, 
ie. there is no good block $R'$ of $X$ such that $R'=R\parallel P_3$ for some nonempty $P_3$. 
The property ``$R$ is a good maximal parallel block of $X$'' can easily be translated into a formula $\texttt{GMPB}(R,X)$. 
\begin{multline*}
  \texttt{GMPB}(R,X)\equiv
  R\subseteq X \land
  \texttt{GoodBlock}(R,X) \land
  (\overline{\exists} R_1\overline{\exists} R_2\ R=R_1\parallel R_2)\land\\
  \forall R'\ (R'\subseteq X\land \texttt{GoodBlock}(R',X) \land \overline{\exists} R_1\overline{\exists} R_2\ R'=R_1\parallel R_2)\rightarrow \lnot\overline{\exists} T\ R'=R\parallel T
\end{multline*}

\begin{exa}
  \label{ex:gmpb}
  Figure~\ref{fig:gmbp} represents a N-free poset $P$ and its good maximal parallel blocks $G_1$, $G_2$, $G_3$ and $G_4$.
  \begin{figure}
      \hskip-6cm\input{gmpb.tex.pdf}
      \hskip-6.7cm\includegraphics{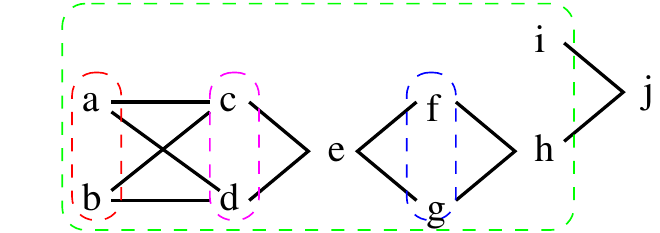}
      
      \vskip-3cm\hskip6cm\synttree[$\cdot$%
    [{{$\color{green}\parallel\ (G_1)$}}%
      [$\cdot$%
        [{{$\color{red}\parallel\ (G_2)$}}[$a$][$b$]]%
        [{{$\color{magenta}\parallel\ (G_3)$}}[$c$][$d$]]%
        [$e$]%
        [{{$\color{blue}\parallel\ (G_4)$}}[$f$][$g$]]%
        [$h$]
      ]%
      [$i$]%
    ]%
    [$j$]%
  ]%
  \caption{A N-free poset and its good maximal parallel blocks.}
  \label{fig:gmbp}
  \end{figure}
  The good maximal parallel block $G_1$ can be decomposed into $G_1=C_1\parallel C_2$ where $C_1=\{i\}$ and $C_2=G_1-C_1$ are two connected blocks.
  The good maximal parallel block $G_4$ has two connected blocks: $\{f\}$ and $\{g\}$.
  Any N-free poset $P$ can be represented by a labeled tree, as in the figure, where internal nodes are labeled by sequential or parallel products, and leaves by elements of $P$, and such that no internal node has the same label as one of its sons. Formally, these trees are not terms as we defined them because of the arity of internal nodes. Good maximal parallel blocks of $P$ correspond to the sub-trees whose root is a node labeled by $\parallel$.
\end{exa}

The following Lemma holds.
\begin{lem}
  \label{lem:compGMPB}
  Let $P$ be a N-free poset, $X\subseteq P$, and $R,R'$ be two good maximal parallel blocks of $X$.
  Then either $R<R'$ or $R'<R$ or $R\parallel R'$ or $R\subseteq R'$ or $R'\subseteq R$.
\end{lem}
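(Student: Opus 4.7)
The plan is to case-split on whether $R\cap R'$ is empty or not. Both cases rest on the preliminary observation that, for any good block $B$ of $X$, every $x\in X\setminus B$ either lies strictly above all of $B$, strictly below all of $B$, or is incomparable with every element of $B$: goodness forces $x$ to be comparable with every element of $B$ or incomparable with every one of them, and in the comparable case the convexity of $B$ prevents $x$ from sitting strictly between two elements of $B$.

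Suppose first $R\cap R'=\emptyset$. I would show that all elements of $R$ occupy a single position with respect to $R'$: if some $r_1\in R$ is above $R'$ and some $r_2\in R$ is parallel to $R'$, then any $s\in R'$ is comparable with $r_1$ and incomparable with $r_2$, so goodness of $R$ forces $s\in R$, contradicting $R\cap R'=\emptyset$; the ``above/below'' combination is excluded analogously using convexity of $R$. This yields $R<R'$, $R'<R$, or $R\parallel R'$.

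Suppose now that $R\cap R'\neq\emptyset$ and, for contradiction, that both $R\setminus R'$ and $R'\setminus R$ are nonempty. The same goodness/convexity reasoning, applied to two hypothetical elements of $R\setminus R'$ in different positions relative to $R'$, would force every $s\in R'$ into $R$, whence $R'\subseteq R$, contradicting $R'\setminus R\neq\emptyset$. So all of $R\setminus R'$ occupies a single position with respect to $R'$. If that position is ``above'' (symmetrically ``below''), then $R\cap R'\subseteq R'$ gives $R\cap R'<R\setminus R'$ pointwise, so $R=(R\cap R')\cdot(R\setminus R')$ is a sequential factorization of $R$; but $R$ is a good maximal parallel block and therefore also admits a parallel factorization, which is impossible for an N-free poset (a parallel factorization $A\parallel B$ and a sequential factorization $C\cdot D$ of the same poset would force $A\cup B$ to lie entirely inside $C$ or entirely inside $D$, making the other empty).

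The remaining case --- and the principal obstacle --- is when $R\setminus R'$ is entirely parallel to $R'$. A symmetric argument then gives that $R'\setminus R$ is entirely parallel to $R$, hence $R\cup R'=R\parallel(R'\setminus R)$ has a parallel factorization strictly extending that of $R$. To contradict the maximality of $R$ I must verify that $R\cup R'$ is itself a good block of $X$. Convexity and goodness are checked by cases on where each witness lies among $R\setminus R'$, $R'\setminus R$, and $R\cap R'$; the only delicate situation in the goodness check is when the two witnesses $u,v$ sit respectively in $R\setminus R'$ and $R'\setminus R$, so that neither goodness of $R$ nor goodness of $R'$ applies directly. Choosing some $y\in R\cap R'$, if the tested element $x$ were outside $R\cup R'$, then goodness of $R$ applied to $u$ would force $x$ comparable with every element of $R$ (in particular with $y$), while goodness of $R'$ applied to $v$ would force $x$ incomparable with every element of $R'$ (in particular with the same $y$), a contradiction. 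Once $R\cup R'$ is known to be a good block with parallel factorization $R\parallel T$ where $T=R'\setminus R\neq\emptyset$, the maximality of $R$ is violated and the proof is complete.
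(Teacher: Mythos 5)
Your proof is correct. It shares the paper's overall skeleton: in the disjoint case, show that all of $R$ sits in a uniform position relative to $R'$ (above, below, or parallel), and in the overlapping case with neither block contained in the other, show that $R\cup R'$ would be a good parallel block of the form $R\parallel T$ with $T\neq\emptyset$, contradicting the maximality of $R$. Where you genuinely diverge is in how the needed (in)comparabilities are obtained. The paper's treatment of the overlapping case leans heavily on N-freeness: it repeatedly exhibits four elements that would form a forbidden N (using auxiliary elements $a\in R$, $b\in R'$ taken from the parallel decompositions) to rule out comparabilities between $R\cap R'$ and the differences. You instead extract a single preliminary trichotomy (an element of $X$ outside a good block is above all of it, below all of it, or incomparable to all of it, by goodness plus convexity), prove uniformity of position over $R\setminus R'$, and kill the ``above/below'' positions by noting that $R$ would then admit a sequential factorization on top of its parallel one, which no poset admits. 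The payoff is that your argument never invokes N-freeness, so it establishes the lemma for arbitrary finite posets with these definitions, and it is arguably cleaner to check; the paper's argument is tailored to the N-free setting it lives in. One small inaccuracy: you credit the incompatibility of a sequential and a parallel factorization to N-freeness, but your own parenthetical argument (any $a\in A$, $b\in B$ must fall in the same sequential factor, forcing the other factor to be empty) is valid for all posets, so the hypothesis is not needed there.
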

\begin{proof}
  First assume $R\cap R'=\emptyset$. Let $R_1,R_2,R'_1,R'_2$ be such that $R=R_1\parallel R_2$ and $R'=R'_1\parallel R'_2$. Assume that there exist $r\in R$, $r'\in R'$ such that $r$ and $r'$ are comparable. Wlog suppose $r<r'$. As $R$ is a good block, then $x<r'$ for all $x\in R$. It follows easily that $R<R'$. If all the elements of $R$ and $R'$ are incomparable then $R\parallel R'$.

  Assume now that there exists $x\in R\cap R'$. Assume that 
one is not included into the other. Thus there exist $r\in R-R'$ and $r'\in R'-R$. Assume that $x$ and $r$ are comparable, say wlog. $r<x$. If $r$ and $r'$ are incomparable then $r\in R'$ because $R'$ is a good block. So necessarily $r<r'$ otherwise $R'$ would not be a good block. If $x$ and $r'$ are incomparable then $R$ is not a good block, so $x$ and $r'$ are comparable and necessarily $x<r'$ otherwise $R$ would not be a block. Now, as $R$ and $R'$ are both parallel blocks, there exist $a\in R$ and $b\in R'$ such that $a$ is incomparable to $r$ and $x$, and $b$ is incomparable to $r'$ and $x$. As $R$ and $R'$ are good blocks then $a<r'$ and $r<b$. Because $P$ is N-free then $a<b$. Thus $\{r,x,a,b\}$ forms an N, which is a contradiction. So $r$ (and $r'$) is uncomparable to $x$. As a consequence $r$ and $r'$ are also incomparable, otherwise $R$ and $R'$ would not be good blocks. As $r$ is incomparable to $x$, we also have that if $R=R_1\parallel\cdots\parallel R_n$ is a decomposition of $R$ and $x\in R_i$ for some $i\in[n]$ then $R_i\subseteq R\cap R'$. Also, decomposing $R'$ into $R'=R'_1\parallel\cdots\parallel R'_m$, if $R_i,R'_j\not\subseteq R\cap R'$, then the elements of $R_i$ are incomparable to those of $R'_j$. Thus, consider $R\cup R'$: it is a parallel block. Assume it is not good. There exist $x\in X-(R\cup R')$, $a,b\in R\cup R'$ such that $x$ is comparable to $a$ and incomparable to $b$. If $a,b\in R$ (resp. $a,b\in R'$) then $R$ (resp. $R'$) is not a good block. So assume $a\in R-R'$ and $b\in R'-R$. Necessarily $r$ is comparable to all the elements of $R$ and incomparable to those of $R'$. Thus $R\cap R'=\emptyset$, which is a contradiction. As a conclusion, $R\cup R'$ is a good parallel block, $R,R'\subset R\cup R'$, so $R$ and $R'$ are not maximal: the intersection of two good maximal parallel blocks is necessarily empty.
\end{proof}

The following Lemma shows that the set of minimum (or maximum) elements of a good maximal parallel block $G$ characterizes $G$:
\begin{lem}
  \label{lem:unicityGMPBMin}
  Two different good maximal parallel blocks of $P$ can not have the same set of minimum (resp. maximum) elements.
\end{lem}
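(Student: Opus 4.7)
The plan is to combine Lemma~\ref{lem:compGMPB} with a case analysis on the possible positions of elements of $G'\setminus G$ relative to $G$. Let $G$ and $G'$ be two GMPBs of $P$ sharing the same minimum set $M$. Lemma~\ref{lem:compGMPB} (applied with $X=P$) leaves five configurations: $G<G'$, $G'<G$, $G\parallel G'$, $G\subseteq G'$, or $G'\subseteq G$. The first three force $G\cap G'=\emptyset$, hence $M=\emptyset$, contradicting the nonemptiness of any GMPB. By symmetry it therefore suffices to rule out $G\subsetneq G'$.

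Fix $x\in G'\setminus G$. Since $G$ is a good block of $P$ and $x\notin G$, either $x$ is comparable to every element of $G$ or $x$ is incomparable to every element of $G$; in the former case the block property of $G$ further implies that $x$ is uniformly above or uniformly below $G$. If $x$ is below all of $G$, then in particular $x<m$ for every $m\in M=\min(G')$, which is absurd since $x\in G'$ must dominate at least one minimum of $G'$. If $x$ is above all of $G$, write the maximal parallel factorization $G'=G'_1\parallel\dots\parallel G'_m$ with $x\in G'_j$; for any $k\neq j$ and any $y\in\min(G'_k)$ one has $y\in M\subseteq G$, hence $y<x$, yet $y$ and $x$ lie in distinct parallel components of $G'$ and are therefore incomparable, a contradiction. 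Consequently every element of $X:=G'\setminus G$ is incomparable to every element of $G$.

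I then derive the final contradiction by showing that $X=\emptyset$. Each factor $G'_j$ of the maximal parallel factorization of $G'$ admits no further parallel factorization and, being either a singleton or endowed with a sequential factorization, is a connected block of $G'$ in the sense defined in this paper. If some $G'_j$ met both $G$ and $X$, then picking $y\in G'_j\cap X$ and $z\in G'_j\cap G$ would give distinct incomparable elements of $G'_j$, so the connected-block property would produce $c''\in G'_j$ with $y,z\leq c''$ or $c''\leq y,z$; in either case $c''\in G$ would force $y$ to be comparable to an element of $G$, while $c''\in X$ would force $z$ to be comparable to an element of $X$, both contradicting the conclusion of the previous paragraph. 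Hence each $G'_j$ is contained in $G$ or in $X$. Since $\min(G'_j)\subseteq M\subseteq G$ is nonempty, each $G'_j$ meets $G$, so $G'_j\subseteq G$ for every $j$, yielding $G'\subseteq G$ in contradiction with $G\subsetneq G'$. The maximum version follows by the symmetric argument, exchanging $\min$ with $\max$ and ``below'' with ``above''; the only delicate step is the connected-block case analysis, the rest being a direct use of goodness and of the parallel decomposition.
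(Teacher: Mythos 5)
Your proof is correct, and its skeleton is the same as the paper's: Lemma~\ref{lem:compGMPB} reduces the situation to $G\subsetneq G'$, and the contradiction comes from an element $x\in G'\setminus G$ lying above all of the common minima, which is incompatible with $G'$ admitting a parallel factorization. The one divergence is your third paragraph: the case in which every element of $G'\setminus G$ is incomparable to every element of $G$ is in fact vacuous, so the connected-block argument, while sound, is unnecessary. Indeed, for any $x\in G'\setminus G$ we have $x\notin M$ (since $M=\min(G)\subseteq G$), and as $M=\min(G')$ and $x\in G'$, finiteness yields some $m\in M$ with $m<x$; thus $x$ is automatically comparable to an element of $G$, and goodness of $G$ then puts $x$ above all of $M$, which is exactly the paper's route straight into your ``$x$ above'' case and its parallel-decomposition contradiction. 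So the paper's proof is essentially your second paragraph alone; your extra case analysis costs length but nothing in correctness.
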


\begin{proof}
  Assume by contradiction that $G$ and $G'$ have the same set $M$ of minimum elements (the proof is the same if $M$ is the set of maximum elements).
  According to Lemma~\ref{lem:compGMPB}, $G\subset G'$ or $G'\subset G$, say wlog. $G\subset G'$.
  There exists $x\in G'-G$.
  Necessarily $x\not\in M$, thus there exists $m_1\in M$ such that $m_1<x$. Observe that $x$ is necessarily greater than all the elements of $M$, otherwise $G'$ would not be a good part of $P$. But this implies that $G'$ can not be decomposed into $G'=G'_1\parallel G'_2$, so $G'$ is not a good maximal parallel block, in contradiction with the hypothesis.
\end{proof}

Let $G$ be a good maximal parallel block of a N-free poset P. The set of \emph{witnesses} of $G$ is the union of the set of \emph{left witnesses} of $G$ and the set of \emph{right witnesses} of $G$, respectively denoted $Wit_L(G)$ and $Wit_R(G)$ and defined by:
\begin{align*}
Wit_L(G) :& \text{the greatests }x\in P\text{ such that }x<G\text{ and there is no }y\in P-G\\&\text{ such that }y\parallel G\text{ and }x<y;\\
Wit_R(G) :& \text{the smallests }x\in P\text{ such that }x>G\text{ and there is no }y\in P-G\\&\text{ such that }y\parallel G\text{ and }x>y.
\end{align*}
or equivalently:
\begin{align*}
  Wit_L(G)=\{ w\in\max\{x\in P : x<G\} : Succ(w)=\min(G)\}\\
  Wit_R(G)=\{ w\in\min\{x\in P : G<x\} : Pred(w)=\max(G)\}
\end{align*}

\begin{exa}
  In Example~\ref{ex:gmpb},
  \begin{multicols}{2}
  \begin{itemize}
  \item $Wit_L(G_1)=\emptyset$ and $Wit_R(G_1)=\{j\}$,
  \item $Wit_L(G_2)=\emptyset$ and $Wit_R(G_2)=\{c,d\}$,
  \item $Wit_L(G_3)=\{a,b\}$ and $Wit_R(G_3)=\{e\}$,
  \item $Wit_L(G_4)=\{e\}$ and $Wit_R(G_4)=\{h\}$.
  \end{itemize}
  \end{multicols}
\end{exa}

\begin{exa}
    Figure~\ref{fig:gmbp2} represents a N-free poset $P$ and its good maximal parallel blocks $G_1$, $G_2$, $G_3$ and $G_4$.
  \begin{figure}
     \hskip-6cm\input{gmpb2.tex.pdf}
     \hskip-6.7cm\includegraphics{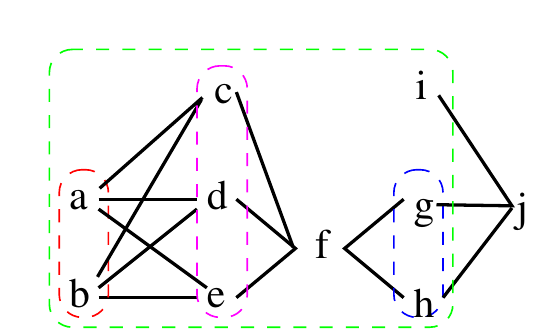}

     \vskip-3cm\hskip6cm\synttree[$\cdot$%
    [{{$\color{green}\parallel\ G_1$}}%
      [$\cdot$%
        [{{$\color{red}\parallel\ G_2$}}[$a$][$b$]]%
        [{{$\color{magenta}\parallel\ G_3$}}[$c$][$d$][$e$]]%
        [$f$]%
        [{{$\color{blue}\parallel\ G_4$}}[$g$][$h$]]%
      ]%
      [$i$]%
    ]%
    [$j$]%
  ]%
  \caption{A N-free poset and its good maximal parallel blocks.}
  \label{fig:gmbp2}
  \end{figure}
  We have
  \begin{multicols}{2}
  \begin{itemize}
  \item $Wit_L(G_1)=\emptyset$ and $Wit_R(G_1)=\{j\}$,
  \item $Wit_L(G_2)=\emptyset$ and $Wit_R(G_2)=\{c,d,e\}$,
  \item $Wit_L(G_3)=\{a,b\}$ and $Wit_R(G_3)=\{f\}$,
  \item $Wit_L(G_4)=\{f\}$ and $Wit_R(G_4)=\emptyset$.
  \end{itemize}
  \end{multicols}
  Observe that $f$ is both a right witness of $G_3$ and a left witness of $G_4$. Observe also that $j$ is not a right witness of $G_4$.
\end{exa}

Let us denote $W^-(G)=Pred(\min G)$, $W^+(G)=Succ(\max G)$ and $W(G)=W^-(G)\cup W^+(G)$.
\begin{lem}
  \label{lem:wit1}
  Either $W^-(G)$ or $W^+(G)$ or $W(G)$ is the set of witnesses of $G$.
\end{lem}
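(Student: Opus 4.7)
The plan is to analyze the position of $G$ in the canonical series-parallel tree decomposition of $P$, which exists because $P$ is N-free --- as already recalled in Example~\ref{ex:gmpb}, good maximal parallel blocks correspond precisely to subtrees rooted at a $\parallel$-labeled internal node. If $G=P$ (the $\parallel$-node is the root of the tree), then all of $W^-(G)$, $W^+(G)$, $W(G)$ and the witness set of $G$ are empty, so the lemma holds trivially. Otherwise, the $\parallel$-node $v$ corresponding to $G$ has a $\cdot$-parent $u_1$ with sequentially-ordered children $C_1,\ldots,C_n$ ($n\geq 2$, since $\cdot$-nodes in the canonical tree are never unary), and $v=C_j$ for some $j\in[n]$.

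The technical core consists of two claims that I would prove in turn. First, $Wit_L(G)\subseteq W^-(G)$: if $x\in Wit_L(G)$ and some $z$ satisfied $x<z<m$ for an $m\in\min G$, then $z\notin G$ (as $m$ is minimal in $G$); $z$ is comparable to $m\in G$, so by the good block property (applied together with the parallel decomposition $G=G_1\parallel G_2$) $z$ would be comparable to all of $G$ and in fact $z<G$, contradicting the maximality of $x$. So $x$ is an immediate predecessor of every element of $\min G$. Second, $Wit_L(G)\in\{\emptyset,W^-(G)\}$: for $x\in Wit_L(G)$ and $x'\in W^-(G)$, the maximality of $x'$ in $\{p:p<G\}$ follows from the first claim applied to $x'$; the parallel condition for $x'$ is obtained by an N-free contradiction. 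If there were $y\in P$ with $y\parallel G$ and $x'<y$, then picking any $m\in\min G$ the four elements $\{x,x',y,m\}$ would exhibit exactly the three comparabilities $x<m$, $x'<m$, $x'<y$ and the three incomparabilities $x\parallel x'$ (both maximal in the common ``previous child'' block), $y\parallel m$ (from $y\parallel G$), and $x\parallel y$ (since $x\in Wit_L(G)$ forbids $x<y$, while $y<x<m$ would contradict $y\parallel m$) --- precisely the forbidden N pattern, contradicting N-freeness. A symmetric argument gives $Wit_R(G)\in\{\emptyset,W^+(G)\}$.

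With these two claims, the witness set is one of $\emptyset$, $W^-(G)$, $W^+(G)$, or $W(G)$, and it only remains to rule out its being $\emptyset$ while $W^-\cup W^+\neq\emptyset$. I would settle this by case analysis on $j$. If $1<j<n$, both $W^-(G)=\max C_{j-1}$ and $W^+(G)=\min C_{j+1}$ are nonempty, and every $y\parallel G$ lies outside $u_1$'s subtree at some $\parallel$-ancestor; its least common ancestor with any $x\in\max C_{j-1}\cup\min C_{j+1}$ is a $\parallel$-node, giving $y\parallel x$, so both parallel conditions hold and the witness set equals $W(G)$. If $j=1$ and $W^-(G)\neq\emptyset$, then $W^-$ is located at a higher $\cdot$-ancestor $u_{2k+1}$ of $v$ with $k\geq 1$, and the siblings of $u_1$ under $u_2=\parallel$ supply $y\parallel G$ with $x<y$ for every $x\in W^-(G)$ (their least common ancestor is the $\cdot$-node $u_{2k+1}$, and $x$ sits in a child earlier than $u_{2k}$), forcing $Wit_L=\emptyset$; meanwhile the middle-case reasoning applied on the right yields $Wit_R=W^+=\min C_2\neq\emptyset$, so the witness set is $W^+(G)$. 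The case $j=n$ is symmetric, and $j=1=n$ is ruled out by $n\geq 2$. The main obstacle in this plan is the N-free contradiction at the parallel-condition step of the second claim: all six pairwise relations among $\{x,x',y,m\}$ must line up exactly to produce the forbidden pattern; the rest of the argument is routine bookkeeping about ancestors in the SP tree.
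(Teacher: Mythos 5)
Your proof is correct, but it takes a genuinely different route from the paper's. The paper never touches the tree decomposition: after noting $Wit_L(G)\subseteq W^-(G)$, $Wit_R(G)\subseteq W^+(G)$ and that witnesses are all-or-nothing on each side (because in an N-free poset all immediate predecessors of an element have the same successors), its key step is purely order-theoretic: assuming some $w\in W^-(G)$ has a spoiler $r\parallel G$ with $w<r$, it grows $r$ into a good block $R=GB(r,G)$ with $w<R$, then uses the maximality of $G$ to extract an element $z$ comparable to all of $G$ and parallel to $R$; taking $z$ minimal above $G$ and running several N-freeness arguments shows $z$ is a witness lying in $W^+(G)$, which yields the trichotomy. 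You instead anchor everything in the canonical series-parallel tree, identifying good maximal parallel blocks with subtrees rooted at $\parallel$-nodes --- a fact the paper only asserts informally in Example~\ref{ex:gmpb}, so strictly speaking you should prove it --- you get the all-or-nothing claim from a single explicit four-element N (your configuration $\{x,x',y,m\}$ does check out, but the justification hiding behind ``the first claim applied to $x'$'' is really the statement that every element of $W^-(G)$ lies below all of $G$ and is an immediate predecessor of every minimal element of $G$, proved by the same goodness argument; that is what gives $x'<m$, the maximality of $x'$, and $x\parallel x'$), and you finish by the position $j$ of $G$ among the children of its $\cdot$-parent. That case analysis is sound and in fact stronger than required: it pins down exactly which of $W^-(G)$, $W^+(G)$, $W(G)$ the witness set is ($W(G)$ if $1<j<n$, $W^+(G)$ if $j=1$, $W^-(G)$ if $j=n$) and yields the subsequent corollary (every good maximal parallel block other than $P$ has a witness) for free. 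What each approach buys: the paper's argument is self-contained in the vocabulary it has set up, at the price of a delicate chain of N-freeness case distinctions; yours is more transparent and modular once the tree correspondence is granted, but it imports that correspondence and leaves the identifications $W^-(G)=\max C_{j-1}$, $W^+(G)=\min C_{j+1}$ and the location of $W^-(G)$ when $j=1$ as routine bookkeeping that a full write-up would have to spell out.
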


\begin{proof}
  Observe that $Wit_L(G)\subseteq W^-(G)$ and $Wit_R(G)\subseteq W^+(G)$, thus the witnesses of $G$ are a subset of $W(G)$.
  Observe also that if $x\in W^-(G)$ (resp. $W^+(G)$) is a witness of $G$, then all the elements of $W^-(G)$ (resp. $W^+(G)$) are witnesses of $G$. Indeed, let $G=G_1\parallel\dots\parallel G_n$ with $n>1$ and $x\in W^-(G)$ such that $x$ is a witness of $G$. Then $x$ is a predecessor of a minimum of $G$. As $G$ is a good block, then $x$ is also less than all the minimums of $G$, and $x$ is necessarily a predecessor of all those minimums. As in a N-free poset all the predecessors of an element have the same successors it follows that if $x$ is a witness of $G$ then $W^-(G)$ contains only witnesses of $G$. We argue similarly with the elements of $W^+(G)$.
  Now let $w\in W(G)$.
  Wlog., assume that $w\in W^-(G)$. 
  Assume that $w$ is not a witness of $G$: there exists $r$ such that $r\parallel G$ and $w<r$. Let $R=GB(r,G)$ be the smallest block of elements $r'\in P$ such that 
\begin{itemize}
\item $r\in GB(r,G)$ and
\item if $r'$ is comparable to some $r''\in GB(r,G)$ and $r'\parallel G$ then  $r'\in GB(r,G)$
\end{itemize}
Observe that $GB(r,G)$ is a good block for all $r$ such that $r\parallel G$. Indeed, assume that it is not: there exist $r',r''\in GB(r,G)$, $g\in G$ and $t\in P-GB(r,G)$ such that $t$ is comparable to $g$ and $r'$ and incomparable to $r''$, and it is just verification to check that $P$ is not N-free.

Assume there exists $r'\in R$ such that $w$ is incomparable with $r'$. If $r'<r$ then $w,g,r',r$ form an N for any $g\in G$. If $r'$ and $r$ are incomparable then there exists $r''\in R$ such that either $r,r'<r''$ or $r''<r,r'$, otherwise $R$ would not be as small as possible. If  $r,r'<r''$ then $w,g,r'',r'$ is an N for any $g\in G$. The case $r''<r,r'$ is similar. Thus $w<R$.

  As $G\cup R$ is not a good parallel block, but $G$ and $R$ are, there exists $z\in P-(G\cup R)$ such that $z$ is comparable to all the elements of $G$ and incomparable to all the elements of $R$, or $z$ is incomparable to all the elements of $G$ and comparable to all the elements of $R$. As this latter case would imply that $z\in R$, then only the first case is possible. If $z<G$ and $z\parallel R$ then $z$ and $w$ are different, and $z\parallel w$ is impossible, otherwise there would be an N formed by $w,z,r$ and any element of $G$. It is impossible that $z<w$ because it implies $z<w<r$. It is also impossible that $w<z<G$. Now take $z$ as small as possible such that $G<z$ and $z\parallel R$. By contradiction, assume that $z$ is not a witness of $G$; there exists $p\in P-(R\cup G)$ such that $p<z$ and $G\parallel p$. Let $Q=GB(p,G)$. Then $Q$ is a good block, $Q\parallel G$, $Q\parallel R$ and $Q<z$. As $Q\cup G$ can not be a good parallel block (otherwise $G$ would not be a maximal good parallel block), then exists $t\in P-(Q\cup G)$ such that either
\begin{itemize}
\item $t$ is comparable to an element of $Q$ and incomparable to an element of $G$, and so to all the element of $G$ because $G$ is a good block; this implies $t\in Q$ which is in contradiction;
\item or $t$ is incomparable to an element of $Q$ (and thus to all the elements of $Q$) and comparable to an element of $G$ (and thus to all the elements of $G$). They are two cases. If $t<G$ then $t$ is necessarily incomparable and different to $w$, thus for all $g\in G$, $t,p,w,g$ form an N, which is a contradiction. Otherwise, if $G<t$, then necessarily $t<z$ and $t\parallel R$, and thus $z$ is not as small as possible such that $G<z$ and $z\parallel R$, which is also a contradiction.
\end{itemize}
Thus $z$ is a witness of $G$, and $z\in W^+(G)$.

We use similar arguments to show that if $x\in W^+(G)$ is not a witness of $G$ then $W^-(G)$ is the set of witnesses of $G$.
\end{proof}

\begin{cor}
  Every good maximal parallel block $G$ of $P$ which is not $P$ itself has a witness.
\end{cor}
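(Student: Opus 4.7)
The plan is to reduce the corollary to Lemma~\ref{lem:wit1}. First I would show that $G\neq P$ forces $W(G)=W^-(G)\cup W^+(G)$ to be nonempty, and then invoke Lemma~\ref{lem:wit1} to deduce that the set of witnesses is itself nonempty.

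The nonemptiness of $W(G)$ comes from a maximality argument. If every element of $P-G$ were incomparable to every element of $G$, then $P$ would equal $G\parallel(P-G)$ as a poset, exhibiting itself as a good block of $P$ of the form $G\parallel P_3$ with $P_3=P-G$ nonempty (since $G\neq P$), contradicting the maximality of $G$. Hence some $x\in P-G$ is comparable to an element of $G$. The defining property of good blocks then forces $x$ to be comparable to \emph{every} element of $G$, and the convexity (block) property rules out a mixed configuration $g_1<x<g_2$ with $g_1,g_2\in G$, so either $x<G$ or $G<x$. In the first case, picking $g\in\min G$ and a maximal chain $x=z_0<z_1<\cdots<z_n=g$ of immediate successors in $P$, the element $z_{n-1}$ is an immediate predecessor of $g$ that cannot lie in $G$ (else $g$ would fail to be a minimum), so $z_{n-1}\in\mathit{Pred}(\min G)=W^-(G)$. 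The case $G<x$ is symmetric and produces an element of $W^+(G)$. Either way, $W(G)\neq\emptyset$.

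To finish, Lemma~\ref{lem:wit1} asserts that the set of witnesses of $G$ equals one of $W^-(G)$, $W^+(G)$, or $W(G)$; since $W(G)\neq\emptyset$, it suffices to exclude the possibility that the witness set coincides with a side that is empty while the other side is nonempty but witness-free. This is the main obstacle of the proof, and I would resolve it by appealing to the symmetry of the construction inside the proof of Lemma~\ref{lem:wit1}: if say $W^-(G)=\emptyset$ and $W^+(G)\neq\emptyset$, then any non-witness in $W^+(G)$ would yield, by the symmetric version of the argument carried out there, a witness in $W^-(G)$, a contradiction. Hence all elements of $W^+(G)$ are witnesses; the opposite case is handled identically, concluding the proof.
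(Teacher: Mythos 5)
Your proof is correct and takes essentially the same route as the paper: both first derive $W(G)\neq\emptyset$ from the maximality of $G$ (any element of $P-G$ incomparable to all of $G$ would make $G\parallel(P-G)$ a larger good parallel block), and both then reuse the symmetric argument \emph{inside} the proof of Lemma~\ref{lem:wit1} — rather than only its statement — to exclude the degenerate possibility that the witness set is an empty side of $W(G)$ while the other side is nonempty but witness-free.
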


\begin{proof}
  If $W(G)$ is empty then there is no $p\in P-G$ which is comparable to an element of $G$. Thus $(P-G)\not=\emptyset$ is a good block, and $(P-G)\cup G$ a good parallel block, which contradict the maximality of $G$.
  If $W^-(G)$ is not empty but has no witnesses of $G$, then using the same arguments as in proof of Lemma~\ref{lem:wit1} the witnesses of $G$ are the elements of $W^+(G)$ which is not empty. Similarly, if $W^+(G)$ is not empty but has no witnesses of $G$ then the witnesses of $G$ are the elements of $W^-(G)$ which is not empty.
\end{proof}

\begin{lem}
  \label{lem:Succ1}
  Let $P$ be a N-free poset, and $a,b,c,d\in P$ such that $a$ is a predecessor of $b$, $c$ a predecessor of $d$, $a<d$ and $c<b$.
  Then $a$ is a predecessor of $d$ and $c$ a predecessor of $b$.
\end{lem}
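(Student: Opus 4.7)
The plan is to argue by contradiction. Suppose $a$ is not a predecessor of $d$; then there exists $e\in P$ with $a<e<d$. I would then do a case analysis on how $e$ compares to $b$. The three ``comparable'' cases are easy: if $e<b$ then $a<e<b$ contradicts $a$ being a predecessor of $b$; if $e=b$ then $c<b=e<d$ contradicts $c$ being a predecessor of $d$; if $b<e$ then $c<b<e<d$ again contradicts $c$ being a predecessor of $d$.

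The real work is in the case $e\parallel b$. Here I would sub-case on how $c$ compares to $e$. If $c<e$ then $c<e<d$ contradicts the immediacy of $c<d$. If $c=e$ then $c=e\parallel b$ contradicts $c<b$. If $e<c$ then, combined with $a<e$, we get $a<c$, so from $c<b$ we obtain $a<c<b$, contradicting $a$ being a predecessor of $b$. The delicate sub-case is $c\parallel e$. In this sub-case I would further analyse $a$ versus $c$: $a<c$ yields $a<c<b$ (contradiction), $a=c$ gives $a\parallel e$ (contradicting $a<e$), and $c<a$ with $a<e$ gives $c<e$ (contradicting $c\parallel e$). The remaining possibility is $a\parallel c$.

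At this point I expect the main obstacle: exhibiting the forbidden N. With $a\parallel c$, $e\parallel b$, and $c\parallel e$ together with $a<b$, $c<b$, $a<e$, the four elements $\{c,b,a,e\}$ have exactly and only the order relations $c<b$, $a<b$, $a<e$, which is precisely the N-pattern $x_1<x_2$, $x_3<x_2$, $x_3<x_4$ with $(x_1,x_2,x_3,x_4)=(c,b,a,e)$. This contradicts the N-freeness of $P$. Since all cases yield a contradiction, no such $e$ exists, and hence $a$ is a predecessor of $d$.

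Finally, the statement ``$c$ is a predecessor of $b$'' follows by the symmetry of the hypotheses under the exchange $(a,b)\leftrightarrow(c,d)$: the hypotheses are invariant under this swap, so the same argument with the roles of $(a,b)$ and $(c,d)$ interchanged establishes the second conclusion. Thus only one direction actually needs to be verified in detail, and the whole proof reduces to the careful enumeration described above, with the single genuinely non-trivial step being the identification of the induced N in the final sub-case.
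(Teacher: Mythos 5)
Your proof is correct and follows essentially the same route as the paper: by contradiction you pick an element $e$ strictly between $a$ and $d$, show it must be incomparable to both $b$ and $c$ (and that $a\parallel c$), and then exhibit the forbidden N on $\{c,b,a,e\}$, with the second conclusion obtained by the symmetry $(a,b)\leftrightarrow(c,d)$. The paper's proof is just a terser version of this (it takes the intermediate element to be a successor of $a$ and asserts the incomparabilities without your explicit case analysis), so there is no substantive difference.
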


\begin{proof}
  Necessarily $a\parallel c$ and $b\parallel d$.
  By contradiction assume that the statement of the Lemma is false, for example that $a$ is not a predecessor of $d$. Then, there exists $x$ successor of $a$ such that $x<d$.
  Then $x$ is incomparable with $b$ and to $c$, thus $x,a,b,c$ is an N.
\end{proof}

\begin{lem}
  \label{lem:wit2}
  Every $x\in P$ is a left (resp. right) witness of at most one good maximal parallel block of $P$.
\end{lem}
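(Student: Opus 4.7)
The plan is to exploit the equivalent characterization of witnesses stated just after their definition, namely
\[
Wit_L(G)=\{w\in\max\{x\in P : x<G\} : Succ(w)=\min(G)\}
\]
and symmetrically $Wit_R(G)=\{w\in\min\{x\in P : G<x\} : Pred(w)=\max(G)\}$. Suppose by contradiction that some $x$ is a left witness of two distinct good maximal parallel blocks $G$ and $G'$ of $P$. The characterization immediately yields $\min(G)=Succ(x)=\min(G')$, so $G$ and $G'$ share their set of minimum elements, and Lemma~\ref{lem:unicityGMPBMin} forces $G=G'$, a contradiction. The case of right witnesses is perfectly symmetric: one obtains $Pred(x)=\max(G)=\max(G')$ and invokes the ``respectively maximum'' half of Lemma~\ref{lem:unicityGMPBMin}.

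The only step that may need a short additional argument is the equivalence between the two definitions of witnesses, since the paper asserts it without proof. For the forward direction I would show that, if $w$ is a left witness of $G$ in the original sense and $z$ is any immediate successor of $w$, then $z\in\min(G)$: the assumption $z\notin G$ forces, by goodness of $G$, either $z<G$ (contradicting maximality of $w$ in $\{x:x<G\}$), or $z\parallel G$ (contradicting the clause ``no $y\in P-G$ with $y\parallel G$ and $w<y$''), or $g<z$ for some $g\in G$ (which via $w<g<z$ contradicts immediacy). Conversely, if $w\in\max\{x:x<G\}$ and $Succ(w)=\min(G)$ but some $y\in P-G$ were parallel to $G$ with $w<y$, any immediate successor of $w$ on a chain from $w$ to $y$ would lie in $\min(G)\subseteq G$ while being $\le y$, forcing $y$ to be comparable to an element of $G$, a contradiction.

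Once the equivalence is in hand, the whole lemma reduces to a one-line appeal to Lemma~\ref{lem:unicityGMPBMin}, so the main (and rather modest) obstacle is the verification of this equivalence of definitions rather than any genuinely new combinatorial argument about N-free posets.
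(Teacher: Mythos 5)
Your route is genuinely different from the paper's, and once completed it works. The paper's own proof of this lemma does not use the displayed reformulation $Wit_L(G)=\{w\in\max\{x\in P : x<G\} : Succ(w)=\min(G)\}$ at all: it only uses the inclusion $Wit_L(G)\subseteq W^-(G)=Pred(\min G)$, shows via Lemmas~\ref{lem:wit1} and~\ref{lem:Succ1} that two blocks with a common left witness satisfy $W^-(G_1)=W^-(G_2)$, deduces $\min G_1=\min G_2$, and then concludes by a direct goodness argument (essentially re-proving Lemma~\ref{lem:unicityGMPBMin} instead of citing it). Your version concentrates all the combinatorics into the equivalence of the two definitions of witnesses and then concludes in one line from Lemma~\ref{lem:unicityGMPBMin}; this is cleaner and more modular, at the price of having to actually prove the ``or equivalently'' display, which the paper states without proof, whereas the paper's argument relies only on facts it has established. (Also note that for the lemma you only need the forward direction of that equivalence; your converse argument, while fine, is not needed here.)

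However, your verification of the equivalence is incomplete in exactly the direction your argument needs. What your forward sketch establishes is only $Succ(w)\subseteq\min(G)$; to conclude $\min(G)=Succ(x)=\min(G')$ you need the set \emph{equality}, i.e.\ also $\min(G)\subseteq Succ(w)$ — with only the one inclusion you merely get that $G$ and $G'$ share some minimal elements, which does not suffice for Lemma~\ref{lem:unicityGMPBMin}. The missing inclusion does hold and is short: if $m\in\min(G)$ and $w<z<m$ for some $z$, then $z\notin G$ (else $m$ would not be minimal), goodness together with the block property excludes $z\parallel G$ and $G<z$ since $z<m$, so $z<G$; moreover any $y\in P-G$ with $y\parallel G$ and $z<y$ would also lie above $w$, so $z$ itself satisfies the defining property of left-witness candidates and $w<z$ contradicts the maximality of $w$ in the original definition. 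The same remark repairs a small circularity in your case ``$z<G$'': the original definition only makes $w$ maximal among the elements $x$ with $x<G$ \emph{and} no element of $P-G$ parallel to $G$ above $x$, not maximal in all of $\{x\in P : x<G\}$, so you cannot invoke the latter maximality before it is justified; the justification is as above (any parallel element above such a $z$ would be above $w$). With these two points filled in, your proof is correct.
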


\begin{proof}
  Assume that $x$ is a left witness of two different good maximal parallel blocks $G_1$ and $G_2$.
  Then $x\in W^-(G_1)\cap W^-(G_2)$, and thus there exists $g_1\in\min G_1$, $g_2\in\min G_2$ such that $x$ is a predecessor of both $g_1$ and $g_2$.
  A consequence of the definition of left witness is that $g_1\in G_2$ (and thus $g_1\in\min G_2$) and $g_2\in G_1$ (and thus $g_2\in\min G_1$).
  Now, let $y\in W^-(G_1)$. There exists $g'_1\in \min G_1$ such that $y$ is a predecessor of $g'_1$. According to Lemma~\ref{lem:wit1} $y$ is a left witness of $G_1$, and thus $y<g_1$ and $x<g'_1$. As a consequence of Lemma~\ref{lem:Succ1} $x,y$ are both predecessors of $g_1,g'_1,g_2$. It follows that $W^-(G_1)=W^-(G_2)$, and as a consequence $\min G_1=\min G_2$. 
  Assume that $G_1\not= G_2$, and wlog. that there exists $g_2\in G_2$ such that $g_2\not\in G_1$.
  Then $g_2$ is comparable to a minimum $m$ of $G_2$, which is also a minimum of $G_1$, and, because $G_2$ can be decomposed into $G_2=G_{2,1}\parallel G_{2,2}$, $g_2$ is incomparable to another minimum $m'$ of $G_2$, which is also a minimum of $G_1$. Thus, because $G_1$ is good, we should have $g_2\in G_1$, which is a contradiction.
\end{proof}

Being a left witness of a good maximal parallel block $G$ can easily be encoded into a MSO-formula:
$$
  \texttt{Wit}_L(x,G)\equiv 
  \exists M\exists R\ \texttt{Min}(M,G)\land\texttt{Pred}(R,M)\land R(x) 
$$
A similar formula $\texttt{Wit}_R(x,G)$ can be written for right witnesses.

We now come back to the definition of a P-MSO formula that expresses that the (strict) part of $P$ identified by $X$ is the label of a path in $\mathcal{A}$.
To each good maximal parallel block $G$ of $X$ we attach a unique couple of states $(p,q)\in Q\times Q$ using the witnesses of $G$, with the help of a two second-order variables, $X^-_{p,q}$ for left witnesses and $X^+_{p,q}$ for right witnesses:
\begin{multline*}
  \texttt{MarkPar}\equiv\forall X\forall G\ \texttt{Universe}(X)\land\texttt{GMPB}(G,X)\rightarrow\\
  \mathop\lor_{(p,q)\in Q\times Q}(\forall^X x (\texttt{Wit}_L(x,G)\rightarrow X^-_{p,q}(x))\land (\texttt{Wit}_R(x,G)\rightarrow X^+_{p,q}(x)) )\\
  \land (\forall^X x\mathop\land_{(p,q)\in Q\times Q} ((\texttt{Wit}_L(x,G)\land X^-_{p,q}(x))\rightarrow \lnot\mathop\lor_{(p',q')\in Q\times Q\atop (p',q')\not=(p,q)}X^-_{p',q'}(x)))\\
  \land (\forall^X x\mathop\land_{(p,q)\in Q\times Q} ((\texttt{Wit}_R(x,G)\land X^+_{p,q}(x))\rightarrow \lnot\mathop\lor_{(p',q')\in Q\times Q\atop (p',q')\not=(p,q)}X^+_{p',q'}(x)))
\end{multline*}
and we check that
\begin{itemize}
\item for every block of $P$ of the form $Ga$ with $G$ a good maximal parallel block of $P$ and $a\in A$, if $p \mathop{\Longrightarrow}\limits_{\mathcal{A}}^{G} q$ and $s \mathop{\Longrightarrow}\limits_{\mathcal{A}}^{a} r$ then $q=s$ (and symmetrically for blocks of $P$ of the form $aG$);
\item for every block of $P$ of the form $GG'$ with $G,G'$ good maximal parallel blocks of $P$, if $p \mathop{\Longrightarrow}\limits_{\mathcal{A}}^{G} q$ and $s \mathop{\Longrightarrow}\limits_{\mathcal{A}}^{G'} r$ then $q=s$.
\end{itemize}

The check is done with the formula $\texttt{ConsistentPar}_1$ below.
For convenience, we start by defining a formula $\texttt{GMPBMinStarts}_q(M,G,X)$ for every $q\in Q$ such that $P,X,M,G\models \texttt{GMPBMinStarts}_q(M,G,X)$ if and only if $G$ is the good maximal parallel block of $X$ whose set of minimum elements is $M$ (the uniqueness of $G$ is guaranteed by Lemma~\ref{lem:unicityGMPBMin}) and the part of the path labeled by $G$ starts with $q$.
\begin{multline*}
  \texttt{GMPBMinStarts}_q(M,G,X)\equiv \texttt{GMPB}(G,X)\land \texttt{Min}(M,G)\rightarrow\\
  \forall x ((\texttt{Wit}_L(x,G)\rightarrow \mathop\lor_{r\in Q}X^-_{q,r}(x))\land(\texttt{Wit}_R(x,G))\rightarrow \mathop\lor_{r\in Q}X^+_{q,r}(x)) 
\end{multline*}
Formul{\ae}\ $\texttt{GMPBMinEnds}_q(M,G,X)$, $\texttt{GMPBMaxStarts}_q(M,G,X)$ and $\texttt{GMPBMaxEnds}_q(M,G,X)$ could similarly be written in order to define the good maximal parallel block $G$ according to its set of minimal/maximal elements, and to express that the part of the path labeled by $G$ starts/ends with $q$.

Returning to the definition of $\texttt{ConsistentPar}_1$:
\begin{multline*}
  \texttt{ConsistentPar}_1\equiv\forall X\forall G\forall M^-\forall M^+\forall Y^-\forall Y^+\\
  \texttt{Universe}(X)\land\texttt{GMPB}(G,X)\land\texttt{Min}(M^-,G)\land\texttt{Max}(M^+,G)\land \texttt{Pred}(Y^-,M^-)\land \texttt{Succ}(Y^+,M^+)\rightarrow\\
  \forall x\mathop\land_{(p,q)\in Q\times Q}((\texttt{Wit}_L(x,G)\land X^-_{p,q}(x))\lor(\texttt{Wit}_R(x,G)\land X^+_{p,q}(x)))\rightarrow\\
  ( (\texttt{Card}_1(Y^+)\rightarrow \forall y\ Y^+(y)\rightarrow \mathop\lor_{a\in A\atop{r\in Q\atop (q,a,r)\in E}}X_{q,a,r}(y))\land\\
    \hfill(\texttt{Card}_{>1}(Y^+)\rightarrow \forall G'\ \texttt{GMPBMinStarts}_q(Y^+,G',X))  )\\
  \land\hskip6cm\\
  ( (\texttt{Card}_1(Y^-)\rightarrow \forall y\ Y^-(y)\rightarrow \mathop\lor_{a\in A\atop{r\in Q\atop (r,a,p)\in E}}X_{r,a,p}(y))\land\\
    \hfill(\texttt{Card}_{>1}(Y^-)\rightarrow \forall G'\ \texttt{GMPBMaxEnds}_p(Y^-,G',X))  )\\
\end{multline*}

\noindent Now we define $\psi_{p,q}(X)$ that tests if the connected block $X$ begins in $p$ and ends in $q$.
\begin{multline*}
  \psi_{p,q}(X)\equiv
  \forall^X M^-\forall^X M^+\ \texttt{Min}(M^-,X)\land \texttt{Max}(M^+,X) \rightarrow\\
  ( (\texttt{Card}_1(M^-)\rightarrow \forall m\ M^-(m)\rightarrow \mathop\lor_{a\in A\atop r\in Q} X_{p,a,r}(m))
    \land\\
    \hfill(\texttt{Card}_{>1}(M^-)\rightarrow \forall G'\ \texttt{GMPBMinStarts}_{p}(M^-,G',X)) )\\
  \land\hskip6cm\\
  ( (\texttt{Card}_1(M^+)\rightarrow \forall m\ M^+(m)\rightarrow \mathop\lor_{a\in A\atop r\in Q} X_{r,a,q}(m))
    \land\\
    \hfill(\texttt{Card}_{>1}(M^+)\rightarrow \forall G'\ \texttt{GMPBMaxEnds}_{q}(M^+,G',X)) )
\end{multline*}
Observe that in $\psi_{p,q}(X)$ all quantifications can be assumed relative to $X$. Actually, the formula $\texttt{GMPBMinStarts}_{p}(M^-,G',X)$ ensures that $G'\subseteq X$, and, as $X$ is a connected block and $G'\subset X$, the witnesses of $G'$ are necessarily in $X$.

Let $Q\times Q=\{(p_1,q_1),\dots,(p_n,q_n)\}$.
For every $(p,q)\in Q\times Q$ we define a P-MSO formula $\mathcal{Q}_Y(X,(\psi_{p_1,q_1}(Y),x_1),\dots,(\psi_{p_n,q_n}(Y),x_n),\varphi_{p,q}(x_1,\dots,x_n))$ such that, for any good maximal parallel block $G$ of $P$, 
$$P,G\models \mathcal{Q}_Y(G,(\psi_{p_1,q_1}(Y),x_1),\dots,(\psi_{p_n,q_n}(Y),x_n),\varphi_{p,q}(x_1,\dots,x_n))$$ 
if and only if $G$ can be decomposed into
$G=G_{1,1}\parallel\dots\parallel G_{1,x_1}\parallel\dots\parallel G_{n,1}\parallel\dots\parallel G_{n,x_n}$
such that $p_i \mathop{\Longrightarrow}\limits_{\mathcal{A}}^{G_{j,i}} q_i$ for every $i\in[n]$ and $j\in[x_i]$, and
$\{(p_1,q_1)^{x_1},\dots,(p_n,q_n)^{x_n}\}\in \mathcal{F}_{p,q}$.
In other words,
$P,G\models \mathcal{Q}_Y(G,(\psi_{p_1,q_1}(Y),x_1),\dots,(\psi_{p_n,q_n}(Y),x_n),\varphi_{p,q}(x_1,\dots,x_n))$
if and only if there is a path 
$p \mathop{\Longrightarrow}\limits_{\mathcal{A}}^{G} q$.

As a consequence of Lemma~\ref{lem:FqpParallelRat} and Theorems~\ref{th:ratSemLin} and~\ref{th:PresburgerSemilinear}, $\mathcal{F}_{p,q}$ is the Presburger set (over the alphabet $Q\times Q$) of some Presburger formula $\varphi_{p,q}(x_1,\dots,x_n)$, from which we deduce directly $\mathcal{Q}_Y(X,(\psi_{p_1,q_1}(Y),x_1),\dots,(\psi_{p_n,q_n}(Y),x_n),\varphi_{p,q}(x_1,\dots,x_n))$. 

We now write a formula $\texttt{ConsistentPar}_2$ that applies the adequate $$\mathcal{Q}_Y(G,(\psi_{p_1,q_1}(Y),x_1),\dots,(\psi_{p_n,q_n}(Y),x_n),\varphi_{p,q}(x_1,\dots,x_n))$$ to all good maximal parallel blocks of $P$ (except $P$ itself if it is a good maximal parallel block).
\begin{multline*}
  \texttt{ConsistentPar}_2\equiv \forall X\forall G (X\subset P\land\texttt{GMPB}(G,X))\rightarrow \\
  \forall x\mathop\land_{(p,q)\in Q\times Q}( ((\texttt{Wit}_L(x,G)\land X^-_{p,q}(x))\lor(\texttt{Wit}_R(x,G)\land X^+_{p,q}(x)))\rightarrow\\
  \mathcal{Q}_Y(G,(\psi_{p_1,q_1}(Y),x_1),\dots,(\psi_{p_n,q_n}(Y),x_n),\varphi_{p,q}(x_1,\dots,x_n)) )
\end{multline*}

We are ready to define a P-MSO sentence $\phi_\mathcal{A}$ that defines $L$.
Set $\texttt{ConsistentPar}\equiv\texttt{ConsistentPar}_1\land\texttt{ConsistentPar}_2$.
Observe that for any nonempty N-free poset, either $P=P_1\parallel P_2$ for some nonempty $P_1,P_2$, in which case $P$ is a maximal good parallel block of itself, or $P$ is a connected block. Assuming $A=\{a,b\}$, define
\begin{multline*}
\phi_\mathcal{A}\equiv\exists X_{(p_1,a,q_1)}\exists X_{(p_1,b,q_1)}\exists X^-_{p_1,q_1}\exists X^+_{p_1,q_1}\dots \exists X_{(p_n,a,q_n)}\exists X_{(p_n,b,q_n)}\exists X^-_{p_n,q_n}\exists X^+_{p_n,q_n}\\
\texttt{MarkSeq}\land\texttt{ConsistentSeq}\land\texttt{MarkPar}\land\texttt{ConsistentPar}
\land
( \forall X\ \texttt{Universe}(X)\rightarrow \\
  ((\texttt{Card}_{>0}(X)\rightarrow
    ((\texttt{GMPB}(X,X) \rightarrow\\\hfill \mathop\lor_{i\in I\atop f\in F} \mathcal{Q}_Y(X,(\psi_{p_1,q_1}(Y),x_1),\dots,(\psi_{p_n,q_n}(Y),x_n),\varphi_{i,f}(x_1,\dots,x_n))\\
    \land
    ((\lnot\texttt{GMPB}(X,X)) \rightarrow \mathop\lor_{i\in I\atop f\in F} \psi_{i,f}(X)))
  )\\
  \land
  (\texttt{Card}_{0}(X)\rightarrow \mathop\lor_{i\in I\cap F}\texttt{true}
  ))\hskip4.3cm
)\hskip4cm
\end{multline*}
Then $P\models\phi_\mathcal{A}$ if and only if there is a path $i \mathop{\Longrightarrow}\limits_{\mathcal{A}}^{P} f$ for some $i\in I$, $f\in F$.

\subsection{From P-MSO to automata}

Let $A$ be an alphabet and $\psi(x_1,\dots,x_n,X_1,\dots,X_m)$ be a P-MSO formula which has a set $V_1=\{x_1,\dots,x_n\}$ of free first-order variables interpreted over elements of posets (we do not consider here the variables that are interpreted over non-negative integers) and a set $V_2=\{X_1,\dots,X_m\}$ of free second-order variables. A \emph{$(V_1,V_2)$-poset} labeled by $A$ is a N-free poset $(P,<,\rho)$ labeled by  $A\times {\mathcal P}(V_1)\times {\mathcal P}(V_2)$ such that for all $i\in[n]$ there exists exactly one $p\in P$ such that $x_i\in\pi_2(\rho(p))$, where $\pi_k((c_1,\dots,c_r))=c_k$ ($k\in[r]$) is the projection of a tuple on its $k^\texttt{th}$ component.
Observe that a poset labeled by $A$ can be viewed as a $(\emptyset,\emptyset)$-poset labeled by $A$.
Observe also that an interpretation of the variables $x_{1},\dots,x_{n},X_{1},\dots,X_{m}$ in $P$ induces a unique $(V_1,V_2)$-poset $P(x_{1},\dots,x_{n},X_{1},\dots,X_{m})$, and reciprocally. This allows us to use indifferently one representation or the other in order to lighten the notation.
The $(V_1,V_2)$-posets are a generalization from words to N-free posets of an idea of~\cite{PP86}.

This section is devoted to the construction of an automaton $\mathcal{A}_\psi$ on the alphabet $A\times {\mathcal P}(V_1)\times {\mathcal P}(V_2)$ such that $P,x_1,\dots,x_n,X_1,\dots,X_m\models\psi(x_1,\dots,x_n,X_1,\dots,X_m)$ if and only if $P(x_1,\dots,x_n,X_1,\dots,X_m)\in L(\mathcal{A}_\psi)$ for any $P\in SP^+(A)$. If $\psi$ is a sentence, then $P\models\psi$ if and only if $P\in L(\mathcal{A}_\psi)$. The construction of $\mathcal{A}_\psi$ is by induction on the structure of $\psi$, and is a generalization of the well-known construction of a Kleene automaton from a MSO-formula when MSO is interpreted over words (see for example~\cite{Str94} for a clear presentation of this case).

It is easy to build an automaton $\mathcal{A}_{\text{$(V_1,V_2)$-poset}}$ that tests if a poset $P\in SP^+(A)$ labeled by  $A\times {\mathcal P}(V_1)\times {\mathcal P}(V_2)$ is a $(V_1,V_2)$-poset for some $V_1,V_2$. It suffices to test in $P$, for each $v\in V_1$, if $v$ appears exactly once into the sets that appear as second components of the letters. Example~\ref{ex:oneAAtLeast} exhibits an automaton that tests if a particular letter $a$ appears at least once: it can easily be transformed in order to test if $a$ appears exactly once, from which we deduce $\mathcal{A}_{\text{$(V_1,V_2)$-poset}}$.

As a consequence of Proposition~\ref{prop:union} and Theorem~\ref{th:complement}, from now we consider that inputs of branching automata are $(V_1,V_2)$-posets.

The construction of an automaton $\mathcal{A}_{x_i<x_j}$ that tests if $x_i<x_j$ for some first-order variables in the input $(V_1,V_2)$-poset has  ${\mathcal P}(V_1)\cup\{\bot\}$ as set of states with $\bot$ a sink state. The state $\emptyset$ is the only initial state and all states $V\in {\mathcal P}(V_1)$ such that $x_i,x_j\in V$ are final. 
The sequential transition from state $V\in {\mathcal P}(V_1)$ labeled by $(a,W_1,W_2)$ goes to
\begin{itemize}
\item $\bot$ if  $x_i\not\in V$ and $x_j\in W_1$, or $x_i,x_j\in W_1$;
\item $V\cup W_1$ otherwise.
\end{itemize}
Each state $s$ which is not $\bot$ is the source of a fork transition $(s,\{s,s\})$, and for every pair of states $(W_1,W_2)\in {\mathcal P}(V_1)\times {\mathcal P}(V_1)$ there is a join transition $(\{W_1,W_2\},W_1\cup W_2)$.

As the test automata for the atomic formul\ae\ $a(x_i)$ and $X_i(x_j)$ use the same principle we only give the construction of an automaton $\mathcal{A}_{X_i(x_j)}$ that tests the latter. It has $Q=\{\bot,\top\}$ as set of states with $\bot$ as unique initial state and $\top$ as unique final state.
The only sequential transitions from $\bot$ to $\top$ are labeled by $(a,W_1,W_2)$ such that $x_j\in W_1$ and $X_i\in W_2$. All other sequential transitions are from $s$ to $s$ for all states $s$. The fork transitions are $(s,\{s,s\})$ for all states, and the join transitions are $(\{s_1,s_2\},s_3)$ where $s_3=\top$ if at least one of $s_1,s_2$ is $\top$, $s_3=\bot$ otherwise.

Constructions of automata for the boolean connectors $\lor$, $\land$ and $\lnot$ are a consequence of Proposition~\ref{prop:union} and Theorem~\ref{th:complement}.

Assume now that $\psi(x_1,\dots,x_n,X_1,\dots,X_m)$ is a P-MSO formula with free first-order variables $V_1=\{x_1,\dots,x_n\}$ and free second-order variables $V_2=\{X_1,\dots,X_m\}$.  Assume that by induction hypothesis an automaton $\mathcal{A}_{\psi}=(Q,A\times{\mathcal P}(V_1)\times{\mathcal P}(V_2),E,I,F)$ can effectively be constructed from $\psi(x_1,\dots,x_n,X_1,\dots,X_m)$, and let $i\in[n]$.
We use $\mathcal{A}_{\psi}$ in order to build an automaton $\mathcal{A}_{\exists x_i\psi}=(Q\times\mathbb{B},A\times{\mathcal P}(V_1-\{x_i\})\times{\mathcal P}(V_2),E',I\times\{\false\},F\times\{\true\})$ such that $P,x_1,\dots,x_{i-1},x_{i+1},\dots,x_n,X_1,\dots,X_m\models\exists x_i\psi(x_1,\dots,x_n,X_1,\dots,X_m)$ if and only if $P(x_1,\dots,x_{i-1},x_{i+1},\dots,x_n,X_1,\dots,X_m)\in L(\mathcal{A}_{\exists x_i\psi})$, for any $P\in SP^+(A)$. There is a sequential transition $((q,b),(a,W_1,W_2),(q',b))\in E'$ if and only if $(q,(a,W_1,W_2),q')\in E$ and $x_i\not\in W_1$, and a sequential transition $((q,\false),(a,W_1-\{x_i\},W_2),(q',\true))\in E'$ is and only if $(q,(a,W_1,W_2),q')\in E$ and $x_i\in W_1$. There is a fork transition $((q_1,b),\{(q_2,b),(q_3,b)\})\in E'$ if and only if $(q_1,\{q_2,q_3\})\in E$, and a join transition $(\{(q_1,b_1),(q_2,b_2)\},(q_3,b_1\text{ or } b_2))\in E'$ if and only if $(\{q_1,q_2\},q_3)\in E$.

The  construction for quantification over a second-order variable is similar to the one of first-order variable.

\begin{rem}
  \label{rem:MSOAuto}
  We have proved by all the constructions above that for any MSO-sentence $\psi$ there exists an automaton $\mathcal{A}_\psi$ such that $P\models\psi$ if and only if $P\in L(\mathcal{A}_\psi)$.
\end{rem}

We finally turn to the last case where $\psi$ has the form $$\mathcal{Q}_X(Z,(\psi_1(X),x_1),\dots,(\psi_n(X),x_n),\varphi(x_1,\dots,x_n))$$
Recall here that $x_1,\dots,x_n$ are variables that are interpreted over non-negative integers, and that each $\psi_i$, $i\in[n]$, has one free variable $X$, which is second-order, all quantifications relative to $X$ and no free first-order variables. By induction hypothesis, there is an automaton $\mathcal{A}_{\psi_i}$ such that $P,R\models\psi_i(R)$ if and only if $P,R\in L(\mathcal{A}_{\psi_i})$. According to the semantics of $\mathcal{Q}_X(Z,(\psi_1(X),x_1),\dots,(\psi_n(X),x_n),\varphi(x_1,\dots,x_n))$, the only interpretations of $R$ in $P$ verify (1) $R=P$ and (2) $P$ is a connected block. The conjunction of (1) and (2) is a MSO-definable property of $R$, and thus according to Remark~\ref{rem:MSOAuto} above it can be checked by an automaton $\mathcal{B}$. As a consequence of Proposition~\ref{prop:union} and Theorem~\ref{th:complement} there exists an automaton $\mathcal{A}'_{\psi_i}$ such that $L_i=L(\mathcal{A}'_{\psi_i})=L(\mathcal{A}_{\psi_i})\cap L(\mathcal{B})$. 
Now, let $B=\{b_1,\dots,b_n\}$ be a new alphabet disjoint from $A$. As a consequence of Theorems~\ref{th:PresburgerSemilinear}, \ref{th:ratSemLin} and~\ref{th:KleeneBranching} there is an automaton $\mathcal{C}$ over the alphabet $B$ such that $L(\mathcal{C})$ is the Presburger set of $\varphi(x_1,\dots,x_n)$ over $B$. Then
$L=L_1\circ_{b_1}(\dots(L_n\circ_{b_n}L(\mathcal{C})))$
thus $L$ is regular according to Theorem~\ref{th:KleeneBranching}.

\begin{exa}
  Let $A=\{a,b\}$ and $L\subseteq SP^+(A)$ be the language composed of the nonempty posets of the form $P=P_1\cdot\dots\cdot P_n$, where each $P_i$, $i\in[n]$, has the form $P_i=P_{i,1}\parallel\dots\parallel P_{i,n_i}$ with each $P_{i,j}$ a totally ordered nonempty poset (that is to say, a nonempty word), and such that for each $i\in[n]$ the number of $P_{i,j}$ that starts with an $a$ is ${2\over 3} n_i$.
  Set $L_1=aA^*=a\cup aA^+$ and $L_2=bA^*$.
  Then $L$ is the language of the rational expression $((L_1\parallel L_1\parallel L_2)^\oplus)^+$.
  We define $L$ by a P-MSO sentence as follows.
  Given two elements of the poset denoted by first order variables $x$ and $y$, one can easily write a MSO formula $\texttt{Succ}(x,y)$ (resp. $\texttt{Pred}(x,y)$) that is true if and only if $x$ is a successor (resp. predecessor) of $y$.
  Set
  \begin{align*}
    \texttt{Lin}(X)\equiv& \forall^X x\forall^X y\forall^X z\ (\texttt{Succ}(y,x)\land \texttt{Succ}(z,x)\rightarrow y=z)\land\\&\hskip6cm (\texttt{Pred}(y,x)\land \texttt{Pred}(z,x)\rightarrow y=z)\\
    \psi_1(X)\equiv& \texttt{Lin}(X)\land\exists^X x\ a(x)\land\forall^X y\ x=y\lor x<y\\
    \psi_2(X)\equiv& \texttt{Lin}(X)\land\exists^X x\ b(x)\land\forall^X y\ x=y\lor x<y\\
    \varphi(n_a,n_b)\equiv& n_a=2n_b
    \end{align*}
    Then $L$ is the language of the following P-MSO sentence
    \begin{multline*}
    \psi\equiv \forall P (\forall p\ p\in P)\rightarrow \exists X_1\exists X_2\ P=X_1\oplus X_2\ \\ \land\forall U ((\texttt{MaxBlock}(U,X_1)\lor\texttt{MaxBlock}(U,X_2))\rightarrow \mathcal{Q}_Y(U,(\psi_1(Y),n_a),(\psi_2(Y),n_b),\varphi(n_a,n_b))
  \end{multline*}
  with
  $
    X=U\oplus V\equiv\texttt{Partition}(U,V,X)\land(\forall u\forall v\ u\in U\land v\in V\rightarrow\lnot u\parallel v)
  $.
  In the formula above, $\texttt{Partition}(U,V,X)$ and $u\parallel v$ respectively express with MSO formul{\ae}\ that $(U,V)$ partitions $X$, and that $u$ and $v$ are different and not comparable. The MSO formula $\texttt{MaxBlock}(U,X)$ express that $U$ is a block of $X$, maximal relatively to inclusion.
\end{exa}

\section{Another definition for branching automata}
\label{sec:PBranching}

In this section we introduce another notion of automata for languages of $SP^+(A)$, which is actually equivalent, in expressiveness, to branching automata.

Denote by $\text{Pres}(n)$ the class of Presburger formul\ae\ $\varphi(x_1,\dots,x_n)$ with $n$ free variables.
A \emph{Presburger-branching automaton} (or \emph{P-branching automaton} for short) is a tuple $\mathcal{A}=(Q, A, E, I, F)$ where $Q=\{q_1,\dots,q_n\}$ is a finite set of
states, $A$ is an alphabet, $I\subseteq Q$ is the set of \emph{initial states},
$F\subseteq Q$ the set of \emph{final states}, and $E=(E_\text{seq}, E_\text{fork}, E_\text{join})$ is the set of \emph{transitions} of ${\mathcal A}$, which are of three kinds:
\begin{itemize}
\item $E_\text{seq}\subseteq (Q\times A\times Q)$ contains the \emph{sequential} transitions, which are usual transitions of Kleene automata;
\item $E_\text{fork}$ and $E_\text{join}$, respectively the \emph{fork} and \emph{join} transitions, are finite subsets of $Q\times \text{Pres}(n)$.
\end{itemize}

Let $F$ be a multi-set of elements of $Q$. By $\mu(F)$ we denote the Parikh's image of $F$, ie. the element $(v_1,\dots,v_n)$ of $\mathbb{N}^n$ such that $F$ is the multi-set with $v_i$ occurrences of $q_i$, for all $i\in[n]$. 
Paths in P-branching automata are defined as in branching automata, except for the parallel case: 
for any finite set of paths $\{\gamma_0,\dots,\gamma_k\}$ (with $k\geq 1$) respectively labeled by $P_0,\dots,P_k\in SP^+(A)$, from $r_0,\dots,r_k$ to $s_0,\dots, s_k$, if $t=(p,\varphi(x_1,\dots,x_{n}))$ is a fork transition, $t'=(q,\varphi'(x'_1,\dots,x'_n))$ a join transition, then $\gamma=t(\parallel_{j\leq k} \gamma_j)t'$ is a path from $p$ to $q$ and labeled by $\parallel_{j\leq k} P_j$ if $\mu(\{r_0,\dots,r_k\})=(v_1,\dots,v_n)$ and $\mu(\{s_0,\dots,s_k\})=(v'_1,\dots,v'_n)$ are respectively in the Presburger set of $\varphi$ and of $\varphi'$, and $\sum_{i\in[n]}v_i$, $\sum_{i\in[n]}v'_i>1$.

The notions of accepting paths and languages of a P-branching automaton are defined as in branching automata. A language $L\subseteq SP^+(A)$ is \emph{P-regular} if there exists a P-branching automaton $\mathcal{A}$ such that $L=L(\mathcal{A})$.

\begin{thm}
  Let $A$ be an alphabet and $L\subseteq SP^+(A)$.
  Then $L$ is regular if and only if it is P-regular.
\end{thm}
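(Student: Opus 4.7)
For the forward direction, every regular language is P-regular by a direct simulation. Given a branching automaton $\mathcal{A}=(Q,A,E,I,F)$ with $Q=\{q_1,\dots,q_n\}$, I would build a P-branching automaton $\mathcal{A}'$ with the same states, alphabet, sequential transitions, and initial/final states. Each fork transition $(p,M)\in E_\text{fork}$, where $M$ has multiplicity $k_i$ of $q_i$, gets replaced by $(p,\varphi_M)$ with
$$\varphi_M(x_1,\dots,x_n)\equiv\bigwedge_{i\in[n]} x_i=k_i;$$
join transitions are handled symmetrically. The Presburger set of $\varphi_M$ is the singleton $\{(k_1,\dots,k_n)\}$, so the fork/join transitions of $\mathcal{A}'$ admit exactly the same multi-sets of parallel endpoints as those of $\mathcal{A}$, and the arity constraint $\sum v_i>1$ is automatic because $|M|\geq 2$. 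Hence $L(\mathcal{A})=L(\mathcal{A}')$.

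For the converse direction, I would route through P-MSO and invoke Theorem~\ref{th:PMSORat}. Given a P-branching automaton $\mathcal{B}=(Q,A,E,I,F)$, the plan is to produce a P-MSO sentence $\phi_{\mathcal{B}}$ with $L_{\phi_{\mathcal{B}}}=L(\mathcal{B})$. The construction mirrors the encoding of accepting paths from Section~\ref{sec:PMSO}: second-order variables $X_{(p,a,q)}$ mark each occurrence of a sequential transition; variables $X^{-}_{p,q}$ and $X^{+}_{p,q}$ assign states to the left and right witnesses of each good maximal parallel block; and consistency formul\ae\ analogous to $\texttt{ConsistentSeq}$ and $\texttt{ConsistentPar}_1$ force endpoints to agree at junctions. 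The Presburger quantifier $\mathcal{Q}_Y$ applied at each good maximal parallel block must enforce that the multi-set $\{(p_1,q_1),\dots,(p_k,q_k)\}$ of endpoints of its connected blocks is realizable in $\mathcal{B}$, that is, that there exist a fork $(p,\varphi_\text{fork})\in E_\text{fork}$ and a join $(q,\varphi_\text{join})\in E_\text{join}$ such that the Parikh images of $\{p_1,\dots,p_k\}$ and $\{q_1,\dots,q_k\}$ satisfy $\varphi_\text{fork}$ and $\varphi_\text{join}$ respectively. Since $\varphi_\text{fork}$ and $\varphi_\text{join}$ are Presburger by definition of $\mathcal{B}$, this realizability condition is itself Presburger-definable and plugs directly into the $\mathcal{Q}_Y$ construct. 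Applying Theorem~\ref{th:PMSORat} to $\phi_{\mathcal{B}}$ then yields a standard branching automaton recognizing $L(\mathcal{B})$.

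The main obstacle will be producing, for each pair $(p,q)\in Q^2$, the precise Presburger formula fed to $\mathcal{Q}_Y$: it must describe the analogue of $\mathcal{F}_{p,q}$ from Section~\ref{sec:languages} but for $\mathcal{B}$, namely the semi-linear set of multi-sets of endpoint-pairs arising from \emph{arbitrary nested} combinations of forks and joins. For ordinary branching automata this was achieved by Lemma~\ref{lem:FqpParallelRat} via an auxiliary branching automaton over the alphabet $Q\times Q$. In the P-branching setting I would build the analogous auxiliary object in which each fork and join transition is encoded by its Presburger constraint, and then appeal to the closure of semi-linear sets under Presburger-definable operations (Theorems~\ref{th:ratSemLin}, \ref{th:PresburgerSemilinear} and~\ref{th:schutz}) to collapse the iterated relation into a single Presburger formula. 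Once this formula is in hand, the remaining parts of the construction of $\phi_{\mathcal{B}}$ follow Section~\ref{sec:PMSO} essentially verbatim, and effectiveness of all the translations involved is preserved throughout.
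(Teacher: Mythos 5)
Your forward direction is correct and is exactly the paper's argument: each fork or join multi-set is encoded by the Presburger formula whose unique solution is its Parikh image. The divergence, and the problem, is in the converse. The paper does not pass through P-MSO at all: it eliminates each Presburger fork $(p,\varphi)$ (and symmetrically each join) directly inside the automaton, by writing the set $L_\varphi$ of admissible multi-sets of states as a finite union of linear sets via Theorem~\ref{th:PresburgerSemilinear}, simulating each linear set $q_{j_{i,1}}\parallel\dots\parallel q_{j_{i,l_i}}\parallel(\cdots)^\circledast$ with a gadget of ordinary fork transitions and auxiliary duplicating states $\overline{u_i}$, and grafting that gadget in place of $(p,\varphi)$.

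Your route could in principle work, but the step you leave open is precisely the hard one, and the justification you sketch is not sound. First, the condition you feed to $\mathcal{Q}_Y$ in your second paragraph --- a single fork and a single join whose Parikh images satisfy $\varphi_{\mathrm{fork}}$ and $\varphi_{\mathrm{join}}$ --- is wrong as stated: the connected blocks of a good maximal parallel block may be grouped by nested fork/join combinations passing through intermediate states, which is exactly why Section~\ref{sec:PMSO} uses the iterated sets $\mathcal{F}_{p,q}$ rather than the one-step sets $K_{p,q}$. You acknowledge this afterwards, but then claim the $\mathcal{F}_{p,q}$-analogue for a P-branching automaton is Presburger because semi-linear sets are closed under ``Presburger-definable operations'' (Theorems~\ref{th:schutz}, \ref{th:ratSemLin}, \ref{th:PresburgerSemilinear}). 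Those closure properties cover Boolean operations and definable transformations, not the least-fixed-point/iterated-substitution defining $\mathcal{F}_{p,q}$. In the ordinary case its semi-linearity comes from Lemma~\ref{lem:FqpParallelRat}, i.e. from Theorem~\ref{th:KleeneBranching} plus Proposition~\ref{prop:parallelRational}, whose proof is a Parikh-type argument on a system with finitely many productions; in the P-branching case each Presburger fork contributes infinitely many one-step rewritings, so before Parikh applies you must replace each Presburger constraint by finitely many productions --- for instance by decomposing its Presburger set into linear sets and adding auxiliary symbols --- which is essentially the paper's gadget. Once that reduction is available you already have an equivalent ordinary branching automaton, and the P-MSO detour through Theorem~\ref{th:PMSORat} becomes unnecessary. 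So as written the converse has a genuine gap at its central lemma, and filling it honestly collapses your proof into the paper's direct construction.
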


\begin{Proof}
  It is immediate that regularity implies P-regularity, since a fork transition (the same applies for join transitions) $(p,\{r_0,\dots r_k\})$ of a branching automaton can be interpreted as a fork transition $(p,\varphi(x_1,\dots,x_n))$ of a P-branching automaton, with $\mu(\{r_0,\dots r_k\})$ the unique element of the Presburger set of $\varphi(x_1,\dots,x_n)$.

The converse needs more attention since the Presburger set of a formula $\varphi$ may be infinite. Assume that $\mathcal{A}$ is a P-branching automaton with $Q=\{q_1,\dots,q_n\}$ as set of states. We replace one by one all of its fork transitions $(p,\varphi(x_1,\dots,x_n))$ by fork transitions of branching automata, by the following construction that add new states to $\mathcal{A}$. Denote by $L_\varphi=\{ F\in\mathcal{M}^{>1}(Q) : \mu(F)\text{ is in the Presburger set of }\varphi\}$. Each element of $L_\varphi$ can equivalently be represented as an element of $Q^\circledast$. We first build a branching automaton $\mathcal{A}_\varphi$ on the alphabet $Q$ such that $L(\mathcal{A}_\varphi)=L_\varphi$. It has exactly one initial state, which is not the destination of any transition, and one final state which is not the source of any transition. By Theorem~\ref{th:PresburgerSemilinear}, $L_\varphi$ is a semi-linear set of $Q^\circledast$:
$$
L_\varphi=\bigcup_{i\in I}q_{j_{i,1}}\parallel\dots\parallel q_{j_{i,l_i}}\parallel(\bigcup_{k\in K_i} q_{j_{k,1}}\parallel\dots\parallel q_{j_{k,l_k}})^\circledast
$$
for some finite set $I$, disjoint finite sets $K_i$, $i\in I$, with $j_{i,r}, j_{k,s}\in[n]$ for all $i\in I, k\in K_i$, $r\in[l_i]$, $s\in[l_k]$. Wlog. we may assume that each multi-set of $L_\varphi$ has at least two elements, so $l_i>1$ for all $i\in I$. Then $\mathcal{A}_\varphi$ is composed of one initial state $1$, one final state $f$, two states $\overline{q_{j_{i,r}}}$ and $\overline{\overline{q_{j_{i,r}}}}$ and one sequential transition $(\overline{q_{j_{i,r}}},q_{j_{i,r}},\overline{\overline{q_{j_{i,r}}}})$ for all $i\in I$, $r\in[l_i]$, two states $\overline{q_{j_{k,r}}}$ and $\overline{\overline{q_{j_{k,r}}}}$ and one sequential transition $(\overline{q_{j_{k,r}}},q_{j_{k,r}},\overline{\overline{q_{j_{k,r}}}})$ for all $i\in I$, $k\in K_i$, $r\in[l_k]$. For each $i\in I$, there is one fork transition from $1$ to all the $\overline{q_{j_{i,r}}}$, $r\in[l_i]$, and symmetrically, one join transition from all the $\overline{\overline{q_{j_{i,r}}}}$, $r\in[l_i]$, to $f$. For each $i\in I$, add also two states $\overline{u_i}$ and $\overline{\overline{u_i}}$, a fork transition $(\overline{u_i},\{\overline{u_i},\overline{u_i}\})$, a join transition $(\{\overline{\overline{u_i}},\overline{\overline{u_i}}\},\overline{\overline{u_i}})$, and, for all $k\in K_i$, a fork transition $(\overline{u_i},\{\overline{q_{j_{k,1}}},\dots,\overline{q_{j_{k,l_k}}}\})$ and a join transition $(\{\overline{\overline{q_{j_{k,1}}}},\dots,\overline{\overline{q_{j_{k,l_k}}}}\},\overline{\overline{u_i}})$ if $l_k>1$, a sequential transition $(\overline{u_i},q_{j_{k,1}},\overline{\overline{u_i}})$ if $l_k=1$. For each $i\in I$, add a fork transition $(1,\{\overline{q_{j_{i,1}}},\dots,\overline{q_{j_{i,l_i}}},\overline{u_i}\})$ and a join transition $(\{\overline{\overline{q_{j_{i,1}}}},\dots,\overline{\overline{q_{j_{i,l_i}}}},\overline{\overline{u_i}}\},f)$. We have $L(\mathcal{A}_\varphi)=L_\varphi$. Now, remove all sequential transitions, all join transitions, $f$ and all states of the form $\overline{\overline{q}}$ from $\mathcal{A}_\varphi$ and name $\mathcal{A}'_\varphi$ the new automaton. Consider the disjoint union of $\mathcal{A}$ and $\mathcal{A}'_\varphi$. Remove $(p,\varphi(x_1,\dots,x_n))$. Replace $1$ by $p$ and all states of the form $\overline{q}$, $q\in Q$, by $q$. 
Join transitions $(p,\varphi(x_1,\dots,x_n))$ are removed with a similar mechanism.
\end{Proof}

\section{Conclusion}
\label{sec:conclusion}

The effectiveness of the constructions involved in the proof of Theorem~\ref{th:PMSORat} have several consequences.
The \emph{P-MSO theory $S$ of $SP^+(A)$} consists of all sentences $\phi$ of P-MSO such that $P\models\phi$ for every $P\in SP^+(A)$. 
The P-MSO theory of $SP^+(A)$ is \emph{decidable} if there exists a decision procedure that tests if $\phi\in S$.
Because emptiness is decidable for languages of branching automata (see Proposition~\ref{prop:findPath}), we have:
\begin{thm}
  \label{th:PMSODecidable}
  Let $A$ be an alphabet.
  The P-MSO theory of $SP^+(A)$ is decidable.
\end{thm}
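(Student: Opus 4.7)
The plan is to reduce the decision problem for the P-MSO theory to the emptiness problem for branching automata, for which effective decidability is already available through Proposition~\ref{prop:findPath}. Observe first that $\phi \in S$ iff every $P \in SP^+(A)$ satisfies $\phi$, iff $L_\phi = SP^+(A)$, iff $SP^+(A) - L_\phi = \emptyset$. So it suffices to effectively construct, from $\phi$, a branching automaton recognizing $SP^+(A) - L_\phi$ and then test its emptiness.

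First I would apply the right-to-left direction of Theorem~\ref{th:PMSORat} (from P-MSO to branching automata), whose construction is effective, to obtain from the sentence $\phi$ a branching automaton $\mathcal{A}_\phi$ such that $L(\mathcal{A}_\phi) = L_\phi$. Then, using Theorem~\ref{th:complement} (effective closure under complement of the class of rational languages of $SP^+(A)$), I would compute from $\mathcal{A}_\phi$ a branching automaton $\mathcal{B}$ with $L(\mathcal{B}) = SP^+(A) - L_\phi$. Both constructions are effective, so this step produces $\mathcal{B}$ algorithmically from $\phi$.

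The final step is to decide whether $L(\mathcal{B}) = \emptyset$. Writing $\mathcal{B} = (Q, A, E, I, F)$, one has $L(\mathcal{B}) \ne \emptyset$ iff there exist $i \in I$, $f \in F$ and a path $i \mathop{\Longrightarrow}\limits_{\mathcal{B}}^{P} f$ for some $P \in SP^+(A)$. By Proposition~\ref{prop:findPath}, the existence of such a path between two given states is decidable (in polynomial time). Iterating this test over the finite set $I \times F$ yields a decision procedure for emptiness of $L(\mathcal{B})$, and hence for membership of $\phi$ in the P-MSO theory $S$ of $SP^+(A)$.

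There is no real obstacle here: the theorem is essentially a bookkeeping corollary collecting three earlier effective results, namely Theorem~\ref{th:PMSORat}, Theorem~\ref{th:complement} and Proposition~\ref{prop:findPath}. The only mild point to check is that negation is handled correctly at the logical level (which is trivial, since P-MSO is syntactically closed under negation) and that the conversion from $\phi$ to $\mathcal{A}_\phi$ respects the domain $SP^+(A)$ of interpretation, both of which are explicit in the construction of Section~\ref{sec:PMSO}.
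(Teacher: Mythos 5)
Your proposal is correct and follows essentially the same route as the paper: the paper's proof is exactly this reduction, combining the effectiveness of the P-MSO-to-automaton construction (Theorem~\ref{th:PMSORat}), effective complementation (Theorem~\ref{th:complement}, or equivalently negating the sentence first), and decidability of path existence (Proposition~\ref{prop:findPath}) to test emptiness of the complement language.
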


In~\cite{LW00:sp}, Lodaya and Weil asked for logical characterizations of several classes of rational languages. 
As it is equivalent to branching automata, P-MSO is the natural logic to investigate such questions, that are still open.

Extending the work of Lodaya and Weil, and those of Kleene-Sch\"utzenberger, Kuske and Meinecke~\cite{Kuske200453} proposed to attach costs to paths in branching automata. They defined and studied branching automata with costs, and extended to this case the machinery from the theory of weighted automata. They provided in particular a Kleene-like theorem for branching automata with costs, in the particular case of bounded-width languages.

Among the works connected to ours, let us mention \'Esik and N{\'e}meth~\cite{EN01}, which itself has been influenced by the work of Hoogeboom and ten Pas~\cite{HtP:1996,HtP:1997} on text languages. They study languages of bi-posets from an algebraic, automata and regular expressions based point of view, and the connections with MSO. A bi-poset is a set equipped with two partial orderings; thus, N-free posets are a generalization of N-free bi-posets, where commutation is allowed in the parallel composition. 

MSO and Presburger logic were also mixed in other works, but for languages of trees instead of N-free posets. Motivated by reasoning about XML documents, Dal Zilio and Lugiez~\cite{DBLP:conf/rta/Dal-ZilioL03}, and independently Seidl, Schwentick and Muscholl~\cite{DBLP:conf/birthday/SeidlSM08}, defined a notion of tree automata which combines regularity and Presburger arithmetic. In particular in~\cite{DBLP:conf/birthday/SeidlSM08}, MSO is enriched with Presburger conditions on the children of nodes in order to select XML documents, and proved equivalent to unranked tree automata. Observe that unranked trees are a particular case of N-free posets.

\section*{Acknowledgement}

The author would like to thank the referees of this paper, whose comments helped in improving its quality. Among many remarks, the content of Section~\ref{sec:PBranching} was suggested by one of them.



\end{document}